\documentclass[12pt, a4paper,reqno]{amsart}

\usepackage[T1]{fontenc}
\usepackage[utf8]{inputenc}

\usepackage{accents, amssymb}
\usepackage{xcolor}
\usepackage{calc}
\usepackage{graphicx}
\definecolor{bleu_sombre}{rgb}{0,0,0.6}  \definecolor{rouge_sombre}{rgb}{0.8,0,0}\definecolor{vert_sombre}{rgb}{0,0.6,0}
\usepackage[plainpages=false,colorl inks,linkcolor=bleu_sombre,
citecolor=rouge_sombre,urlcolor=vert_sombre,breaklinks]{hyperref}

\usepackage{subcaption}

\usepackage[english]{babel}
\usepackage{geometry}

\usepackage{amsmath,amssymb,amsthm,graphicx,amsfonts,enumerate,stmaryrd, mathrsfs}

\theoremstyle{plain}
\newtheorem{theorem}{{Theorem}}[section]
\newtheorem*{theorem*}{{Theorem}}
\newtheorem{proposition}[theorem]{Proposition}

\newtheorem*{proposition*}{Proposition}
\newtheorem{corollary}[theorem]{Corollary}
\newtheorem*{corollary*}{Corollary}
\newtheorem{lemma}[theorem]{Lemma}
\newtheorem{assumption}[theorem]{Assumption}
\newtheorem*{lemma*}{Lemma}

\theoremstyle{definition}

\newtheorem*{definition*}{Definition}

\theoremstyle{remark}
\newtheorem{remark}[theorem]{Remark}
\newtheorem{example}[theorem]{Example}

\makeatletter

\@addtoreset{equation}{section}
\makeatother

\renewcommand{\leq}{\leqslant}	\renewcommand{\geq}{\geqslant}

\newcommand{\R}{\mathbb{R}}	
\newcommand{\C}{\mathbb{C}}
\newcommand{\N}{\mathbb{N}}	

\newcommand{\dd}{\mathrm{d}}

\renewcommand{\Re}{\mathrm{Re}\,}
\renewcommand{\Im}{\mathrm{Im}\,}

\newcommand{\rond}[1]{\accentset{\circ}{#1}}
\newcommand{\Op}{\mathsf{Op}}

\title[Tunneling between magnetic wells]{Tunneling between magnetic wells\\ in two dimensions}

\author[S. Fournais]{S\o ren Fournais}
\address[S. Fournais]{Department of Mathematics, University of Copenhagen, Universitetsparken 5, DK-2100 Copenhagen \O, Denmark}
\email{fournais@math.ku.dk}

\author[Y. Guedes Bonthonneau]{Yannick Guedes Bonthonneau}
\address[Y. Guedes Bonthonneau]{Institut Galilée, Université Paris 13, Av Jean-Baptiste Clément 93430 - Villetaneuse}
\email{bonthonneau@math.univ-paris13.fr}

\author[L. Morin]{Léo Morin}
\address[L. Morin]{Department of Mathematics, University of Copenhagen, Universitets\-parken 5, DK-2100 Copenhagen \O, Denmark}
\email{lpdm@math.ku.dk }

\author[N. Raymond]{Nicolas Raymond}
\address[N. Raymond]{Univ Angers, CNRS, LAREMA, Institut Universitaire de France, SFR MATHSTIC, F-49000 Angers, France}
\email{nicolas.raymond@univ-angers.fr}

\begin{document}

	\begin{abstract}
The two-dimensional magnetic Laplacian is considered. 
We calculate the leading term of the splitting between the first two eigenvalues of the operator in the semiclassical limit under the assumption that the magnetic field does not vanish and has two symmetric magnetic wells with respect to the coordinate axes. This is the first result of quantum tunneling between purely magnetic wells under generic assumptions. The proof, which strongly relies on microlocal analysis, reveals a purely magnetic Agmon distance between the wells. Surprisingly, it is discovered that the exponential decay of the eigenfunctions away from the magnetic wells is not  crucial to derive the tunneling formula. The key is a microlocal exponential decay inside the characteristic manifold, with respect to the variable quantizing the classical center guide motion.
	\end{abstract}	
	
\maketitle

\tableofcontents

\section{Introduction and statement of the result}

\subsection{The tunneling effect}
This article considers the spectral theory of the magnetic Laplacian in two dimensions. It focuses on the description of a fundamental phenomenon arising in quantum physics, which has no classical counterpart, namely the tunneling effect. 
The standard quantity measuring this effect is the spectral gap between the first two eigenvalues in a double-well situation.

In the purely electric situation, i.e. when considering the Schrödinger operator $-h^2\Delta+V$ 
and assuming that the electric potential has two identical wells, mapped to each other by a symmetry,
the spectral gap between the first eigenvalues has been accurately calculated in the semiclassical limit $h\to 0$ in the seminal papers by Simon \cite{Simon84a, Simon84b} and by Helffer-Sjöstrand \cite{HS84, HS85b}. These authors have strongly contributed to our understanding of the eigenfunctions in the semiclassical limit. In particular, the localization properties near the minima of $V$ have been quantified thanks to the famous Agmon-Lithner estimates \cite{Agmon82} cast into the semiclassical regime, and very accurate WKB approximations have been established. These were key steps in the understanding of the exponential smallness of the eigenfunctions far from the minima -- the wells -- of $V$, that is where the interaction between the wells occurs. A paradigmatic example of a tunneling estimate in this context can be given in one dimension, say when $V$ is even and has non-degenerate minima at $\{\pm a\}$ with $a>0$. In this case, the lowest eigenvalues group in pairs that are exponentially close, each pair being separated from the others by gaps of order $h$. For instance, we have (see the original article \cite{Harrell} or the pedagogical articles \cite{Robert, BHR17}):
\[\lambda_2(h)-\lambda_1(h)=(1+o(1))ch^{\frac12}e^{-S/h}\,,\quad \mathrm{ with }\quad S=\int_{-a}^{a}\sqrt{V}\mathrm{d}x\,,\quad\mathrm{and }\quad c>0\,,\]
where the coefficient $c$ can be explicitly computed in terms of $V$. The key elements of the Helffer-Sjöstrand theory are described in the lecture notes \cite{Hel88} and in \cite{DS99}. 

\subsection{Tunneling in the presence of magnetic fields}
Shortly after their analysis of the purely electric tunneling effect, Helffer and Sjöstrand considered the case when a small magnetic field is added \cite{HS87b}. More precisely, they considered the operator $(-ih\nabla-b\mathbf{A})^2+V$ with $b$ small enough and with $V$ satisfying similar assumptions as in their previous works. This allowed them to establish similar tunneling formulas in weak magnetic fields but the general question of tunneling in magnetic fields remained open. Very recently, this situation has been reconsidered in \cite{FSW22} (in two dimensions) where the authors were able in a special setup to remove the smallness assumption on $b$ and to prove that the spectral gap does not close in the semiclassical limit. These authors consider the special case when $V$ has radial wells and the magnetic field is constant between them. Soon after followed \cite{HK22} where the exponential smallness of the gap was calculated (see also \cite{HKS24}). In \cite{Morin24}, the one-term asymptotics of the spectral gap was established by means of a concise and elementary presentation inspired by the original strategy of Helffer-Sjöstrand and by the key remark of \cite{FSW22} that in this example, the eigenfunctions coincide with Kummer special functions away from the wells.

However, the whole analysis in the latter works relied on the presence of $V$ and on a reduction of dimension due to radiality. This left open the purely magnetic case when $V=0$. 

Purely magnetic tunneling has been considered---still under an assumption of radial symmetry of the single-well problems---in the recent works \cite{FMR23,FM24,HK24}.

These works were preceded by a breakthrough understanding of magnetic tunneling in the case where the reduction of dimension was induced by a Neumann boundary condition (see \cite{BHR16,BHR22}). 
This reduction was based on the multiscale behavior of the eigenfunctions, with a stronger localization to the boundary than within the boundary itself. This gives rise to an effective Schrödinger-like operator along the boundary, which carries the tunneling effect.
This method was later succesfully applied to situations where the mechanism behind the dimensional reduction was slightly different (see \cite{FHK22,AA23}).
In the case we consider in this article, such a multiscale localization of the eigenfunctions is not true, and we cannot reduce to an effective one-dimensional Schrödinger operator.

\subsection{What do we know about magnetic wells?}\label{sub:history}

Surprisingly, the situation in two dimensions when $B$ is positive and has non-degenerate minima has been open up until rather recently. An important step in this direction was made in \cite{RVN15}, where the asymptotics of the lowest eigenvalues is related to the classical dynamics in a magnetic field, which is governed by the cyclotron and center guide motions. These Hamiltonian considerations, their quantization via Egorov theorems and semiclassical Birkhoff normal forms led to the following result.
\begin{theorem}[\cite{HK11} \& \cite{RVN15}]\label{thm.RVN}
	Assume that $B$ is smooth and positive and that it has a unique minimum $b_0$ at $0$, which is non-degenerate\footnote{In this case, we say that $B$ has a "well" at $0$.}. Consider a smooth magnetic potential $\mathbf{A}$, i.e. a function such that $\nabla\times\mathbf{A}=B$ and let $\mathscr{L}_h=(-ih\nabla-\mathbf{A})^2$.
	
Let $n\geq 1$. For $h$ small enough, the $n$-th eigenvalue $\lambda_n(h)$ exists and belongs to the discrete spectrum of $\mathscr{L}_{h}$ and
\[\lambda_n(h)=b_0h+((2nd_0+d_1)h^2+o(h^2)\,,\]
where
\[d_0=\frac{\sqrt{\det H}}{b_0}\,,\quad d_1=\frac{(\mathrm{Tr} H^{\frac12})^2}{2b_0}\,,\quad H=\frac12\mathrm{Hess} B(0)\,.\]
\end{theorem}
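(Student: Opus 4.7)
I would prove Theorem~\ref{thm.RVN} by constructing a semiclassical Birkhoff normal form for $\mathscr{L}_h$ microlocally near the bottom of the characteristic manifold $\Sigma=\{\xi=\mathbf{A}(q)\}$, along the lines of~\cite{HK11,RVN15}. The picture is classical: the Hamiltonian flow of $p(q,\xi)=|\xi-\mathbf{A}(q)|^2$ exhibits two widely separated time scales, the fast $2\pi/(2B)$-periodic cyclotron rotation and the slow drift of the guiding center governed by variations of $B$. This hierarchy translates, after quantization, into the hierarchy of terms $b_0 h, (2nd_0+d_1)h^2, \ldots$ in the semiclassical expansion of $\lambda_n(h)$.

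\textbf{Key steps.} First, one establishes by a sharp G\r{a}rding estimate together with a magnetic Agmon estimate (using the non-degeneracy of $B$ at $0$) that any eigenfunction corresponding to an eigenvalue $\mathcal{O}(h)$ is microlocalized near $(0,\mathbf{A}(0))$ in phase space. Second, after a local gauge transformation and a symplectic change of coordinates $(q,\xi)\mapsto(x,\eta;y,\zeta)$ adapted to $\Sigma$, the symbol reads
\begin{equation*}
p = b_0(x^2+\eta^2) + Q(y,\zeta) + r_3,
\end{equation*}
with $(x,\eta)$ cyclotron-transverse, $(y,\zeta)$ tangent to $\Sigma$, $Q$ a positive quadratic form whose symplectic eigenvalue is $d_0$, and $r_3$ vanishing to order $3$. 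Third, one iteratively constructs unitary semiclassical Fourier integral operators $U_N$ such that microlocally
\begin{equation*}
U_N^*\mathscr{L}_h U_N = F_N(\mathcal{I}_h;y,\zeta;h) + R_N,
\end{equation*}
where $\mathcal{I}_h=\tfrac{1}{2}(x^2+h^2D_x^2)$ is the cyclotron harmonic oscillator, $F_N$ is polynomial in $\mathcal{I}_h$ with slow-variable symbol coefficients, and $R_N$ is a controlled remainder. The cohomological equations $\{p_2^{\mathrm{cycl}},g\}=r-\langle r\rangle$ solved at each iteration are well-posed thanks to the periodicity of the cyclotron flow. Finally, restriction to the ground Landau level $\mathcal{I}_h=h/2$ yields an effective operator on $(y,\zeta)$ with principal symbol $b_0 h + hQ(y,\zeta) + \mathcal{O}(h^2)$; its semiclassical harmonic-oscillator spectrum, together with the subprincipal corrections produced by the normal form, gives $\lambda_n(h)=b_0 h+(2nd_0+d_1)h^2+o(h^2)$. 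Higher Landau levels contribute eigenvalues $\geq 3b_0h+\mathcal{O}(h^2)$ and are irrelevant for the $n$-th eigenvalue at fixed $n$. The existence of $\lambda_n(h)$ in the discrete spectrum follows from the min-max principle combined with the $n$ quasimodes provided by the normal form.

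\textbf{Main obstacle.} The technical core is the Birkhoff normal form: one must accommodate the codimension-$2$ degeneracy of $\Sigma$ and the non-ellipticity of $p$ in the choice of symbol classes, and propagate uniform control of the remainders $R_N$ through the iteration. An alternative, used in~\cite{HK11}, replaces the full normal form by a Grushin/Feshbach reduction onto the first Landau level, which avoids the iteration at the cost of a careful parametrix construction for the spectral projector. In either approach, the translation of the algebraic normal form into the specific constants $d_0=\sqrt{\det H}/b_0$ and $d_1=(\mathrm{Tr}\,H^{1/2})^2/(2b_0)$ requires a somewhat delicate bookkeeping of the symplectic rescalings relating the original coordinates $(q,\xi)$ to the slow–fast coordinates $(x,\eta;y,\zeta)$.
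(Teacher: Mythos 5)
The paper does not actually prove Theorem~\ref{thm.RVN}: it is quoted as known from \cite{HK11} and \cite{RVN15}, so there is no in-paper argument to compare your attempt against. Your outline is a reasonable summary of the strategy of those references — microlocalization, symplectic coordinates adapted to the characteristic manifold, a semiclassical Birkhoff normal form (or, as you note, the Grushin alternative of \cite{HK11}), restriction to the ground Landau level, and semiclassical analysis of the resulting effective operator near the non-degenerate minimum of $B$.

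One formula in your sketch, however, is not correct as written. The decomposition $p=b_0(x^2+\eta^2)+Q(y,\zeta)+r_3$ cannot be obtained by a symplectic change of variables and Taylor expansion at the bottom of the characteristic manifold: since $p=|\xi-\mathbf{A}(q)|^2$ vanishes identically on that manifold, every tangential Taylor coefficient of $p$ there is zero, so the genuine second-order jet is just $b_0(x^2+\eta^2)$ with no additive $Q(y,\zeta)$. The quadratic form in the slow variables, which is proportional to $\mathrm{Hess}\,B(0)$, enters only multiplied by the transverse action $\sim x^2+\eta^2$, through the degree-three and degree-four terms of $p$, and becomes an effective additive potential of size $h\,Q(y,\zeta)$ only after the Birkhoff averaging and the restriction to the first Landau level. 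This is precisely why the eigenvalue spacing is $\mathcal{O}(h^2)$ rather than $\mathcal{O}(h)$ and why the normal form (or a careful Grushin reduction) is indispensable rather than merely convenient. Your later steps show you know this — the effective symbol there correctly carries the factor $h$ — but the displayed formula would produce the wrong scaling if taken at face value.
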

In \cite{RVN15}, it is established that the eigenfunctions are microlocalized in phase space near $(x,\xi)=(0,0)$ at the scale $h^{\frac12}$ (say if $\mathbf{A}(0)=0$). In particular, their space localization near the minimum of $B$ is isotropic contrary the multiscale cases evoked in the previous section. Note also that these localization estimates are not exponential estimates \emph{à la} Agmon: they only give that the eigenfunctions are $\mathcal{O}(h^\infty)$ away from $D(0,h^{\frac12-\eta})$, with $\eta>0$. By working a little more, we can prove a decay of order $\mathscr{O}(e^{-C/\sqrt{h}})$, see \cite[Section 1.4]{GBRVN21}. This decay can be used to get a rough estimate of tunneling when $B$ has a symmetric double well (see the discussion in \cite[Prop. 3.1 \& 3.2]{FMR23}):
\begin{equation}\label{eq.apriorigap}
\lambda_2(\mathscr{L})-\lambda_1(\mathscr{L})=\mathscr{O}(h^\infty)\,.
\end{equation}
Such rough estimates were already known in \cite[Remark 3.4]{HM96}, where tunneling between pure magnetic wells is mentioned as an open problem.

Under analyticity assumptions on $B$, it has been possible to prove a quasi-optimal exponential decay of the eigenfunctions away from the minimum of $B$ in \cite{GBRVN21} by means of the Fourier-Bros-Iagolnitzer transform, but without exhibiting an explicit and natural Agmon distance.  Let us describe this estimate. Let $M>0$. There exist $h_0,C>0$ such that, for all $h\in(0,h_0)$, any eigenfunction $\psi$ associated with $\lambda\leq b_0h+Mh^2$ satisfies
\[\int_{\R^2}e^{2d(x)/h}|\psi(x)|^2\mathrm{d}x\leq C\|\psi\|^2\,,\]
where $d$ is a smooth function quadratic near $0$. Obtaining such an estimate was motivated by the question of the optimality of WKB constructions done in \cite{BR20, GBNRVN21} in the case of a non-degenerate minimum of $B$. 

The case when $B$ is radial is considered in \cite{GBNRVN21}. In this case, we can show that the groundstate is radial and coincides with the groundstate of an electric Schrödinger operator. Thanks to the Helffer-Sjöstrand theory, this shows that the groundstate is approximated by the WKB Ansatz. This key property, reducing the analysis to the electric situation, and the observation in \cite{FSW22} about the Kummer functions motivated the article \cite{FMR23}, which establishes the first tunneling formula between pure radial magnetic wells. 

\subsection{Main result: a generic magnetic tunneling formula}
The present article aims at going beyond the radiality assumption. In this case we do not dispose of sufficiently good approximations of the eigenfunctions since we can neither use explicit special functions nor suitable WKB approximations inherited from the electric situation.

We consider the $2$-dimensional magnetic Laplacian
\[\mathscr{L}=(-ih\nabla-\mathbf{A})^2\,,\]
on $L^2({\mathbb R}^2)$ and denote the coordinates in ${\mathbb R}^2$ by $q=(q_1,q_2)$.
Due to the gauge invariance, we may assume that $\mathbf{A}$ has the form
\begin{equation}\label{the.gauge}
\mathbf{A}=(0,A_2)\,,\quad A_2(q)=\int_0^{q_1}B(s,q_2)\mathrm{d}s\,,
\end{equation}
where $B(q_1,q_2)$ denotes the magnetic field.
We work with a magnetic field that has precisely two non-degenerate minima. We assume symmetry of the magnetic field as well as partial analyticity in one of the coordinates (Assumptions~\ref{hyp.general} and~\ref{hyp.reg}). Furthermore, there is an important assumption on the complex level curve connecting the two minima (Assumption~\ref{hyp.complexconnection}). Finally, we also impose the technical Assumption~\ref{hyp.odd}.

More precisely the statements are as follows.

\medskip

\begin{minipage}{0.3\textwidth}
	\includegraphics[height=9cm]{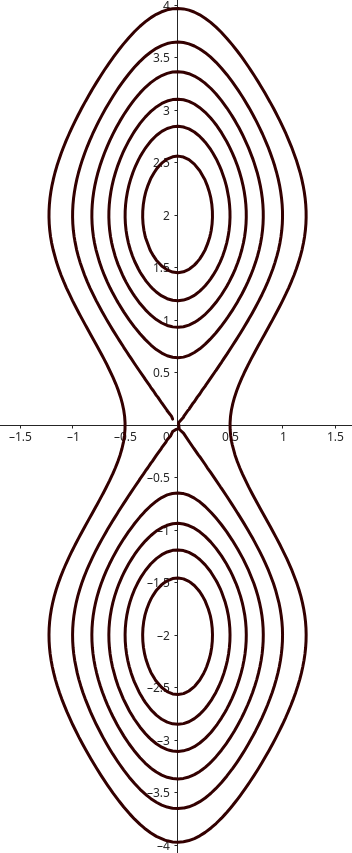}
\end{minipage}
\begin{minipage}{0.6\textwidth}
\begin{assumption}\label{hyp.general}
We consider a smooth magnetic field $B=B(q_1,q_2)$ with exactly two minima, which are positive and not attained at infinity.
We assume that $B$ is symmetric under reflection in the two coordinate axis,
\[B(-q_1,q_2)= B(q_1,-q_2)=B(q)\,, \quad \forall q \in \R^2.\]
We also assume that the two minima are $C_u=(0,c_u)$ and $C_d = (0,c_d)$ with $c_u>0$ and $c_d = -c_u$, that $B(C_u) := b_0 >0$, and that $C_u$ is a non-degenerate minimum, that is
\[\mathrm{Hess}_{C_u} B>0\,.\]
\end{assumption}
Note that the symmetry assumptions imply that $\mathrm{Hess}_{C_u} B$ is diagonal. Our choice of gauge is compatible with the symmetry of $B$ in the sense that
\[U\mathscr{L}=\mathscr{L} U\,,\]
where $U f(q)=f(-q)$.

\end{minipage}

\medskip

Let us now describe our other general assumptions.
\begin{assumption}[Partial analyticity and mild variations]\label{hyp.reg}
There exist $r,\eta_0>0$ such that the following holds. 
\begin{enumerate}[---]
\item The functions $q_1\mapsto B(q_1,q_2)$ and $q_1\mapsto \partial_2 A_2(q_1,q_2)$ are analytic in the strip $U_r:=\{q_1\in\C : |\Im q_1|< r\}$, uniformly in $q_2$, and their complex extensions belong to the symbol class \[S(U_r \times \R)=\{f\in\mathcal{C}^\infty(U_r\times\R) : \forall\alpha\in\N^2\,, \partial^\alpha f\in L^\infty(U_r\times\R)\}\,.\]
\item The complex extension of $B$ has mild variations in the sense that, for some $\varepsilon\in(0,1)$, we have, for all $q\in U_r\times\R$,
\begin{equation}\label{hyp.bound2}
	|B(q_1,q_2)-b_0|\leq \varepsilon b_0\,,\quad  | \partial_2 B(q_1,q_2) | \leq \frac{(1-\varepsilon)b_0}{2r}\,.
\end{equation}
\end{enumerate}
\end{assumption}

\begin{remark} Assumption \ref{hyp.reg} implies, for all $q\in U_r\times\R$,
\begin{equation}
\Re B(q_1,q_2)\geq (1-\varepsilon) b_0\,.
\end{equation}
\end{remark}

Then, we assume that there exists a unique complex path connecting the minima $C_u$ and $C_d$, which are disconnected in the real plane.

\begin{assumption}[Complex path connecting the minima]\label{hyp.complexconnection}
	There exists a smooth function $\gamma : \R\to\R$ such that $\|\gamma\|_\infty < r$ and
	\[B(i \gamma(q_2),q_2)=b_0\,,\quad \gamma(c_u)=\gamma(c_d)=0\,,\quad  \gamma'(c_d)>0\,,\quad \gamma'(c_u)<0\,.\]
	Moreover, $B(q_1,q_2)-b_0$ vanishes only on the curves $q_1 = \pm i \gamma(q_2)$, and in a non-degenerate way:
	\[ \partial_1 B(i\gamma(q_2),q_2) \neq 0, \quad \forall q_2 \in \R \setminus \lbrace c_d,c_u \rbrace. \]
\end{assumption}

This assumption turns out to be crucial since $\gamma$ appears in our tunneling formula. It is a natural assumption in the following sense. Since ${\rm Hess}_{C_u} B$ is diagonal and non-degenerate, there must always be exactly two complex curves along which $B-b_0$ vanishes on a small neighborhood of the wells. Due to the $q_1$-symmetry we also have $B(q_1,q_2) = b_0 \Leftrightarrow B(-\overline{q_1},q_2)=b_0$, and therefore such a curve must be purely imaginary. We simply assume that these curves defined on a neighborhood of $c_u$ and $c_d$ match each other following a global curve $q_1 = \pm i \gamma(q_2)$.

\begin{remark}
 Note that the function $\gamma$ is even due to the horizontal symmetry $B(q_1,q_2) = B(q_1,-q_2)$. Moreover, Assumption \ref{hyp.complexconnection} implies 
 \begin{equation}\label{eq.idB<0}
 i \partial_1 B (i \gamma(q_2),q_2) \gamma (q_2) < 0, \quad q_2 \neq c_u, c_d\,.
 \end{equation}
Indeed, since ${\rm Hess} _{C_u} B$ is diagonal and non-degenerate, one easily checks that \eqref{eq.idB<0} holds on a neighborhood of $c_u$. Since $i \partial_1 B$ does not vanish along the curve $\gamma$, it must stay negative for all $q_2$.
\end{remark}

Finally, we will also work under the following simplifying assumption.
\begin{assumption}\label{hyp.odd}
For all $(q_1,q_2)\in U_r\times\R$, we have
\[\Re q_1\geq 0\Longrightarrow\Re\frac{\partial_{1}B}{B}(q_1,q_2)\geq 0\,.\]	
\end{assumption}

For real $q_1$, Assumption \ref{hyp.odd} tells us that $B$ increases away from the vertical axis. In fact, this assumption could probably be removed using additional microlocal estimates in the proof of Lemma \ref{lem.phiutilde}. However, since it significantly simplifies the proof and does not hide any other interesting features on the tunneling effect, we decide to work under Assumption \ref{hyp.odd}.

\medskip
Under the above assumptions we are able to calculate the tunneling.

\begin{theorem}\label{thm.main}
Under Assumptions \ref{hyp.general}, \ref{hyp.reg}, \ref{hyp.complexconnection} and \ref{hyp.odd}, there exists $c_0>0$ such that, in the semiclassical limit $h \to 0$,
\begin{equation}\label{eq.tunneling}
\lambda_2(\mathscr{L})-\lambda_1(\mathscr{L})=c_0 h^{\frac32} e^{-S/h} (1+o(1))\,,
\end{equation}
where 
\begin{equation}\label{eq:S}
S=\int_{c_d}^{c_u}\int_{0}^{\gamma(q_2)} B(it,q_2)\,\mathrm{d}t\,\mathrm{d}q_2>0\,.
\end{equation}
\end{theorem}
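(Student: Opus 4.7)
The plan is to follow the Helffer--Sjöstrand double-well paradigm adapted to the purely magnetic setting. Using the symmetry $Uf(q)=f(-q)$, which commutes with $\mathscr{L}$, I would decompose the spectral problem into even and odd sectors, so that $\lambda_1$ and $\lambda_2$ are the first eigenvalues of $\mathscr{L}$ restricted to the symmetric and antisymmetric subspaces for the reflection $q_2\mapsto -q_2$. Introduce an auxiliary ``single-well'' operator $\mathscr{L}^u$ obtained by cutting $\mathscr{L}$ at $\{q_2=0\}$ with a suitable boundary condition, together with its ground state $\psi_u$ localized near $C_u$ and its partner $\psi_d$ obtained via the symmetry. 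The standard Helffer--Sjöstrand identity then expresses the gap as $\lambda_2(\mathscr{L})-\lambda_1(\mathscr{L})=2w+\mathcal{O}(e^{-(S+\delta)/h})$, where $w$ is an interaction term given by a current-like integral over the symmetry line $\{q_2=0\}$ evaluated on $\psi_u$ and $\psi_d$.

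\textbf{From physical to microlocal exponential decay.} The essential difficulty, and the novelty of the result, is that the eigenfunctions only decay like $\mathcal{O}(e^{-c/\sqrt h})$ in physical space, so the interaction cannot be controlled by standard Agmon estimates. I would instead conjugate $\mathscr{L}$ by a Fourier--Bros--Iagolnitzer transform in the analytic variable $q_1$, which is licit by virtue of Assumption~\ref{hyp.reg}. This turns $\mathscr{L}$ into an $h$-pseudodifferential operator acting on a weighted $L^2$-space of (almost) holomorphic functions, whose principal symbol vanishes on a characteristic manifold parametrized, near each well, by the center-guide coordinate. On this characteristic manifold one may perform a microlocal Birkhoff normal form that separates the cyclotron degree of freedom (a harmonic oscillator of frequency $B$) from an effective one-dimensional pseudodifferential operator quantizing the center-guide motion. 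The key Agmon-type exponential estimate is then proved for this effective 1D operator in the center-guide variable, rather than in $q$.

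\textbf{Effective WKB and identification of the action.} The effective symbol obtained after Birkhoff reduction has non-degenerate wells at $c_d$ and $c_u$, and its complex zero set is governed by $B(q_1,q_2)=b_0$; under Assumption~\ref{hyp.complexconnection}, the curve $q_1=\pm i\gamma(q_2)$ is the unique complex classical trajectory joining the two wells. A standard WKB construction for the effective single-well eigenfunction yields an approximation with phase whose imaginary part, when pulled back along this complex trajectory, evaluates to
\[
S=\int_{c_d}^{c_u}\int_{0}^{\gamma(q_2)}B(it,q_2)\,\mathrm{d}t\,\mathrm{d}q_2,
\]
which is precisely the magnetic Agmon distance between $C_d$ and $C_u$. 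Assumptions~\ref{hyp.reg} and~\ref{hyp.odd} guarantee that the Hamilton--Jacobi flow underlying this WKB Ansatz has no caustics in the strip $U_r$ and that the effective potential is monotone away from the axis, so that the construction extends globally up to the symmetry line $\{q_2=0\}$.

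\textbf{Evaluation of the interaction and main obstacle.} With the WKB approximation of $\psi_u$ available in a neighborhood of $\{q_2=0\}$, the boundary integral defining $w$ reduces, after undoing the FBI transform, to a Laplace/stationary-phase integral in the transverse cyclotron direction and along the center-guide direction: the Gaussian cyclotron ground state contributes $h^{1/2}$, the stationary phase in the center-guide variable contributes another $h^{1/2}$, and a final $h^{1/2}$ arises from the semiclassical normalization of the effective WKB state, yielding the prefactor $c_0 h^{3/2}$ together with the exponential $e^{-S/h}$ and an explicit positive constant $c_0$. I expect the main obstacle to be not the formal identification of $S$, but the rigorous propagation of the microlocal WKB approximation from a small neighborhood of $C_u$ all the way to the symmetry line: this requires gluing the Birkhoff normal form with the effective Agmon estimate, controlling exponentially small errors uniformly along the full complex trajectory $\gamma$, and showing that remainder terms do not contaminate the leading contribution to $w$.
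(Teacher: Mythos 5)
Your high-level outline matches the paper in several places: reduce the gap to an interaction term via a Helffer--Sj\"ostrand-type basis built from one-well ground states; recognize that physical-space Agmon decay is only $\mathcal{O}(e^{-c/\sqrt h})$ and that the decisive decay must be obtained in the center-guide phase-space variable; produce an effective WKB approximation whose phase recovers the action $S$; and evaluate the interaction by stationary phase. However, the concrete mechanism you propose is not the paper's, and it also has a genuine gap.

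First, the mechanism. The paper does \emph{not} use the FBI transform nor a Birkhoff normal form. Instead it applies the \emph{exact} Darboux change of variables $\iota(q)=(A_2(q),q_2)$ followed by a metaplectic transformation (Egorov for a linear symplectic map), which is a global, exact conjugation turning $\mathscr{L}$ into $h\mathscr{M}=h\,\Op_h^{w,2}\Op_1^{w,1}m$ (Proposition~\ref{prop.Mh}). The single-well operator is obtained by penalization, $\mathscr{M}_u=\mathscr{M}+\Sigma(x_2)$, not by cutting $\{q_2=0\}$ with a boundary condition. The microlocal Agmon estimate (Proposition~\ref{prop.functionalin}, Theorem~\ref{thm.agmon}) is then derived from a \emph{global} parametrix of the Grushin-type augmented matrix $\mathscr{P}$ built with the cyclotron Gaussian and its bi-orthogonal partner, together with a Fefferman--Phong inequality in the exotic class $S(\R)+h^{1/2}S_{1/2}(\R)$ (Lemma~\ref{lem.FP}). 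The WKB-like quasimode is also constructed directly from this parametrix (Section~\ref{sec.6}), not from a normal-form reduction. Note also that the relevant symmetry is the antipodal $Uf(q)=f(-q)$; in the gauge $\mathbf A=(0,A_2)$ the $q_2$-reflection alone does \emph{not} commute with $\mathscr{L}$, so your proposed even/odd decomposition under $q_2\mapsto -q_2$ would need the antilinear twist or a different gauge.

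Second, the gap — and you half-diagnose it yourself. A Birkhoff normal form for the cyclotron/center-guide splitting is a \emph{local} microlocal reduction near the wells (this is what \cite{RVN15} does), so the resulting ``effective 1D operator'' is only controlled in a neighborhood of $c_u$ (resp.\ $c_d$). But the interaction $w$ is computed on the symmetry line, far from the wells, where the normal-form conjugating FIO and its error terms are not under control. ``Gluing the Birkhoff normal form with the effective Agmon estimate along the full complex trajectory'' is not a matter of bookkeeping; without a global elliptic reduction it is precisely the step that fails, and it is exactly what the paper's Grushin parametrix (which is global in $X_2$ thanks to the explicit inverse \eqref{eq.P0-1} of the principal symbol) is designed to replace. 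Concretely: you have no substitute for Proposition~\ref{prop.functionalin}, nor for the refined weight $\hat\varphi_u\in S(\R)+Rh^{1/2}S_{1/2}(\R)$ of Lemma~\ref{lem.phiuhat} that makes the WKB approximation of the ground state accurate with sub-exponential loss (Lemma~\ref{prop.approx}). Absent these, the remainder in your $w$ is not demonstrably smaller than the leading $h^{1/2}e^{-S/h}$ term, and the proof does not close.
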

\begin{remark} Let us make some comments on Theorem \ref{thm.main}.
\begin{enumerate}[\rm (i)]
\item Tunneling between purely magnetic wells remained an important open problem left open after the works of Helffer-Sjöstrand in the 1980's. It is for example mentioned as an open problem in \cite[Remark 3.4]{HM96}. Theorem~\ref{thm.main} actually disproves their conjecture that the order of magnetic tunneling should be $\mathscr{O}(e^{-C/\sqrt{h}})$.
\item We deduce that, when $h$ is small enough, the lowest eigenvalue is simple.
\item We have an explicit but complicated formula for $c_0$, see \eqref{eq.c0b}.
\item The action $S$ is a flux of $B$ through a surface in the complex plane.
In the formula \eqref{eq:S} one can also interpret $S$ as a purely magnetic Agmon distance. This quantity has never appeared in the literature before, to the best of our knowledge.
\item The tunneling formula \eqref{eq.tunneling} is the first general formula measuring optimally the exponentially small gap between the eigenvalues of a Laplacian with pure magnetic wells. Thereby, it is the only general magnetic analogue  of the electric tunneling results \emph{à la} Helffer-Sjöstrand. We explain in Section \ref{sec.orga} below how strongly the proof of Theorem \ref{thm.main} deviates from the original strategy in \cite{HS84, HS85b}. In particular, our strategy allows to establish optimal tunneling results for a class of pseudodifferential operators in two dimensions.
\end{enumerate}
\end{remark}

\begin{example}\label{ex.example}
	As an example, we consider the following familly of magnetic fields which satisfy our assumptions. Let $b_0>0$ and define, for $\varepsilon_1,\varepsilon_2\in(0,1]$,
	\[B(q)=b_0(1+\varepsilon_1g(q_1)+\varepsilon_2\langle q_1\rangle^{-2}W(q_2))\,,\] 
	with
	\[g(q_1):=1-\langle q_1\rangle^{-2}\,,\]
	where $0\leq W \leq \frac12$ is a smooth, even function with a double minimum at $c_u>0$ and $c_d =-c_u$, which is non-degenerate, and with minimal value $0$. In this case, $A_2$ and $B$ satisfy Assumptions~\ref{hyp.general} and \ref{hyp.reg}, if $\varepsilon_2$ is small enough.
%
	
	Assume that $\varepsilon_1^{-1}\varepsilon_2\leq 1$. 
	We want to solve $B(q)=b_0$, that is 
$q_1^2=-\varepsilon_3W(q_2)$ with $\varepsilon_3=\varepsilon_1^{-1}\varepsilon_2$.
	We find the curves
	\[q_1=\pm i\sqrt{\varepsilon_3W(q_2)}\,.\]
	In particular, we have
	\[|\Im q_1|=|q_1|\leq \sqrt{\varepsilon_3}\sqrt{W(q_2)}\,.\]
	Then, the smooth function
	\[\gamma(q_2)=\begin{cases}
		\sqrt{\varepsilon_3W(q_2)} &\text{ when } q_2\in(c_d,c_u)\,,\\
		-\sqrt{\varepsilon_3W(q_2)} & \text{ else },
	\end{cases}\]
solves the equation $B(i\gamma(q_2),q_2)=b_0\,.$ Moreover, for $q_2>c_u$,	$\gamma'(q_2)=-\frac{\varepsilon_3W'(q_2)}{2\sqrt{\varepsilon_3W(q_2)}}$ and thus, when $q_2\to c_u$ with $q_2>c_u$,
	\[\gamma'(c_u)= -\sqrt{\varepsilon_3}\sqrt{\frac{W''(c_u)}{2}}<0\,.\]

We finally check that
Assumption~\ref{hyp.odd} is satisfied if $\varepsilon_2\leq\varepsilon_1$. Indeed, we have
\[\frac{\partial_{1}B}{B}(q_1,q_2)=\frac{2 a(q_2) q_1}{(1+q_1^2)(1+q_1^2-a(q_2))}\,,\quad\mbox{ with }\quad a(q_2)=\frac{\varepsilon_1-\varepsilon_2W(q_2)}{1+\varepsilon_1}\,.\]
Note that $a(\R)\subset[0,\varepsilon_1]$. Then, take $(q_1,q_2)\in U_r\times\R$ such that $\Re q_1\geq 0$. The sign of $\Re\frac{\partial_{1}B}{B}(q_1,q_2)$ is the same as that of
\[\begin{split}&\Re\left(q_1(1+\overline{q}_1^2)(1+\overline{q}_1^2-a)\right)\\
	&=x\Re\left((1+\overline{q}_1^2)(1+\overline{q}_1^2-a)\right)-y\Im\left((1+\overline{q}_1^2)(1+\overline{q}_1^2-a)\right)\,,\end{split}\]
with $q_1=x+iy$. We have
\[\begin{split}
	\Re\left((1+\overline{q}_1^2)(1+\overline{q}_1^2-a)\right)&=(1+x^2-y^2)(1-a+x^2-y^2)-4x^2y^2\,,\\
		\Im\left((1+\overline{q}_1^2)(1+\overline{q}_1^2-a)\right)&=-2ixy(2-a+2x^2-2y^2)\,.
	\end{split}\]
	We deduce that, if $r>0$ is small enough, $\Re\left(q_1(1+\overline{q}_1^2)(1+\overline{q}_1^2-a)\right)\geq 0$ and thus $\Re\frac{\partial_{1}B}{B}(q_1,q_2)\geq 0$.
\end{example}

\subsection{Organization, heuristics and strategy}\label{sec.orga}
Let us now describe our strategy. As observed in \cite{RVN15, MRVN23}, a key to describing the spectrum of the magnetic Laplacian is to use suitable coordinates in which the magnetic field is constant. Indeed, the structure of the low-lying spectrum  is related to the characteristic manifold $M:=\{(q,p) : p-A(q)=0\}$, which also determines the Hamiltonian dynamics at low energy. The manifold $M$ is symplectic since $\mathrm{d}p\wedge\mathrm{d}q_{|M}=B(q)\mathrm{d}q_1\wedge \mathrm{d}q_2\neq 0$. This suggests to use the diffeomorphism given by $ \iota(q)= (A_2(q_1,q_2),\, q_2)=x$ so that $\iota^{-1}_*(\mathrm{d}p\wedge\mathrm{d}q_{|M})=\mathrm{d}x_1\wedge \mathrm{d}x_2$. The analyticity properties of $\iota$ are analyzed in Section \ref{sec.iota}. This change of variable conjugates $\mathscr{L}$ to a differential operator whose principal symbol is
\[\mathcal{B}(x_1, x_2)^2 \xi_1^2 + \left( \xi_2-x_1 - \alpha(x_1,x_2 ) \xi_1 \right)^2\,,\]
with $\mathcal B = B \circ \iota^{-1}$.
This suggests to use the symplectic change of variable $y_1=x_1-\xi_2$, $\eta_1=\xi_1$, $y_2=x_2-\xi_1$, $\eta_2=\xi_2$, which can be quantized by the metaplectic Egorov theorem (see \cite[Theorem 18.5.9]{Hormander85}) so that $\mathscr{L}$ is unitarily conjugated to a pseudodifferential operator whose principal symbol is
\[\mathcal{B}(\xi_2+x_1, x_2+\xi_1)^2 \xi_1^2 + \left( x_1 + \alpha(\xi_2+x_1,x_2+\xi_1 ) \xi_1 \right)^2\,.\]
Since the first eigenvalues of the magnetic Laplacian are of order $\mathcal{O}(h)$, the eigenfunctions are microlocalized in $(x_1,\xi_1)$ near $(0,0)$ at the scale $h^{\frac12}$. Thus, we make the rescaling $(x_1,\xi_1)\mapsto (h^{\frac12}x_1,h^{\frac12}\xi_1)$ and we get the operator $h\mathscr{M}$ in Proposition~\ref{prop.Mh} (which was already established in \cite{MRVN23}). After a change of quantization in $x_1$, the principal part of the operator is (with the standard notation for the Weyl quantization):
\[
h\mathscr{M} \approx
h\Op^{w,2}_{h}\Op^{w,1}_{1}\left(\mathcal{B}(\xi_2, x_2)^2 \xi_1^2 + \left( x_1 + \alpha(\xi_2,x_2 ) \xi_1 \right)^2\right)\,,\]
which is quadratic in $(x_1,\xi_1)$.  In the variables $(x_1,\xi_1)$, we have a harmonic oscillator corresponding to the well-known cyclotron motion of classical dynamics. The variables $(x_2,\xi_2)$ represent the center guide motion, which carries the tunneling phenomenon.

The rest of the article is devoted to the study of the pseudodifferential operator $\mathscr{M}$. The fact that, from the very beginning of the analysis, one casts the problem into a pseudodifferential framework is not common at all in the literature about the magnetic Laplacian. This idea of combining a change of coordinates and a metaplectic transform might a priori seem the source of technical troubles. These canonical coordinates, which are adapted to the magnetic field, will in fact reveal an \emph{optimal} exponential decay of the eigenfunctions that is responsible for the tunneling effect and that it would be hard to describe in the original space coordinates.

In order to understand the interaction between the magnetic wells, one needs to establish a very accurate description of the eigenfunctions when there is only one well. To achieve this, we consider the one well operator $\mathscr{M}_u=\mathscr{M}+\Sigma(x_2)$ by sealing the \textit{down} well by simply adding an electric potential, see \eqref{eq.Mu}. Since the tunneling effect is an exponentially small effect, we must describe as accurately as possible the decay of the eigenfunctions away from the well and prove exponentially good approximations. For that purpose, we consider the conjugated operator $\mathscr{M}_u^\Phi:=e^{\Phi(x_2)/h}\mathscr{M}_ue^{-\Phi(x_2)/h}$, which is still a pseudodifferential operator with operator-valued symbol, whose principal symbol is
\[	M_0(x_2,\xi_2)=\mathcal B(\xi_2+i\Phi'(x_2),x_2)^2D^2_{x_1}+(x_1+ \alpha(\xi_2+i\Phi'(x_2),x_2)D_{x_1})^2 + \Sigma(x_2)\,,\]
see Proposition \ref{prop.symbol} and Lemma \ref{prop.symbolh12}. Note that the eigenvalues of the non-self-adjoint operator $M_0(x_2,\xi_2)$ are the Landau levels (see Section \ref{sec.harmonic})
\[(2n-1)\mathcal{B}(\xi_2+i\Phi'(x_2),x_2)+\Sigma(x_2)\,,\quad n\geq 1\,.\]
The behavior of the resolvent near $b_0$ is studied in Lemma \ref{lem.Riesz}. We emphasize that the conjugation by $e^{-\Phi/h}$ amounts to considering a \emph{complex} magnetic field $\mathcal B(\xi_2+i\Phi'(x_2),x_2)$. Whereas complex electric potentials are rather common in the literature (as in \cite{MRVN23}), the case with complex magnetic fields seems to be widely open (see \cite{KNR24}). The present article gives a new motivation to study this situation.

The reader might be surprised that only weights in $x_2$ (and not also in $x_1$) are considered. In the electric situation, the usual Agmon estimates are established in all (space) variables. As we will see, we will not need a decay in all the variables, but only in $x_2$ which is a phase space variable for the original operator $\mathscr{L}$.

Section \ref{sec.4} is devoted to the study of the ellipticity of the non-self-adjoint operator $\mathscr{M}_u^\Phi-z$, with $z$ close to $b_0$. The most important result is Proposition \ref{prop.functionalin}, which is the corner stone of the proof of Theorem \ref{thm.main}. This elliptic estimate relies on Assumption \ref{hyp.subsolution} on the weight $\Phi$ and it tells that $\Re \mathcal B(\xi_2+i\Phi'(x_2),x_2)+\Sigma(x_2)-b_0$ behaves quadratically near $x_2=c_u$, uniformly in $\xi_2$. The proof of Proposition \ref{prop.functionalin} relies on the parametrix construction for $\mathscr{M}^\Phi_h-z$ given in Section \ref{sec.para}. More precisely, we construct left and right approximated inverses for the augmented matrix \[\mathscr{P}:=\begin{pmatrix}
	\mathscr{M}_u^\Phi-z&	\mathscr{F}\\
	\mathscr{G}	&0
\end{pmatrix},\]
where $\mathscr{F}$ and $\mathscr{G}$ are suitable pseudodifferential operators such that\footnote{This fact is overlooked in \cite{AA23} devoted to magnetic fields vanishing on curves, what leads to a mistake, which is probably inspired by \cite{BHR22} where the principal symbol of $\mathscr{P}$ is selfadjoint.}  $\mathscr{F}^*\neq\mathscr{G}$. The principal symbol of $\mathscr{P}$ is bijective and is analyzed in Proposition~\ref{prop.P0z}. The parametrix estimates are given in Proposition \ref{prop.parametrix}. The right inverse will be used in Section \ref{sec.6} when constructing WKB-like Ansätze. The left inverse is the key to the proof of Proposition~\ref{prop.functionalin}: it allows to relate the ellipticity property of $\mathscr{M}^\Phi_h$ to that of $\Re \mathcal B(\xi_2+i\Phi'(x_2),x_2)+\Sigma(x_2)-b_0$, see Corollary \ref{cor.coercivity}. Let us underline that we require that $\Phi'\in S(\R)+h^{\frac12}S_{\frac12}(\R)$. The (rather bad) class $h^{\frac12}S_{\frac12}(\R)$ is crucial to prove exponentially good approximations of the eigenfunctions of $\mathscr{M}_u$ away from the bottom of the well\footnote{This class is introduced in \cite{DR24} and fills a gap in \cite{BHR22, AA23} where non smooth weights are implicitly used.}. Fortunately, in this class, there is a Fefferman-Phong inequality that provides us with the desired coercivity, see Section \ref{sec.proof}.

In Section \ref{sec.5}, we deduce from Proposition \ref{prop.functionalin} (and of Corollary \ref{cor.functionalin}) an optimal exponential decay of the groundstate of $\mathscr{M}_u$. The best possible weight $\Phi$ that we would like to consider should saturate the inequality of Assumption \ref{hyp.subsolution}. That is why we consider $\varphi_u$ a solution to the eikonal equation $\mathcal B\left(i\varphi_u'(x_2),x_2\right) + \Sigma(x_2) =b_0$, see Lemma \ref{lem.Brond=b0}. In fact, we will rather work with a bounded version of $\varphi_u$, which we denote by $\tilde\varphi_u$, and we will show that $(1-\eta)\tilde\varphi_u$ satisfies Assumption \ref{hyp.subsolution} for any small $\eta >0$, see Section \ref{sec.weights}. This is where Assumption \ref{hyp.odd} comes into play and simplifies the analysis\footnote{Note that this coercivity analysis strongly relies on the variations of the magnetic field itself and not on the properties of some model operator -- the Montgomery or de Gennes operators -- as in \cite{BHR22, AA23}. This part of the analysis is unfortunately missing in \cite{AA23}, see Remark \ref{rem.AA}.}, see \eqref{eq.monotone}.

This weight will be enough to establish the exponential decay estimates on the eigenfunctions in Theorem~\ref{thm.agmon}. A corollary is that
\[\lambda_2(\mathscr{L})-\lambda_1(\mathscr{L})=\mathscr{O}(e^{-(1-\kappa)S/h})\,,\quad \lambda_3(\mathscr{L})-\lambda_2(\mathscr{L})\geq ch^2\,,\]
for some $c>0$ and all $\kappa >0$. The exponential behavior is much better than what the usual Agmon estimates give, see Proposition~\ref{prop.quasimodesexp}. In Lemma \ref{lem.phiuhat}, we also consider a refined weight $\hat{\varphi}_u$ which coincides with $\tilde{\varphi}_u$ away from a region of size $h^{\frac12}$ about $c_u$. This weight is the reason why one needs elliptic estimates with $\Phi'\in S(\R)+h^{\frac12}S_{\frac12}(\R)$.

Section \ref{sec.6} is devoted to establishing a very accurate approximation of the groundstate of $\mathscr{M}_u$. We start by a WKB-like construction of a quasimode based on the right parametrix given in Proposition \ref{prop.parametrix} \eqref{eq.i}. This short way of constructing quasimodes deviates from the method used in \cite{BHR16, BHR22, AA23}, which does not exploit the parametrix  construction. We construct two sequences $(a_j)_{j\geq 0}$ and $(z_j)_{j\geq 0}$ such that, by letting 	\[a^{[J]}=a_0+h^{\frac12}a_1+\ldots+h^{\frac{J}{2}}a_J\,,\quad z^{[J]}=z_0+h^{\frac12}z_1+\ldots+h^{\frac{J}{2}}z_J\,,\]
we have
\[(\mathscr{M}_u^{\varphi_u}-z^{[J]})\mathscr{Q}^{[J]}_+[\chi(x_2)a^{[J]}(x_2)]=\mathscr{O}(h^{\frac{J+1}{2}})\,,\]
where $\chi$ is a suitable cutoff function and $\mathscr{Q}^{[J]}_+$ is a perturbation of $\mathscr{F}$. Here, the main ingredient is the stationary phase method. The first term in the stationary phase gives the eikonal equation. This estimate tells us that the WKB-like function $e^{-\varphi_u(x_2)/h}\mathscr{Q}^{[J]}_+[\chi(x_2)a^{[J]}(x_2)]$ is a quasimode of $\mathscr{M}_u$. This quasimode, which gives a relevant solution to the approximation problem in \cite[Section 6.1]{Helffer09}, is an alternative to the WKB-quasimode given in \cite{BR20} (only in the analytic case) in the original coordinates and that turns out to be useless to study tunneling effect, contrary to our original belief dictated by the electric situation. In Proposition \ref{prop.approx}, we prove that this function is an exponentially good approximation of the groundstate, what we do not know and do not need for the quasimode in \cite{BR20}. This is where Lemma~\ref{lem.phiuhat} and the refined class of weights are used.
  
In Section \ref{sec.7}, Theorem \ref{thm.main} is proved. With the help of the exponentially weighted estimates in Theorem \ref{thm.agmon}, we prove that the space spanned by the one-well groundstates is a very good approximation of $\ker(\mathscr{M}-\lambda_1)\oplus\ker
(\mathscr{M}-\lambda_2)$, for which we construct an orthonormal basis. This is the aim of Proposition \ref{prop.interaction}, from which we deduce that
\[\lambda_2-\lambda_1=2|w|+\mathscr{O}(e^{-(1-\kappa)2S/h})\,,\]
where $w$ is the interaction term given by
\[w=\langle (\mathscr{M}-\mu_1)f_u,f_d\rangle\,.\]
The asymptotic analysis of $w$ is done in Section \ref{sec.analysisw}, where the one-well groundstates are replaced by their WKB-like approximations from Theorem \ref{thm.WKB}. Then, we use the stationary phase theorem to estimate the action of the involved pseudodifferential operators acting on the WKB amplitudes. This reveals the one-term expansion of $w$ and concludes the proof.

Finally, we gathered in Appendix \ref{sec.calculs} the calculations needed to compute the coefficient $c_0$ in the tunneling formula \eqref{eq.tunneling}.

\section{A unitarily equivalent operator and its conjugate}\label{sec.3}

\subsection{New coordinates}\label{sec.iota}
Inspired by \cite{MRVN23}, we prefer to work in the new coordinates $x=\iota(q)$ which are adapted to the geometry of the magnetic field. They are defined through the diffeomorphism
$\iota : \R^2_q\to\R^2_x$  which reads
\begin{equation}\label{def.iota}
\iota(q)= (A_2(q_1,q_2),\, q_2)=x\,.
\end{equation}
In fact, this defines Darboux coordinates for the magnetic $2$-form, meaning that $\dd x_1 \wedge \dd x_2 = B(q) \dd q_1 \wedge \dd q_2$. We recall that we chose a gauge where $A_1=0$, see \eqref{the.gauge}. Note that these coordinates preserve the symmetries of the problem, in the sense that $\iota(-q)=-\iota(q)$ and $\iota(q_1,-q_2)=(A_2(q_1,q_2),-q_2)$. The wells are also preserved since
\[\iota(C_u)=(A_2(C_u),c_u)=(0,c_u)=C_u\] and by symmetry $\iota(C_d)=C_d$.

In these coordinates, the magnetic field is $\mathcal B (x) = B \circ \iota^{-1}(x)$.
The function $\mathcal B$ still satisfies the symmetries $\mathcal B (-x_1,x_2) = \mathcal B(x_1,-x_2)=\mathcal B(x)$. In particular, the Hessian of $\mathcal B$ at $C_u$ is still diagonal. 

We also define
\begin{equation}\label{def.alpha}
\alpha(x) = \partial_2 A_2 \circ \iota^{-1}(x).
\end{equation}
We notice that $\mathcal B$ and $\alpha$ satisfy Assumption \ref{hyp.reg}.

\begin{lemma} \label{lem.analytic2}
If $B$ satisfies Assumption \ref{hyp.reg}, then there exists $\tilde r >0$ such that the functions $x_1 \mapsto \mathcal B(x_1,x_2)$  and $x_1 \mapsto \alpha(x_1,x_2)$ have holomorphic extensions to $U_{\tilde r}$ for all $x_2 \in \R$. Moreover, these extensions belong to $S(U_{\tilde r} \times \R)$.
\end{lemma}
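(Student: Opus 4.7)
The plan is to invert the change of variables $\iota$ by a quantitative holomorphic inverse function argument carried out uniformly in $x_2\in\R$, and then to transfer symbol estimates from $B$ and $\partial_2 A_2$ to their compositions with $\iota^{-1}$ by the chain rule. Since $\iota(q)=(A_2(q_1,q_2),q_2)$ with $A_2(q_1,q_2)=\int_0^{q_1}B(s,q_2)\,\mathrm{d}s$, the problem reduces to inverting $q_1\mapsto A_2(q_1,q_2)$ on a complex strip. By Assumption~\ref{hyp.reg}, this map extends holomorphically to $U_r$, with derivative $\partial_{q_1}A_2=B$ satisfying $\Re B\geq (1-\varepsilon)b_0>0$ and $|B|$ bounded uniformly on $U_r\times\R$.

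First I would establish injectivity of $A_2(\cdot,q_2):U_r\to\C$ for each $q_2\in\R$. For $a,b\in U_r$, the identity
\[
A_2(b,q_2)-A_2(a,q_2)=(b-a)\int_0^1 B(a+t(b-a),q_2)\,\mathrm{d}t
\]
shows that the integral factor has real part at least $(1-\varepsilon)b_0>0$, hence does not vanish. This is the holomorphic counterpart of the fact that, on $\R$, $q_1\mapsto A_2(q_1,q_2)$ is a strictly increasing bijection onto $\R$.

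Next I would construct a holomorphic inverse on a uniform strip. Set $c:=(1-\varepsilon)b_0$, $M:=\sup_{U_r\times\R}|\partial_1 B|$, fix $r'\in(0,r)$, and choose $\rho:=\min\bigl(r-r',\tfrac{c}{4M}\bigr)$. For any $q_2\in\R$ and $q_1^0\in\R$, writing $x_1^0=A_2(q_1^0,q_2)\in\R$, one has $|B(q_1,q_2)|\geq 3c/4$ on $D(q_1^0,\rho)$. Solving $A_2(q_1^0+\zeta,q_2)=x_1$ by the fixed-point iteration $T(\zeta)=\zeta-B(q_1^0,q_2)^{-1}\bigl(A_2(q_1^0+\zeta,q_2)-x_1\bigr)$ yields, via Banach contraction on $|\zeta|\leq\rho$, a unique holomorphic solution $\zeta(x_1,q_2)$ for $|x_1-x_1^0|\leq\delta:=c\rho/2$. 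Since $x_1^0$ sweeps through $\R$ as $q_1^0$ varies over $\R$, and the local inverses agree on overlaps by the injectivity of the previous step, they glue to a single holomorphic map $g(\cdot,q_2):U_{\tilde r}\to U_{r'}$ with $\tilde r:=\delta$ independent of $q_2$, satisfying $A_2(g(x_1,q_2),q_2)=x_1$.

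Finally, I would promote these boundedness statements to symbol estimates. Differentiating the defining relation gives $\partial_{x_1}g=1/B(g,x_2)$, bounded by $1/c$. Iterating and applying the Fa\`a di Bruno formula expresses every derivative $\partial_{x_1}^k\partial_{x_2}^\ell g$ as a polynomial in bounded derivatives of $B$ and $\partial_2 A_2$ (supplied by Assumption~\ref{hyp.reg}) divided by a power of $B(g,x_2)$, and is therefore bounded on $U_{\tilde r}\times\R$. Writing $\iota^{-1}(x)=(g(x),x_2)$, the chain rule then transfers this to $\mathcal B=B\circ\iota^{-1}$ and $\alpha=\partial_2 A_2\circ\iota^{-1}$, placing both in $S(U_{\tilde r}\times\R)$. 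The main obstacle lies in the preceding paragraph: preserving a strictly positive width $\tilde r$ \emph{uniformly} in $x_2\in\R$. A priori the inverse could degenerate as $|x_2|\to\infty$; what rules this out is precisely the combination of the uniform lower bound $\Re B\geq c$ and the uniform upper bound $|\partial_1 B|\leq M$, both built into the symbol class $S(U_r\times\R)$ by Assumption~\ref{hyp.reg}.
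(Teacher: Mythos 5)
Your proof is correct and follows essentially the same strategy as the paper: invert $A_2(\cdot,q_2)$ on a complex strip of width independent of $q_2$, using the uniform lower bound on $|B|$ from Assumption~\ref{hyp.reg}, then push forward symbol estimates via the chain rule. The paper invokes the implicit function theorem after deriving the injectivity bound $|\tilde A(\tilde q_1;q_2)-\tilde A(q_1;q_2)|\geq (1-\varepsilon)b_0|\tilde q_1-q_1|$ and the imaginary-part estimate, whereas you carry out an explicit Banach contraction; these are the same quantitative inverse function argument.
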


\begin{proof}
Let us consider, for all $q\in \Omega_r:=U_r\times\C$, the natural extension of $A_2$,
\[\tilde{A}(q_1;q_2):=\int_{[0,q_1]}B(s,q_2)\mathrm{d}s\,.\] Then $\tilde A(\cdot,q_2)$ is holomorphic in $U_r$ and $\partial_{q_1} \tilde A \neq 0$ because of the small variations of $B$ (Assumption \ref{hyp.reg}). Thus, the range of $\tilde A( \cdot , q_2)$ is open. In fact, we can show that it is a biholomorphism.

Let us consider the injectivity. We have
\begin{equation}\label{eq.taylor}
\tilde A(\tilde q_1;q_2)-\tilde A(q_1;q_2)=(\tilde q_1-q_1)\int_{0}^1B(q_1+t(\tilde q_1-q_1),q_2)\mathrm{d}t\,.
\end{equation}
We notice that, for all $q_1, \tilde q_1\in U_{r}$ and all $q_2\in\R$,
\[\left|\int_{0}^1B(q_1+t(\tilde q_1-q_1),q_2)\mathrm{d}t\right|\geq (1-\varepsilon) b_0\,.\]
Thus,
\[|\tilde A(\tilde q_1;q_2)-\tilde A(q_1;q_2)|\geq (1-\varepsilon) b_0|\tilde q_1-q_1|\,,\]
which shows the injectivity. Let us now explain why the range contains a strip $U_{\tilde r}$ about $\R$.  Thanks to \eqref{eq.taylor} we get, for all $q\in U_r \times \R$,
\[\tilde A(q_1;q_2)-\tilde A(\Re q_1;q_2)=i\Im q_1\int_{0}^1B(\Re q_1+it\Im q_1,q_2)\mathrm{d}t\,,\]
so that
\[\Im\left(\tilde A(q_1;q_2)-\tilde A(\Re q_1;q_2)\right)=\Im q_1\int_{0}^1\Re B(\Re q_1+it\Im q_1,q_2)\mathrm{d}t\,,\]
and, for all $q\in U_r \times \R$,
\[|\Im \tilde A(q_1;q_2)|\geq (1-\varepsilon) b_0|\Im q_1|\,.\]
This shows the existence of the desired strip $U_{\tilde r} \times \R$.

Therefore, by the implicit function Theorem we can find a smooth function $v$ on $U_{\tilde r} \times \R$ such that
\[ \tilde A ( v(x_1,x_2) , x_2) = x_1. \]
Taking the $x_2$-derivative we also have $\partial_{x_2} v \in S(U_{\tilde r} \times \R)$.  The function $v$ is the inverse of $\iota$, and therefore the function
\[ \mathcal B (x) = B(v(x_1,x_2),x_2) \]
has a holomorphic extension to $U_{\tilde r} \times \R$, which belongs $S(U_{\tilde r} \times \R)$. In the same way, we obtain the extension of $\alpha$.
\end{proof}

\begin{remark}\label{remark.positiverealpart}
Note that, for $q \in U_r \times \R$,
\[ \partial_2 A_2 (q)  = \int_0^{{\rm Re} \, q_1} \partial_2 B(s,q_2) \dd s + \int_{{\rm Re} \, q_1}^{q_1} \partial_2 B(s,q_2) \dd s. \]
Therefore,
\[ |{\rm Im}\, \partial_2 A_2(q) | = \Big| {\rm Im} \int_{{\rm Re} \,q_1}^{q_1} \partial_2 B(s,q_2) \dd s \Big| \leq r \| \partial_2 B \|_{\infty} .\]
By \eqref{hyp.bound2} and \eqref{def.alpha} we deduce that
\[ |{\rm Im}\, \alpha | <  b_0 \frac{1-\varepsilon}{2}.\]
In particular, the real part ${\rm Re} \, (\mathcal B+i \alpha) \geq b_0 \frac{1-\varepsilon}{2}$ is always positive.
\end{remark}

\subsection{Normal form}\label{sec.normal}

As underlined in \cite{RVN15} and \cite{MRVN23}, the magnetic Laplacian $\mathscr L$ turns out to be simpler understood as a pseudodifferential operator. In \cite{MRVN23} for instance, a metaplectic transformation which mixes position and momentum variables was used to find a normal form for the operator. Let us recall this result, which is the first key element of our analysis. 

We will denote coordinates in the phase space ${\mathbb R}^4$ by $X = (X_1,X_2)$, where $X_1=(x_1,\xi_1)$ and $X_2=(x_2,\xi_2)$. All through the article, we will use the Weyl quantization in the spirit of \cite{Zworski}. Given a symbol $n(x_1,\xi_1)$, its quantization is formally defined by
\[ \Op_h^{w,1}(n) u(x_1) = \frac{1}{2\pi h} \int_{\R^2} e^{\frac{i}{h}\xi_1 (x_1-y_1)} n \Big( \frac{x_1+y_1}{2} ,\xi_1 \Big) u(y_1) \dd y_1 \dd \xi_1.\]
The upper index $1$ refers to the variable $x_1$. When $h=1$, $\Op_1^{w,1}$ is the non-semiclassical quantization. We can also quantize operator-valued symbols. Assume that for all $X_2 =(x_2,\xi_2) \in \R^2$ we are given a bounded operator $N(X_2) \in \mathcal L( \mathscr B, \mathscr B')$ between Banach spaces $\mathscr B$, $\mathscr B'$. Also assume that $N(X_2)$ belongs to the class of bounded symbols
\begin{multline*}
 S(\R^2, \mathcal L( \mathscr{B}, \mathscr{B}')) \\= \Big\lbrace N \in \mathscr{C}^\infty(\R^2, \mathcal L(\mathscr B, \mathscr B')) ,\quad \sup_{X_2 \in \R^2} \| \partial^\gamma N (X_2) \| \leq C_\gamma, \quad \forall \gamma \in \N_0^2 \Big\rbrace. 
 \end{multline*}
Then the quantization of $N$ is formally defined by the same formula
\[ \Op^{w,2}_h (N) u(x_2) = \frac{1}{2\pi h} \int_{\R^2} e^{\frac ih \xi_2 (x_2 - y_2)}  N \Big( \frac{x_2+y_2}{2}, \xi_2 \Big) u(y_2) \dd y_2 \dd \xi_2.\]
This formula defines a bounded operator $L^2(\R, \mathscr{B}) \to L^2(\R, \mathscr B')$. In this article, $\mathscr{B}$ will be either equal to $L^2(\R_{x_1})$ or, for some $k \in \mathbb N$,
\begin{equation}
B^k(\R) = \lbrace f \in L^2(\R), \quad x^j \partial_x^{k-j} f \in L^2(\R), \quad 0 \leq j \leq k \rbrace.
\end{equation}
The space $B^k(\R)$ is given the natural norm $\| f \|_{B^k}^2 = \sum_{j=0}^k \| x^j \partial_x^{k-j} f \|^2_{L^2}$.
For functions of two variables $(x_1,x_2)$ we will often use the mixed norms
\begin{align}
\| f \|_{B^k \otimes L^2}^2 = \int \|f(\cdot,x_2)\|_{B^k}^2 \,\dd x_2.
\end{align}
For the properties of the Weyl quantization we refer to \cite{Zworski,Keraval}. In particular, when $n(x_1,x_2,\xi_1,\xi_2)$ is a symbol in two variables, we can consider the partial quantizations
\[ \Op_h^{w,1} (n), \quad \Op_h^{w,2}(n), \]
and also
\[\Op_h^{w,2} \Big( \Op_h^{w,1}(n) \Big), \]
which coincides with the standard Weyl quantization on $\R^2$. With these notations, we have the following proposition, taken from \cite[Proposition 2.2]{MRVN23}. It is based on the metaplectic Egorov theorem.

\begin{proposition}\label{prop.Mh}
	The operator $\mathscr{L}$ is unitarily equivalent to
	\[
	h \mathscr{M} := h \Op^{w,2}_{h} \Op_1^{w,1}
	m\,,
	\]
	where
\begin{align}\label{eq.m}
m(x,\xi) &=\mathcal{B}(\xi_2 + h^{1/2} x_1, x_2 + h^{1/2} \xi_1)^2 \xi_1^2 + \left( x_1 + \alpha(\xi_2 + h^{1/2}x_1,x_2 + h^{1/2}\xi_1) \xi_1 \right)^2\nonumber \\
			& \quad + h T(\xi_2 +
			h^{1/2}x_1, x_2 + h^{1/2}\xi_1)\,,
\end{align}
\end{proposition}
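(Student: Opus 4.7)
My plan is to realize the stated unitary equivalence as a composition of three explicit transformations: a change of space coordinates, a metaplectic transformation, and a semiclassical dilation in $(x_1,\xi_1)$, and then to check that each step produces the claimed contribution to the symbol $m$, with all remainders absorbed in the subprincipal term $hT$.

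First, I would push forward by $\iota$. With the gauge \eqref{the.gauge} one has $-ih\partial_{q_1} = -ih\,B(q)\,\partial_{x_1}$ and $-ih\partial_{q_2} - A_2(q) = -ih\alpha(x)\partial_{x_1} + hD_{x_2} - x_1$ in the new coordinates. Conjugation by the unitary $T:L^2(dq)\to L^2(dx)$, $T\phi(x)=\mathcal{B}(x)^{-1/2}\phi(\iota^{-1}(x))$, then gives a symmetric differential operator with principal symbol
\[
\mathcal{B}(x_1,x_2)^2\xi_1^2 + \bigl(\xi_2 - x_1 - \alpha(x_1,x_2)\xi_1\bigr)^2,
\]
plus an $h^2$-correction coming from the Jacobian $\mathcal{B}$ and from the Weyl symmetrization of the $\mathcal{B}^2\xi_1^2$ term; this correction is smooth and bounded and will enter $hT$.

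Second, I would apply the metaplectic Egorov theorem (\cite[Theorem 18.5.9]{Hormander85}) to the linear symplectic map $\chi:(x_1,\xi_1,x_2,\xi_2)\mapsto (x_1-\xi_2,\xi_1,x_2-\xi_1,\xi_2)$, whose quantization is an exact metaplectic unitary $V$. Since $\chi$ is linear, Egorov is exact at the symbol level (no further subprincipal correction beyond the substitution), so $V^*\Op^w_h(m_0)V=\Op^w_h(m_0\circ\chi)$. Substituting $\chi$ into the principal symbol above gives
\[
\mathcal{B}(\xi_2+x_1,x_2+\xi_1)^2\xi_1^2 + \bigl(x_1+\alpha(\xi_2+x_1,x_2+\xi_1)\xi_1\bigr)^2,
\]
exactly as in the heuristic outline of Section \ref{sec.orga}.

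Third, I would use the unitary dilation $U_h u(x_1)=h^{-1/4}u(h^{-1/2}x_1)$ in the $x_1$-variable to rescale $(x_1,\xi_1)\mapsto (h^{1/2}x_1,h^{1/2}\xi_1)$. A standard identity for the Weyl calculus gives $U_h^{\,*}\Op^{w,1}_h(f)U_h=\Op^{w,1}_1\bigl(f(h^{1/2}\cdot,h^{1/2}\cdot)\bigr)$, which is what replaces every occurrence of $(x_1,\xi_1)$ inside $\mathcal{B}$ and $\alpha$ by $(h^{1/2}x_1,h^{1/2}\xi_1)$ and turns the $(x_1,\xi_1)$-quantization into $\Op^{w,1}_1$, while leaving the $(x_2,\xi_2)$-quantization as $\Op^{w,2}_h$. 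This yields the two leading terms in \eqref{eq.m}, with an overall factor $h$ coming from the joint rescaling of $\xi_1^2$ and $x_1^2$, and with all residual corrections bundled into a smooth bounded symbol $hT(\xi_2+h^{1/2}x_1,x_2+h^{1/2}\xi_1)$.

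The routine but attention-demanding part, and the main place to be careful, is the bookkeeping of subprincipal terms at each step: the Jacobian from $T$, the Weyl symmetrization of variable-coefficient second-order terms in step one, and the way the $h^{1/2}$-rescaling redistributes powers of $h$ between principal and subprincipal parts. One has to check that every remainder is smooth and bounded together with all derivatives (uniformly in $h$) so that it fits into $hT$ with $T$ a legitimate symbol. Everything else is formal substitution, so once the step-one computation is carried out carefully on, say, Schwartz functions (using Assumption \ref{hyp.reg} and Lemma \ref{lem.analytic2} to control $\mathcal{B}$ and $\alpha$), steps two and three follow from standard facts of the Weyl calculus and give the stated formula.
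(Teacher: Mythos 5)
Your proposal follows exactly the route the paper itself sketches in its ``Idea of the proof'' (with the full details deferred to \cite{MRVN23}): unitary change of coordinates via $\iota$ with the Jacobian-weighted pullback, the metaplectic Egorov theorem for the linear symplectic map $(x_1,\xi_1,x_2,\xi_2)\mapsto(x_1-\xi_2,\xi_1,x_2-\xi_1,\xi_2)$, the $L^2$-unitary dilation in $x_1$, and a Calder\'on--Vaillancourt verification that the resulting operator-valued symbol lies in the right class. The only blemish is a sign inconsistency in your intermediate symbol for $\widetilde{\mathscr L}$: since $-ih\,\alpha(x)\partial_{x_1}=h\,\alpha(x)D_{x_1}$ has leading Weyl symbol $+\alpha(x)\xi_1$, the second square should read $\bigl(\xi_2-x_1+\alpha(x_1,x_2)\xi_1\bigr)^2$ as in the paper, not $\bigl(\xi_2-x_1-\alpha(x_1,x_2)\xi_1\bigr)^2$; fix that and the rest of your computation goes through unchanged.
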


\begin{proof}[Idea of the proof]
Let us give some more details about the unitary equivalence in Proposition~\ref{prop.Mh}.
The first step is to implement the change of variables $x= \iota(q)$. This leads to the differential operator
\begin{align}
\widetilde{{\mathscr L}} = \Op_h^{w}\Big( {\mathcal B}(x_1,x_2)^2 \xi_1^2 + (\xi_2 - x_1 + \alpha(x_1,x_2) \xi_1)^2 + h^2 T \Big).
\end{align}
We then implement the symplectic change of variable $y_1=x_1-\xi_2$, $\eta_1=\xi_1$, $y_2=x_2-\xi_1$, $\eta_2=\xi_2$, replacing $\widetilde{\mathscr{L}}$ by the conjugated operator $\mathscr{U} \widetilde{\mathscr{L}} \mathscr{U}^\ast$, where $\mathscr{U}$ is the unitary operator
\begin{align*}
\mathscr{U}f(y) &= (2\pi h)^{-1} \iint e^{ih^{-1} \xi_2(y_2-x_2)} f(y_1+\xi_2,x_2)\, \dd x_2 \dd \xi_2  \\
&= 
(2\pi h)^{-2} \iint e^{ih^{-1} \xi(y-x) + i h^{-1}\xi_1 \xi_2 } f(x) \,\dd x \dd \xi.
\end{align*}
The final step is to implement the scaling $y_1 \mapsto \sqrt{h} y_1$ as a unitary map on $L^2$. We see that  $X_2 \mapsto \Op_1^{w,1} \big(m(\cdot, X_2)\big) $ belongs to the claimed class of symbols using the Calderón-Vaillancourt Theorem, and noticing that $m$ is quadratic with respect to $(x_1,\xi_1)$.
\end{proof}

\begin{remark}
Note that  $\mathscr{M}$ still commutes with the symmetry $U$,
\begin{equation}\label{eq.MU=UM}
\mathscr{M}U=U\mathscr{M}\,.
\end{equation}
\end{remark}

To main order, the operator $\mathscr{M}$ is a quadratic form in $X_1=(x_1,\xi_1)$ as one can guess by formal expansion in powers of $h^{\frac 12}$. This quadratic form has $X_2$-dependent coefficients, and it is equivalent to a harmonic oscillator with frequency $\mathcal B(\xi_2,x_2)$. This oscillator in $X_1$ can be interpreted as a quantization of the cyclotron motion, see \cite{RVN15}. The value of Proposition \ref{prop.Mh} is to decouple the variables $X_1$ and $X_2$, at least to main order. For this reason the operator $\mathscr M$ is simpler than $\mathscr L$. In the remainder of the article, we only work with $\mathscr M$, and prove a tunneling formula directly for this operator. As we explain below, tunneling holds in the variables $X_2$. 

\subsection{Exponential conjugation of the one-well operator}

We define the one-well operator by sealing the bottom well. Let $\Sigma\in\mathscr{C}^\infty_0(\R)$ be a non-negative bump function with arbitrarily small support, small $\| \Sigma \|_{\infty}$, and such that $\Sigma(c_d) >0$. We define the upper one-well operator by
\begin{equation}\label{eq.Mu}
\mathscr M_u = \mathscr M + \Sigma(x_2).
\end{equation}
The purpose of the next sections is to obtain optimal decay estimates on the eigenfunctions of $\mathscr{M}_u$, and WKB-like constructions. For this reason we consider the operator conjugated by an exponential weight. The proposition below shows that the conjugated operator is still a pseudodifferential operator.

\begin{proposition}\label{prop.symbol}
Consider a smooth real-valued function $\Phi\in S(\R)$ such that $|\Phi'|\leq \tilde r$, where $\tilde r$ is given by Lemma \ref{lem.analytic2}. Then the operator
\[\mathscr{M}_u^\Phi:=e^{\Phi(x_2)/h}\mathscr{M}_ue^{-\Phi(x_2)/h}\] is a pseudodifferential operator that can be written in the form $\Op^{w,2}_{h} \Op_1^{w,1} m_u^\Phi$ where 
\begin{equation}\label{eq.nPhi}
m_u^\Phi(x,\xi)=m(x_1,x_2,\xi_1,\xi_2+i\Phi'(x_2)) +\Sigma(x_2)+ \mathscr{O}(h^2)\,,
\end{equation}
in the topology of $S(\R^2,\mathcal{L}(B^2(\R), L^2(\R)))$.
Moreover, we have the following expansion for all $J \in \N$:
\begin{equation}\label{eq.Mhphi}
	\mathscr{M}_u^\Phi=\Op^{w,2}_h \big(M_0+h^{\frac12}M_1+hM_2+\ldots+h^{\frac{J}{2}}M_J \big)+ h^{ \frac{J+1}{2}}\mathscr{R}_{J,h}\,,
\end{equation}
where the symbol of $\mathscr{R}_{J,h}$ belongs to $S(\R^2,\mathcal{L}(B^{J+3}(\R), L^2(\R)))$, and where the first expressions of the $M_j$'s are given explicitly by 
\[\begin{split}
	M_0(x_2,\xi_2)&=\mathcal B(\xi_2+i\Phi'(x_2),x_2)^2D^2_{x_1}+(x_1+ \alpha(\xi_2+i\Phi'(x_2),x_2)D_{x_1})^2 + \Sigma(x_2)\,,\\
	M_1(x_2,\xi_2)&= \Op^{w,2}_h (m_1)\,,\\
	M_{2}(x_2,\xi_2)&=\Op^{w,2}_h (m_2)\,,
\end{split}\]
with
\[\begin{split}
	m_1&= 2\xi^2_1\mathcal B\nabla\mathcal B\cdot X_1+2\xi_1(x_1+\alpha\xi_1)\nabla \alpha\cdot X_1  \,,\\
	m_2&=\xi_1^2(\nabla\mathcal B\cdot X_1)^2+\xi_1^2 \mathcal B \nabla^2\mathcal B(X_1,X_1)+\xi^2_1(\nabla\alpha\cdot X_1)^2 \\ & \quad +\xi_1(x_1+\alpha\xi_1)\nabla^2\alpha(X_1,X_1)+ T(\xi_2+i\Phi'(x_2),x_2)\,,
\end{split}\]
and where all the coefficient-functions are evaluated at $(\xi_2+i\Phi'(x_2),x_2)$.
\end{proposition}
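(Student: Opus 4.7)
Since $\Sigma(x_2)$ is a multiplication operator that commutes with $e^{\Phi(x_2)/h}$, the conjugation only acts on the pseudodifferential part $\Op_h^{w,2}\Op_1^{w,1}(m)$ given by Proposition~\ref{prop.Mh}. Writing this as an oscillatory integral with operator-valued symbol $N(X_2):=\Op_1^{w,1}(m(\cdot,X_2))$,
\[
\mathscr{M}\, u(x_2)=\frac{1}{2\pi h}\iint e^{\frac{i}{h}\xi_2(x_2-y_2)}\,N\!\left(\tfrac{x_2+y_2}{2},\xi_2\right)u(y_2)\,\dd y_2\,\dd\xi_2,
\]
the conjugation by $e^{\Phi/h}$ simply inserts the factor $e^{(\Phi(x_2)-\Phi(y_2))/h}$ in the integrand. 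Using the identity $\Phi(x_2)-\Phi(y_2)=(x_2-y_2)\,\Psi(x_2,y_2)$ with $\Psi(x_2,y_2):=\int_0^1\Phi'\bigl(y_2+t(x_2-y_2)\bigr)\,\dd t$, the phase becomes $\tfrac{i}{h}(x_2-y_2)(\xi_2-i\Psi)$. Since $\|\Psi\|_\infty\leq\|\Phi'\|_\infty\leq\tilde r$ and the symbol $m$ extends holomorphically in $\xi_2$ to $U_{\tilde r}$ by Lemma~\ref{lem.analytic2}, the Calderón-Vaillancourt Theorem applied in $x_1$ shows that $N$ extends holomorphically to this strip as a symbol in $S(\R^2,\mathcal{L}(B^2,L^2))$; the $\xi_2$-contour can therefore be legitimately deformed by $-i\Psi(x_2,y_2)$.

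After the contour shift, we obtain the amplitude representation
\[
\mathscr{M}_u^\Phi u(x_2)=\frac{1}{2\pi h}\iint e^{\frac{i}{h}\xi_2(x_2-y_2)}\,N\!\left(\tfrac{x_2+y_2}{2},\,\xi_2+i\Psi(x_2,y_2)\right)u(y_2)\,\dd y_2\,\dd\xi_2+\Sigma(x_2)u(x_2).
\]
A Taylor expansion of $\Phi'$ at the midpoint $z:=\tfrac{x_2+y_2}{2}$ kills all odd-order contributions, yielding $\Psi(x_2,y_2)-\Phi'(z)=\mathscr O\bigl((x_2-y_2)^2\bigr)$. The standard amplitude-to-Weyl-symbol reduction then turns each factor $(x_2-y_2)^2$ into $-h^2\partial_{\xi_2}^2$ acting on the amplitude, contributing only $\mathscr O(h^2)$ to the Weyl symbol. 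This produces \eqref{eq.nPhi}, namely $m_u^\Phi(x,\xi)=m(x_1,x_2,\xi_1,\xi_2+i\Phi'(x_2))+\Sigma(x_2)+\mathscr O(h^2)$ in $S(\R^2,\mathcal{L}(B^2,L^2))$.

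To deduce \eqref{eq.Mhphi}, set $\zeta:=\xi_2+i\Phi'(x_2)$ and Taylor expand every occurrence of $\mathcal{B}(\zeta+h^{1/2}x_1,x_2+h^{1/2}\xi_1)$ and $\alpha(\zeta+h^{1/2}x_1,x_2+h^{1/2}\xi_1)$ in powers of $h^{1/2}$ around $(\zeta,x_2)$. Multiplying out $\mathcal{B}^2\xi_1^2+(x_1+\alpha\xi_1)^2+hT(\zeta,x_2)$ and collecting equal powers of $h^{1/2}$ yields the explicit expressions for $m_1$ and $m_2$; the $hT$-contribution enters only at order $h$ and hence lands in $M_2$. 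The Taylor remainder at order $J$ is a polynomial in $X_1=(x_1,\xi_1)$ of total degree at most $J+3$ (the extra two degrees coming from the prefactor $\xi_1^2$ in $\mathcal{B}^2\xi_1^2$) with coefficients in $S(\R^2)$, whose $\Op_1^{w,1}$-quantization is a bounded operator from $B^{J+3}(\R)$ to $L^2(\R)$ uniformly in $X_2$; this places $\mathscr{R}_{J,h}$ in the claimed class. The main technical obstacle is the first step: rigorously justifying the $\xi_2$-contour deformation for operator-valued amplitudes with $h$- and $x_2$-uniform estimates, and converting the resulting amplitude to an honest Weyl symbol while tracking remainders in the right operator-norm topology.
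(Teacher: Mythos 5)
Your proof is correct and follows essentially the same route as the paper: the paper defers the derivation of \eqref{eq.nPhi} (the exact-exponential-conjugation part) to the references \cite{Nakamura} and \cite{DR24}, and you explicitly carry out that standard contour-deformation argument — writing $\Phi(x_2)-\Phi(y_2)=(x_2-y_2)\Psi$, shifting the $\xi_2$-contour by $-i\Psi$ using the holomorphy from Lemma~\ref{lem.analytic2}, and Taylor-expanding $\Psi$ at the midpoint so the odd terms cancel and the discrepancy with $\Phi'(\tfrac{x_2+y_2}{2})$ is $\mathscr O\bigl((x_2-y_2)^2\bigr)=\mathscr O(h^2)$ after the amplitude-to-Weyl reduction. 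For \eqref{eq.Mhphi} your Taylor expansion in $h^{1/2}X_1$, including the degree-$(J+3)$ count for the remainder coming from the $\xi_1^2$ prefactor and the resulting $\mathcal{L}(B^{J+3},L^2)$ bound via Calderón--Vaillancourt, coincides with the paper's proof.
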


\begin{remark}
Notice that in order not to overload the symbols, we have omitted the dependence on $\Phi$ in the notation for the operators $M_j$.
\end{remark}

\begin{proof}
The fact that $\mathscr{M}^\Phi_u$ is pseudodifferential operator with the expansion \eqref{eq.nPhi} follows from the same considerations as in \cite[Section 3]{Nakamura} (see also \cite[Appendix~A]{DR24} where a simple proof is given). The expansion \eqref{eq.Mhphi} is then inspired by \cite[Section 3.2]{MRVN23}, and is proven as follows. The symbol $m_u^\Phi$ given by \eqref{eq.nPhi} and \eqref{eq.m} can be expanded in powers of $h^{\frac 12} (x_1,\xi_1)$ using the Taylor formula. At order $J >0$, the remainder $r_{J,h}$ is bounded by
\[ |r_{J,h}(X_1,X_2)| \leq C h^{\frac{J+1}{2}}  \langle |x_1| + |\xi_1| \rangle^{J+3}. \]
Similarly bounding the $x_1$-derivatives of $r_{J,h}$, we find that $\Op_1^{w,1} r_{J,h}$ defines a bounded operator in $\mathcal L( B^{J+3}(\R),L^2(\R))$, since the coefficients $\mathcal B$ and $\alpha$ are in $S(\R)$. Bounding $x_2$-derivatives we finally find that $\Op_1^{w,1} r_{h,J}$ is a uniformly bounded symbol in the class $h^{\frac{J+1}{2}} S(\R^2, \mathcal L( B^{J+3}(\R),L^2(\R)))$.
\end{proof}

There is an extension of Proposition \ref{prop.symbol} to slightly more general weights, involving the class of $h$-dependent functions
\begin{equation}
S_{\frac 12}(\R) = \big\lbrace \Phi ( \cdot ; h) \in \mathscr{C}^{\infty}(\R) \, ; \forall n, \quad |\partial_x^n \Phi(x;h)| \leq C_\alpha h^{- \frac n2} \big\rbrace.
\end{equation}
Such weights will be useful when proving accurate approximation of the one-well eigenfunctions.

\begin{lemma}\label{prop.symbolh12}
Consider a smooth bounded real-valued function $x_2\mapsto\Phi(x_2)$ with $\Phi'\in S(\R)+h^{\frac12}S_{\frac12}(\R)$ such that $|\Phi'|\leq \tilde r$. Then, 
$\mathscr{M}_u^\Phi:=e^{\Phi(x_2)/h}\mathscr{M}_ue^{-\Phi(x_2)/h}$ is a pseudodifferential operator that can be written in the form 
\[	\mathscr{M}_u^\Phi=\Op^{w,2}_h(M_0+h^{\frac12}M_1)+ h\mathscr{R}_h\,,\]
where the symbol of $\mathscr{R}_h$ belongs to
\begin{multline*}
	S_{\frac12,0}\big(\R^2,\mathcal{L}(B^4(\R),L^2(\R))\big)\\
=\big\{a\in\mathscr{C}^\infty : \forall\alpha\in\N^2\,,\exists C_\alpha>0 : \|\partial^\alpha_{X_2} a\|_{\mathcal{L}(B^4(\R),L^2(\R))}\leq C_\alpha h^{-\frac{\alpha_1}{2}}\big\}\,.
\end{multline*}
\end{lemma}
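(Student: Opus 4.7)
The plan is to extend the strategy of Proposition \ref{prop.symbol} (which itself follows \cite[Appendix A]{DR24}) to the exotic class of weights, by splitting $\Phi$ into a regular part and an exotic part and handling each separately. Write $\Phi = \Phi_0 + h^{1/2}\Phi_1$ with $\Phi_0' \in S(\R)$ and $\Phi_1' \in S_{\frac12}(\R)$, arranged so that $|\Phi_0'| + h^{1/2}|\Phi_1'| \leq \tilde r$. Since $\Phi_0(x_2)$ and $\Phi_1(x_2)$ are both multiplication operators in $x_2$, their exponentials commute and
\[ \mathscr{M}_u^\Phi = e^{h^{-1/2}\Phi_1(x_2)} \, \mathscr{M}_u^{\Phi_0} \, e^{-h^{-1/2}\Phi_1(x_2)}. \]
First I would apply Proposition \ref{prop.symbol} with $J=1$ to the inner conjugation, obtaining
\[ \mathscr{M}_u^{\Phi_0} = \Op^{w,2}_h\bigl(M_0^{(0)} + h^{1/2} M_1^{(0)}\bigr) + h\,\tilde{\mathscr{R}}_h, \]
where $M_j^{(0)}$ is defined by the explicit formulas of Proposition \ref{prop.symbol} with $\Phi$ replaced by $\Phi_0$, and $\tilde{\mathscr{R}}_h \in S(\R^2, \mathcal{L}(B^4(\R), L^2(\R)))$.

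Next I would implement the outer conjugation by $e^{\pm h^{-1/2}\Phi_1(x_2)}$. Thanks to Lemma \ref{lem.analytic2}, the operator-valued symbol of $\mathscr{M}_u^{\Phi_0}$ extends holomorphically in $\xi_2$ to the strip $\{|\Im \xi_2| < \tilde r - \|\Phi_0'\|_\infty\}$, which by hypothesis contains the image of $h^{1/2}\Phi_1'$. The contour-shift argument of \cite[Appendix A]{DR24} then applies and shows that, modulo a remainder in $h\,S_{\frac12,0}(\R^2, \mathcal{L}(B^4(\R), L^2(\R)))$, the outer conjugation amounts to the substitution
\[ \xi_2 \longmapsto \xi_2 + ih^{1/2}\Phi_1'(x_2) \]
inside the symbol. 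The enlarged class $S_{\frac12,0}$ (and no better) appears because the $x_2$-derivatives of the shifted symbol hit $\Phi_1'(x_2)$, whose derivatives are only controlled by $\partial_{x_2}^k \Phi_1' = O(h^{-k/2})$.

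To conclude, it remains to identify the shifted expressions with the $M_j$ of the statement. Each $M_j^{(0)}$ evaluates the coefficient functions $\mathcal{B}$, $\alpha$, $T$ at $(\xi_2 + i\Phi_0'(x_2), x_2)$; the shift $\xi_2 \mapsto \xi_2 + ih^{1/2}\Phi_1'(x_2)$ therefore lands precisely on $(\xi_2 + i\Phi'(x_2), x_2)$, and yields
\[ M_j^{(0)}\bigl(x_2, \xi_2 + ih^{1/2}\Phi_1'(x_2)\bigr) = M_j(x_2, \xi_2), \]
with no further Taylor-type correction. Collecting the remainders produced by the two conjugations delivers the claimed expansion.

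The main obstacle is controlling the outer conjugation in the exotic calculus: each formal commutator of $\Op_h^{w,2}(a)$ with $h^{-1/2}\Phi_1(x_2)$ looks of size $h^{1/2}$, so naive iteration produces a series whose terms are only bounded, not small. The gain comes from the holomorphy of the inner symbol in $\xi_2$, which turns the contour shift into an exact formula modulo remainders involving $\Phi_1''$, $\Phi_1'''$, etc.; since $h^{1/2}\Phi_1'' = O(1)$ and $h\Phi_1''' = O(h^{1/2})$ in $S_{\frac12}$, the first genuine correction enters exactly at size $h$ in $S_{\frac12,0}$, which is what the statement asserts.
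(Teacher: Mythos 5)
Your proof is correct and reaches the stated conclusion, but it organizes the argument differently from the paper. The paper's proof is a one-paragraph observation: the conjugation argument of Proposition \ref{prop.symbol} (which already appeals to \cite[Appendix A]{DR24} for the fact that $e^{\Phi/h}\mathscr{M}_u e^{-\Phi/h}$ is a $\Psi$DO with symbol $m(\ldots,\xi_2+i\Phi'(x_2))$) applies verbatim with $\Phi'\in S(\R)+h^{1/2}S_{1/2}(\R)$; the only change is that each $x_2$-derivative of $\mathcal{B}(\xi_2+i\Phi'(x_2),x_2)$ now costs $h^{-1/2}$, so the symbol and hence the remainder after the Taylor expansion to order $J=1$ live in $S_{1/2,0}$ instead of $S$. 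You instead split $\Phi=\Phi_0+h^{1/2}\Phi_1$, apply Proposition \ref{prop.symbol} to the regular part, then treat the exotic part as a second conjugation controlled by a contour shift in $\xi_2$. Both routes rest on the same ingredients (holomorphy of $\mathcal B$ and $\alpha$ in $\xi_2$ via Lemma \ref{lem.analytic2}, and the bookkeeping of $h^{-1/2}$ powers), but your version isolates clearly where the exotic class is produced. The cost is that you must assume the split can be arranged so that $|\Phi_0'|\leq\tilde r$ \emph{and} $|\Phi_0'|+h^{1/2}|\Phi_1'|\leq\tilde r$, which is stronger than the stated $|\Phi'|\leq\tilde r$; this holds for the weights actually used in the paper (e.g.\ $\hat\varphi_u$ in Lemma \ref{lem.phiuhat}, where $\Phi_0'$ is bounded strictly below $\tilde r$), but it deserves a remark. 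Also note a small slip in your closing bookkeeping: $\Phi_1'\in S_{1/2}$ gives $\Phi_1'''=O(h^{-1})$ and hence $h\Phi_1'''=O(1)$, not $O(h^{1/2})$; the quantity that actually matters for the contour-shift remainder is $\Psi'''=h^{1/2}\Phi_1'''=O(h^{-1/2})$ multiplied by $h^2$ from the $d^2$ factor, giving a correction of size $h^{3/2}$ in $S_{1/2,0}$, which is in fact better than the $h$ you claimed and in particular still lands inside $hS_{1/2,0}$ as required.
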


\begin{proof}
 Under the assumptions on $\Phi$, each time we differentiate $\mathcal B(\xi_2 + i \Phi'(x_2),x_2)$ with respect to $x_2$ we loose a power of $h^{\frac 12}$. Hence the same  holds for the symbol $m_u^\Phi$. Proceeding as in Proposition \ref{prop.symbol} with $J=1$, the result follows.
\end{proof}

\section{Functional inequalities} \label{sec.functionaline}\label{sec.4}
\subsection{Statements}
The inequality in the following Proposition~\ref{prop.functionalin} is the core of the Agmon-type decay estimates and of a WKB-like approximation of the groundstate of $\mathscr{M}_u$. It can be established under the following assumption on the weight $\Phi$. Let $\chi_0 \in \mathscr{C}^\infty_0(\R)$ be a fixed cutoff function, equal to $1$ on a neighborhood of $0$.

\begin{assumption}[Coercivity]\label{hyp.subsolution}
There exists $c>0$ such that the following hold. For all $R \geq 1$ we can find $h_0= h_0(R) >0$ such that, for $h\in(0,h_0)$ and $(x_2,\xi_2)\in\R^2$, the function $\Phi$ satisfies
	\[ \Re \mathcal B(\xi_2+i\Phi'(x_2),x_2)+\Sigma(x_2)-b_0  \geq c R^2 h \Big( 1 - \chi_0 \left( \frac{x_2-c_u}{R h^{\frac 12}} \right) \Big) \,.\]
\end{assumption}
We emphasize that $\Phi$ may depend on $R$ and $h$. For simplicity we will use the notation $\chi_{R,h}(x) = \chi(x / R h^{\frac 12})$. We also recall that $\mathcal B$ has small variations, meaning that $|\mathcal B - b_0| \leq \varepsilon b_0$. The main result of this section is

\begin{proposition}\label{prop.functionalin}
Assume $\|\Sigma \|_\infty < (1-3\varepsilon)b_0$, with $\varepsilon$ from Assumption \ref{hyp.reg}. Let $C_0 >0$, and let $\Phi$ be a smooth bounded real-valued function with $\Phi' \in S(\R) + Rh^{\frac 12} S_{\frac 12}(\R)$, satisfying Assumption~\ref{hyp.subsolution}. Then there exists $C>0$ such that for $R$ large enough, we can find $h_0=h_0(R) >0$ such that
	\[ R^2 \| f\|_{L^2}\leq Ch^{-1}\|(\mathscr{M}_u^\Phi-z)f\|_{B^1 \otimes L^2}+C R^2 \|\chi_{R,h}(x_2-c_u)f\|_{L^2}+C\|f\|_{B^4 \otimes L^2}\,,\]
	 for all $h\in(0,h_0)$, all $f \in \mathcal{S}(\R^2)$, and all $z$ satisfying $|z-b_0| < C_0 h$.
\end{proposition}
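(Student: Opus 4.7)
The strategy is to reduce the operator-valued estimate for $\mathscr{M}_u^\Phi - z$ to a scalar coercivity estimate on $L^2(\R_{x_2})$ for the first Landau eigenvalue $\mu_1(X_2) := \mathcal{B}(\xi_2 + i\Phi'(x_2),x_2) + \Sigma(x_2)$, then to derive the latter from Assumption \ref{hyp.subsolution} via a Fefferman--Phong inequality adapted to the class $S(\R) + R h^{1/2} S_{1/2}(\R)$. The reduction goes through the augmented matrix operator $\mathscr{P}$ and its left parametrix $\mathscr{E}_-$ announced in Section \ref{sec.para}. The point is that for $z$ near $b_0$, the only eigenvalue of $M_0(X_2)$ inside a small disk around $b_0$ is $\mu_1(X_2)$, with rank-one (non-selfadjoint) spectral projector $\Pi(X_2)$; choosing $\mathscr{F}$, $\mathscr{G}$ as suitable quantizations of $\Pi$ and its dual makes $P_0 - z$ bijective from $B^2(\R) \oplus \C$ to $L^2(\R) \oplus \C$, with inverse elliptic iff $\mu_1 - z$ is elliptic.

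Concretely, I apply $\mathscr{E}_-$ to the column $\bigl((\mathscr{M}_u^\Phi - z)f,\, 0\bigr)^\top$. Reading the first line and using that the symbol of $\mathscr{E}_-$ gains an $h^{-1}$ in front of the principal block and is bounded on $B^1 \otimes L^2$ into $L^2$, I obtain
\begin{equation*}
\|f\|_{L^2} \leq C h^{-1} \|(\mathscr{M}_u^\Phi - z) f\|_{B^1 \otimes L^2} + C \|\mathscr{G} f\|_{L^2(\R_{x_2})} + C h \|f\|_{B^4 \otimes L^2},
\end{equation*}
where the last term absorbs the action of the subprincipal remainders $h^{1/2} M_1$ and $h \mathscr{R}_h$ from Lemma \ref{prop.symbolh12}. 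The problem therefore reduces to proving the scalar bound
\begin{equation*}
R^2 \|u\|_{L^2(\R)} \leq C h^{-1} \|\Op_h^{w,2}(\mu_1 - z) u\|_{L^2(\R)} + C R^2 \|\chi_{R,h}(\cdot - c_u) u\|_{L^2(\R)}
\end{equation*}
for $u = \mathscr{G} f$, modulo error terms that can be reabsorbed in $\|f\|_{B^4 \otimes L^2}$ and in the left hand side.

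For the scalar estimate I take the real part of $\langle \Op_h^{w,2}(\mu_1 - z) u, u\rangle$. Using $|z - b_0| \leq C_0 h$ together with Assumption \ref{hyp.subsolution}, the Weyl symbol of the Hermitian part satisfies, pointwise in $X_2$,
\begin{equation*}
\Re(\mu_1 - z)(X_2) \geq c R^2 h \bigl(1 - \chi_0(\tfrac{x_2 - c_u}{R h^{1/2}})\bigr) - C_0 h.
\end{equation*}
Since $\Phi' \in S(\R) + R h^{1/2} S_{1/2}(\R)$, the symbol on the right hand side lies in the mixed class $S + R^2 h \, S_{1/2}$, and the Fefferman--Phong inequality in this class (as developed in \cite{DR24}) yields
\begin{equation*}
\Re \bigl\langle \Op_h^{w,2}(\mu_1 - z) u, u \bigr\rangle \geq c R^2 h \|u\|^2 - C R^2 h \|\chi_{R,h}(\cdot - c_u) u\|^2 - C h \|u\|^2.
\end{equation*}
Choosing $R$ large enough to swallow both the $C_0 h$ from the $z$-error and the $Ch$ from Fefferman--Phong, and dividing by $h$, yields the required scalar coercivity.

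The main obstacle is the Fefferman--Phong step in the borderline class $S + R h^{1/2} S_{1/2}$. The standard sharp Gårding inequality in $S_{1/2}$ produces an $h$-loss, which is exactly of the order of the main term $R^2 h$ we want to retain; this forces us to use the refined Fefferman--Phong of \cite{DR24}, and to be careful that the $R$-dependence of the constants is tracked through the symbol calculus so that the losses remain of size $h$ (not $R^2 h$) and can be absorbed by taking $R$ large. A secondary technical point is the parametrix reduction itself: one must verify that the compositions $\mathscr{E}_- \mathscr{P}$ and the off-diagonal operator $\mathscr{F}$ act continuously between the spaces $B^k \otimes L^2$ with only a finite loss in the $B^k$-index, which is what pins down the exponents $B^1$ and $B^4$ appearing in the statement.
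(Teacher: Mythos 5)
Your strategy matches the paper's: construct a left parametrix for the Grushin-augmented matrix $\mathscr{P}$, reduce the operator estimate to a scalar coercivity estimate for the effective symbol $\mu_1(X_2) = \mathcal{B}(\xi_2 + i\Phi',x_2) + \Sigma - z$, and then invoke a Fefferman--Phong inequality in the class $S + Rh^{1/2}S_{1/2}$ and absorb the remainders by taking $R$ large and $h$ small. This is precisely what the paper does via Corollary~\ref{cor.coercivity} and Lemma~\ref{lem.FP}. However, there is a concrete conceptual error in your first step. You write that ``the symbol of $\mathscr{E}_-$ gains an $h^{-1}$ in front of the principal block'' and accordingly insert an $h^{-1}$ in your first displayed inequality. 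This is false: the whole point of adjoining $\mathscr{F}$ and $\mathscr{G}$ to form $\mathscr{P}$ is that the principal symbol $\mathsf{P}_0(X_2)$ is \emph{uniformly} invertible (Proposition~\ref{prop.P0z}), so the parametrix is an honest $O(1)$ pseudodifferential operator with bounded symbol and the first-row identity gives
\[
\|f\|_{L^2} \leq C\|(\mathscr{M}_u^\Phi - z)f\|_{B^1\otimes L^2} + C\|\mathscr{G}f\|_{L^2} + Ch\|f\|_{B^4\otimes L^2}
\]
with \emph{no} $h^{-1}$ (this is the first inequality of Corollary~\ref{cor.coercivity}). The $h^{-1}$ in the final statement is not produced by the parametrix; it comes from dividing the Fefferman--Phong lower bound (which lives at scale $R^2h$) by $h$. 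Misattributing it to the parametrix suggests a misunderstanding of what the Grushin augmentation is doing, and if you literally carried the $h^{-1}$ through the argument you would pick up a spurious factor $R^2 h^{-1}$ on the residual term that would have to be cleaned up by shrinking $h_0(R)$, whereas the paper's clean estimate avoids this.

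A secondary gap: the ``scalar bound'' you state, $R^2\|u\| \lesssim h^{-1}\|\Op_h^{w,2}(\mu_1 - z)u\| + R^2\|\chi_{R,h}u\|$, is phrased as a norm estimate, but your intended application requires bounding $\|\Op_h^{w,2}(\mu_1 - z)\mathscr{G}f\|$ by the original data, which does not follow automatically from the first-row identity. This is exactly what the second row of the parametrix identity (\eqref{eq.para2}) supplies, since $\mathscr{Q}_\pm = -\Op_h^{w,2}(\mu_1 - z) + O(h)$ after noting the subprincipal symbol $\mathsf{Q}_1^\pm$ vanishes by parity. The paper avoids this route and instead takes the scalar product of \eqref{eq.para2} with $\mathscr{G}f$ to get an \emph{upper} bound on the quadratic form $\Re\langle\Op_h^{w,2}(\mu_1 - b_0)\mathscr{G}f, \mathscr{G}f\rangle$ directly (second inequality of Corollary~\ref{cor.coercivity}); this is cleaner and is the piece your write-up leaves implicit under ``modulo error terms.'' You also need the commutator estimate $[\mathscr{G}, \chi_{R,h}] = O(h^{1/2})$ to convert $\|\chi_{R,h}\mathscr{G}f\|$ into $\|\chi_{R,h}f\|$ as required by the statement; this is worth flagging explicitly because the factor is only $h^{1/2}$, not $h$, and must be absorbed carefully.
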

Note that, when we say $\Phi' \in R h^{\frac 12} S_{\frac 12}(\R)$, we implicitly assume that the seminorms are uniformly bounded with respect to $R \in [1,\infty)$. This is the case for the weight we use later, see Lemma \ref{lem.phiuhat}. 
From Proposition \ref{prop.functionalin}, we deduce the following functional inequality, which avoids the complicated remainders related to the $B^N(\R)$-norm.

\begin{corollary}\label{cor.functionalin}
Let  $L\in\N$ and $C_0 >0$. If $\|\Sigma \|_\infty < (1-3\varepsilon)b_0$, there exists $N\in\N$ such that the following holds. 
Let $\Phi$ be a smooth, bounded, real-valued function with $\Phi' \in S(\R) + Rh^{\frac 12} S_{\frac 12}(\R)$ satisfying Assumption \ref{hyp.subsolution}. Then there exist $R, h_0, C>0$ such that
	\[\|f\|_{B^L \otimes L^2}\leq Ch^{-1}\|(\mathscr{M}_u^\Phi-z)f\|_{B^N \otimes L^2}+CR^2 \|\chi_{R,h}(x_2-c_u)f\|\,,\]
	for all $h\in(0,h_0)$, all $f \in \mathcal{S}(\R^2)$, and all $z$ satisfying $|z-b_0| < C_0 h$.
\end{corollary}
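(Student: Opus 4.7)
The plan is to combine Proposition \ref{prop.functionalin} with an elliptic regularity estimate for $\mathscr{M}_u^\Phi$ in the harmonic-oscillator variables $(x_1, D_{x_1})$, using it first to absorb the $\|f\|_{B^4\otimes L^2}$ remainder, and then to bootstrap to $B^L\otimes L^2$-norms on the left.

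The first step is to prove, for every integer $k\geq 0$, the elliptic regularity estimate
\[\|f\|_{B^{k+2}\otimes L^2}\leq C_k\bigl(\|\mathscr{M}_u^\Phi f\|_{B^k\otimes L^2}+\|f\|_{B^k\otimes L^2}\bigr),\]
with constants uniform in the parameter $R$ and for $h$ small. The key observation is that the principal operator-valued symbol $M_0(x_2,\xi_2)$ is a complex harmonic oscillator in $x_1$ with uniformly positive definite real part. Indeed, Assumption \ref{hyp.reg} gives $\Re\mathcal B^2\geq b_0^2(1-2\varepsilon)$ and Remark \ref{remark.positiverealpart} gives $|\Im\alpha|<b_0(1-\varepsilon)/2$; writing out the quadratic form in $(x_1,\xi_1)$ one checks that, in the regime $\varepsilon<1/3$ (required by Proposition \ref{prop.functionalin}),
\[\Re\bigl(\mathcal B^2\xi_1^2+(x_1+\alpha\xi_1)^2\bigr)\geq c\,(x_1^2+\xi_1^2)\]
uniformly in $(x_2,\xi_2)$. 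A Shubin-type Gårding inequality then gives the uniform bound $\|u\|_{B^2(\R)}\leq C(\|M_0(x_2,\xi_2)u\|_{L^2}+\|u\|_{L^2})$, which lifts via the semiclassical operator-valued calculus applied to \eqref{eq.Mhphi} to the desired elliptic regularity.

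Applying this estimate twice, and using that $|z|\leq b_0+C_0 h$ is bounded, I would then deduce
\[\|f\|_{B^4\otimes L^2}\leq C\,\|(\mathscr{M}_u^\Phi-z)f\|_{B^2\otimes L^2}+C\|f\|_{L^2}.\]
Plugging this into Proposition \ref{prop.functionalin} and using $h^{-1}\geq 1$ to absorb the $B^2$-contribution into the leading semiclassical term yields
\[R^2\|f\|_{L^2}\leq C h^{-1}\|(\mathscr{M}_u^\Phi-z)f\|_{B^2\otimes L^2}+C R^2\|\chi_{R,h}(x_2-c_u)f\|_{L^2}+C\|f\|_{L^2}.\]
Choosing $R$ large enough so that $C\leq R^2/2$, the last term is absorbed into the left-hand side, giving the corollary for $L=0$ and $N=2$. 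For general $L$, iterating the elliptic regularity produces
\[\|f\|_{B^L\otimes L^2}\leq C\,\|(\mathscr{M}_u^\Phi-z)f\|_{B^{L-2}\otimes L^2}+C\|f\|_{L^2},\]
and injecting the $L^2$-estimate just obtained yields the result with $N=\max(L-2,2)$.

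The main obstacle I anticipate is the rigorous justification of the elliptic regularity in the operator-valued semiclassical calculus: one must show that the pointwise Gårding-type positivity of $\Re M_0$ lifts to a norm estimate for the full quantization $\Op^{w,2}_{h}(M_0)$, and that the remainders of \eqref{eq.Mhphi}, which only act as bounded maps $B^{J+3}\to L^2$, are negligible at each order. Above all, one must verify that all constants remain uniform in the parameter $R$ and in $h$, despite the less favorable symbol class $Rh^{\frac12}S_{\frac12}$ possibly entering $\Phi'$; this last point is what drives the necessity of a semiclassical calculus with the weight class of Lemma \ref{prop.symbolh12}.
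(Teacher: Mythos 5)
Your proof is correct and follows essentially the same strategy as the paper: establish a uniform elliptic estimate in the $(x_1,D_{x_1})$-variables of the form $\|f\|_{B^{k+2}\otimes L^2}\lesssim\|(\mathscr{M}_u^\Phi+1)f\|_{B^k\otimes L^2}$, feed it into Proposition~\ref{prop.functionalin} to upgrade the $B^4$-remainder to $L^2$, absorb the residual $C\|f\|_{L^2}$ by taking $R$ large, and iterate to reach $B^L$. The one place you deviate is the justification of the elliptic estimate: you argue $\Re\big(\mathcal B^2\xi_1^2+(x_1+\alpha\xi_1)^2\big)\geq c(x_1^2+\xi_1^2)$ directly (which is correct under the numerical constraints $\Re\mathcal B^2\geq(1-2\varepsilon)b_0^2$, $|\Im\alpha|<b_0(1-\varepsilon)/2$, $\varepsilon<1/3$, and it implies $|m_u^\Phi+1|\gtrsim 1+x_1^2+\xi_1^2$), whereas the paper reduces \eqref{eq.ellip1} to \eqref{eq.ellip2} via the Shubin-type weights $W_N^w$ and then applies the operator-valued G{\aa}rding inequality to $\mathscr N^*\mathscr N$, whose principal symbol $W_{-4}|m_u^\Phi+1|^2$ is automatically nonnegative. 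The paper's route avoids any positivity claim on $\Re M_0$ and only needs a modulus lower bound, but in this regime the two are equivalent. You correctly flag the two delicate points — uniformity of constants in $R$ and $h$, and the need for the $S_{\frac12,0}$ pseudodifferential calculus from Lemma~\ref{prop.symbolh12} — and your handling of both is consistent with the paper's.
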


\begin{proof}[Proof of Corollary \ref{cor.functionalin}]
We start by establishing, for all $N \geq 0$,
\begin{equation}\label{eq.ellip1}
\| (\mathscr M_u^\Phi+1) f \|_{B^N \otimes L^2} \geq C \| f \|_{B^{N+2} \otimes L^2}. 
\end{equation}
Using the weight $W_N = (1+x_1^2 + \xi_1^2)^{\frac N2}$ and its quantization $W_N^w = \Op_1^{w,1} W_N$, it is enough to prove
\begin{equation}\label{eq.ellip2}
\| W^w_N (\mathscr M_u^\Phi+1) (W^{w}_{N+2})^{-1} g \|_{L^2} \geq C \| g \|_{L^2}. 
\end{equation}

By standard Weyl calculus, using that the symbol $m_u^\Phi$ behaves quadratically in $\xi_1$ and $x_1$, the symbol of the operator $ \mathscr N = W_N^w \mathscr M_u^\Phi (W^{w}_{N+2})^{-1}$ is in $S(\R)$, and the principal part of $\mathscr N^* \mathscr N$ is
\[ | W_N W_{N+2}^{-1} (m_u^\Phi +1) |^2 = W_{-4} \big| m_u^\Phi +1 \big|^2  \geq C.\]
The bound \eqref{eq.ellip2} follows thanks to the G\aa rding inequality for operator-valued symbols (see \cite[Theorem 2.1.18]{Keraval} in the case $\mathscr{A}=\mathscr{B}$), and thus \eqref{eq.ellip1} follows. Note that the involved constant $C$ can be chosen independent of $R$ and $\Phi$.

Now, we use Proposition~\ref{prop.functionalin} and \eqref{eq.ellip1} with $N \geq 2$, and we deduce
\begin{multline*}
 R^2 \| f \|_{L^2} \leq Ch^{-1} \|(\mathscr{M}_u^\Phi - z) f \|_{B^1 \otimes L^2} +CR^2 \|\chi_{R,h}(x_2-c_u)f\|_{L^2}\\ + C \| (\mathscr M_u^\Phi+1) f \|_{B^N \otimes L^2}, 
 \end{multline*}
  and since $z$ is bounded,
\[ R^2 \| f \|_{L^2} \leq Ch^{-1} \|(\mathscr{M}_u^\Phi - z) f \|_{B^N \otimes L^2}  +C R^2\|\chi_{R,h}(x_2-c_u)f\|_{L^2} + C \| f \|_{B^N \otimes L^2}. \]
Using \eqref{eq.ellip1} recursively, we can replace the $B^N$-norm of $f$ by the $L^2$ norm of $f$, up to increasing the constant $C$. Hence,
\[ R^2 \| f \|_{L^2} \leq C_N h^{-1} \|(\mathscr{M}_u^\Phi - z) f \|_{B^N \otimes L^2}  +CR^2\|\chi_{R,h}(x_2-c_u)f\|_{L^2} + C \| f \|_{L^2}.\]
We choose $R^2 >C$, and the result follows for $L=0$. For larger values of $L$, we use \eqref{eq.ellip1} recursively again.
\end{proof}

The rest of this section is devoted to the proof of Proposition \ref{prop.functionalin}.

\subsection{About the principal operator symbol}\label{sec.principalsymbol} We discuss basic properties of $M_0(X_2)$, the principal symbol of $\mathscr M_u^\Phi$ given in Proposition \ref{prop.symbol}.

\subsubsection{A harmonic oscillator with almost real coefficients}\label{sec.harmonic}
The operator $M_0(X_2)$ on $L^2({\mathbb R}_{x_1})$ is not selfadjoint. It has compact resolvent and its (discrete) spectrum can be computed explicitly. The spectrum is given by
\[\mathrm{sp}(M_0(X_2))=\big\{(2n-1)\mathcal B(\xi_2+i\Phi'(x_2),x_2) + \Sigma(x_2)\,,\quad n \in \mathbb N \big\}\,.\]
 These eigenvalues are algebraically simple, in the sense that the corresponding Riesz projector is of rank one.
 To see this, we can write
 \[\begin{split}M_0(X_2)&=(\mathcal B^2+ \alpha^2)D_{x_1}^2+\alpha (x_1 D_{x_1}+D_{x_1}x_1)+x_1^2 + \Sigma\\
 	&=(\mathcal B^2+\alpha^2)\left(D_{x_1}+\frac{\alpha}{\mathcal B^2+\alpha^2}x_1\right)^2+\frac{\mathcal B^2}{\mathcal B^2+\alpha^2} x_1^2 + \Sigma\,,
 \end{split}\]
 and we find a total family of eigenfunctions with the help of the Hermite functions. Note that here the functions $\mathcal{B}$ and $\alpha$ are taken at the point $(\xi_2 + i \Phi'(x_2),x_2)$,
even though we do not write it explicitly to lighten notations.
The normalized eigenfunction of $M_0$ associated with the eigenvalue $\mathcal B(\xi_2 + i \Phi'(x_2),x_2) + \Sigma(x_2)$ is
 \begin{equation}\label{def.FX2}
 \mathsf F_{X_2}(x_1)=C(X_2)\exp \Big( {-\frac{x_1^2}{2(\mathcal B-i \alpha)}} \Big) \,,
 \end{equation}
 where we choose the constant $C(X_2)>0$ such that $\|\mathsf F_{X_2}\|=1$. Note that this normalization is possible because ${\rm Re}\, (\mathcal B+i \alpha)$ stays positive, see Remark \ref{remark.positiverealpart}.

Taking the adjoint of $M_0$ amounts replacing $\mathcal B$ by $\overline{\mathcal B}$ and $\alpha$ by $\overline{\alpha}$. Thus, an eigenfunction of $M_0^*$ associated with $\overline{\mathcal B} + \Sigma$ is 
\begin{equation}\label{def.GX2}
\mathsf G_{X_2}(x_1)=c(X_2)\exp \Big( {-\frac{x_1^2}{2(\overline{\mathcal B}-i\overline{\alpha})}} \Big)\,,
\end{equation}
 where we choose the constant $c(X_2)\in\C\setminus\{0\}$ such that $\langle \mathsf F_{X_2}, \mathsf G_{X_2}\rangle=1$, for later convenience.

 \subsubsection{Inverting the principal operator symbol at fixed $X_2$}\label{sec.invertingX2}
\begin{lemma}\label{lem.Riesz}
The operator $\langle\cdot, \mathsf G_{X_2}\rangle \mathsf F_{X_2}$ is the Riesz projector $\Pi_{X_2}$ of $M_0(X_2)$ associated with the eigenvalue $\mathcal B ( \xi_2 + i \Phi'(x_2),x_2) + \Sigma(x_2)$. Moreover, the operator
\[R_0(z):=(M_0(X_2)-z)^{-1}(\mathrm{Id}-\Pi_{X_2})\,,\]
is a holomorphic function of $z$, for $|z-b_0| < b_0$.
\end{lemma}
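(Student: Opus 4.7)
The plan is to exploit the explicit diagonalization of the non-self-adjoint harmonic oscillator $M_0(X_2)$ in a biorthogonal basis, and then read off both claims from the resulting resolvent expansion. Introducing $P = \mathcal{B} D_{x_1}$ and $Q = x_1 + \alpha D_{x_1}$ (so that $M_0 - \Sigma = P^2 + Q^2$ and $[P,Q] = -i\mathcal{B}$), the standard non-self-adjoint creation/annihilation calculus yields a total family $(\mathsf{F}_{X_2}^{(n)})_{n \geq 1}$ of eigenfunctions of $M_0(X_2)$ at the eigenvalues $(2n-1)\mathcal{B} + \Sigma$, with $\mathsf{F}_{X_2}^{(1)} = \mathsf{F}_{X_2}$; these are Hermite polynomials in $x_1$ times the Gaussian of \eqref{def.FX2}. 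Conjugating the coefficients gives a corresponding family $(\mathsf{G}_{X_2}^{(n)})_{n \geq 1}$ for $M_0(X_2)^*$, with $\mathsf{G}_{X_2}^{(1)} = \mathsf{G}_{X_2}$ as in \eqref{def.GX2}. Biorthogonality $\langle \mathsf{F}_{X_2}^{(m)}, \mathsf{G}_{X_2}^{(n)}\rangle = \delta_{mn}$ follows for $m \neq n$ from computing $\langle M_0 \mathsf{F}^{(m)}, \mathsf{G}^{(n)}\rangle$ in two ways and observing that $(2m-1)\mathcal{B} \neq (2n-1)\mathcal{B}$ since $\Re \mathcal{B} > 0$; the case $m=n=1$ is the normalization used to define $\mathsf{G}_{X_2}$ in \eqref{def.GX2}.

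For the first assertion I would check directly that $\Pi_{X_2} = \langle\cdot, \mathsf{G}_{X_2}\rangle \mathsf{F}_{X_2}$ is idempotent and that $\Pi_{X_2} M_0 = M_0 \Pi_{X_2} = (\mathcal{B} + \Sigma)\Pi_{X_2}$, using the eigenvalue relations $M_0 \mathsf{F}_{X_2} = (\mathcal{B} + \Sigma)\mathsf{F}_{X_2}$, $M_0^* \mathsf{G}_{X_2} = \overline{\mathcal{B} + \Sigma}\,\mathsf{G}_{X_2}$, and $\langle \mathsf{F}_{X_2}, \mathsf{G}_{X_2}\rangle = 1$. Expanding an arbitrary $u$ in the biorthogonal basis then gives
\[ (M_0 - z)^{-1} u = \sum_{n\geq 1} \frac{\langle u, \mathsf{G}_{X_2}^{(n)}\rangle}{(2n-1)\mathcal{B} + \Sigma - z}\, \mathsf{F}_{X_2}^{(n)}, \]
so $\Pi_{X_2}$ is precisely the residue of $(M_0-z)^{-1}$ at $z = \mathcal{B}+\Sigma$, i.e.\ the Riesz projector at the algebraically simple first eigenvalue.

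For the holomorphy of $R_0$, the same expansion yields
\[ R_0(z) u = \sum_{n \geq 2} \frac{\langle u, \mathsf{G}_{X_2}^{(n)}\rangle}{(2n-1)\mathcal{B} + \Sigma - z}\, \mathsf{F}_{X_2}^{(n)}, \]
which is manifestly holomorphic on the complement of $\{(2n-1)\mathcal{B} + \Sigma : n \geq 2\}$. It remains to check that this set does not meet the disk $|z - b_0| < b_0$. Since $\Re \mathcal{B} \geq (1-\varepsilon)b_0$ by Assumption~\ref{hyp.reg} and $\Sigma \geq 0$,
\[ \Re\bigl((2n-1)\mathcal{B} + \Sigma - b_0\bigr) \geq (2n-2)b_0 - (2n-1)\varepsilon b_0, \]
which for $\varepsilon$ small enough already exceeds $b_0$ at $n = 2$, and \emph{a fortiori} for every $n \geq 2$.

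The main potential obstacle is completeness of the biorthogonal basis in $L^2(\R_{x_1})$. Once the eigenfunctions are written as Hermite polynomials times a Gaussian with complex parameter of positive real part (see Remark~\ref{remark.positiverealpart}), this reduces to density of the classical Hermite family, and everything else is pure algebra. Since each claim is pointwise in $X_2$, no uniformity with respect to $X_2$ is needed at this stage.
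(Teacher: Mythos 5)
Your route is genuinely different from the paper's: where the paper writes the Riesz projector and the remainder resolvent as contour integrals and never expands in eigenfunctions, you rely on a full biorthogonal resolvent expansion. The first half of your argument (identifying $\langle\cdot,\mathsf G_{X_2}\rangle\mathsf F_{X_2}$ as the Riesz projector by comparing it with $\Pi_{X_2}$) is fine and close in spirit to the paper, which compares the ranges and kernels of the two rank-one projectors. Your verification that the higher Landau levels stay outside the disk $|z-b_0|<b_0$ is also correct, though you write ``for $\varepsilon$ small enough'' where you actually mean the condition $\varepsilon<1/3$; this is already forced by the standing hypothesis $\|\Sigma\|_\infty<(1-3\varepsilon)b_0$ in Proposition~\ref{prop.functionalin} (and is explicitly invoked in the paper's proof), so you should cite that rather than treat $\varepsilon$ as a free parameter.

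The genuine gap is in the holomorphy step. You write
\[ R_0(z)u = \sum_{n\geq 2}\frac{\langle u,\mathsf G_{X_2}^{(n)}\rangle}{(2n-1)\mathcal B+\Sigma-z}\,\mathsf F_{X_2}^{(n)} \]
and call this ``manifestly holomorphic.'' The obstacle you flag — completeness of the family — is not the right one, and dismissing it does not help: density of $\{\mathsf F_{X_2}^{(n)}\}$ does not imply that the expansion converges in $L^2$, let alone that the series defines a holomorphic $\mathcal L(L^2)$-valued function. The point is that $M_0(X_2)$ is a \emph{non-self-adjoint} (complex) harmonic oscillator whenever $\Phi'\neq 0$: the Hermite-type eigenfunctions are not a Riesz basis, and the norms of the rank-one spectral projectors $\langle\cdot,\mathsf G^{(n)}\rangle\mathsf F^{(n)}$ grow \emph{exponentially} in $n$ (this is the Davies–Kuijlaars phenomenon for $D_x^2+cx^2$ with non-real $c$). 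Since the denominators grow only linearly, the series for $(M_0-z)^{-1}$ cannot converge in operator norm, and ``holomorphy by inspection of the series'' is not available. This is precisely why the paper's proof bypasses the eigenfunction expansion entirely and instead sets
\[ R_0(z)=\frac{1}{2i\pi}\int_{\mathscr C}(\zeta-z)^{-1}(M_0(X_2)-\zeta)^{-1}\,\mathrm d\zeta, \]
from which holomorphy on the interior of $\mathscr C$ is immediate. Your argument could be repaired by replacing the series with this contour-integral representation, or, more simply, by observing that $(M_0-z)^{-1}(\mathrm{Id}-\Pi_{X_2})$ has a removable singularity at $z=\mathcal B+\Sigma$ because the principal part of the resolvent's Laurent expansion there equals $\Pi_{X_2}/(\mathcal B+\Sigma-z)$ and is killed by $(\mathrm{Id}-\Pi_{X_2})$; but as written the step is not justified.
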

 \begin{proof}
The Riesz projector is
 \[\Pi_{X_2}=\frac{1}{2i\pi}\int_{\mathscr C}(\zeta-M_0(X_2))^{-1}\mathrm{d}\zeta\]
 where $\mathscr C$ is a contour encircling the isolated eigenvalue $\mathcal B+ \Sigma$. Then, for all $z\neq \mathcal B + \Sigma$ inside $\mathscr C$ we have
 \[\begin{split}
 	\mathrm{Id}-\Pi_{X_2}&=\frac{1}{2i\pi}\int_{\mathscr C}\left((\zeta-z)^{-1}-(\zeta-M_0(X_2))^{-1}\right)\mathrm{d}\zeta\\
 	&=\frac{1}{2i\pi}\int_{\mathscr C}(\zeta-z)^{-1}(z - M_0(X_2))(\zeta-M_0(X_2))^{-1}\mathrm{d}\zeta
 	\end{split}\]
 so that
 \begin{equation}\label{eq.R0}
 (M_0(X_2)-z)^{-1}(\mathrm{Id}-\Pi_{X_2})=\frac{1}{2i\pi}\int_{\mathscr C}(\zeta-z)^{-1}(M_0(X_2)-\zeta)^{-1}\mathrm{d}\zeta
  \end{equation}
 is holomorphic inside $\mathscr C$, including at $z=\mathcal B + \Sigma$.
 
 Since $\langle \mathsf F_{X_2}, \mathsf G_{X_2}\rangle=1$, the operator $\langle\cdot, \mathsf G_{X_2}\rangle \mathsf F_{X_2}$ is a projector. It has the same range as $\Pi_{X_2}$, \emph{i.e.} $\mathrm{span}\, \mathsf F_{X_2}$. Moreover, it has the same kernel since $\ker \Pi_{X_2}=(\mathrm{ran}\, \Pi_{X_2}^*)^{\perp}=(\mathrm{span}\, \mathsf G_{X_2})^\perp$. Therefore $\Pi_{X_2} = \langle\cdot, \mathsf G_{X_2}\rangle \mathsf F_{X_2}$.
 
 Finally, note that we can choose $\mathscr C$ to be the circle of center $b_0$ and radius $b_0$, as soon as $\|\Sigma \|_{\infty} < (1-3\varepsilon)b_0$. Indeed, this path encircles $\mathcal B + \Sigma$ because
 \[ | \mathcal B + \Sigma -b_0 | \leq \varepsilon b_0 +|\Sigma| < b_0, \]
 and it does not encircles $(2n+1) \mathcal B + \Sigma$ for $n\geq 1$ since
\begin{align*}
 | (2n+1) \mathcal B + \Sigma - b_0 |&\geq 2n b_0 - |\Sigma| - (2n+1) | \mathcal B - b_0 | \\ &> (2n(1-\varepsilon) - \varepsilon - 1 + 3 \varepsilon) b_0 \geq b_0. 
 \end{align*}
\end{proof}
 
 The functions $\mathsf F_{X_2}$ and $\mathsf G_{X_2}$ are convenient for the construction of an augmented matrix of $M_0(X_2)-z$, which is bijective when $z$ is close to the first eigenvalue $\mathcal{B}(\xi_2+i\Phi'(x_2),x_2) + \Sigma(x_2)$ of $M_0(X_2)$. 
 \begin{proposition}\label{prop.P0z}
 	We let
 	\[\mathsf{P}_{0}(X_2)=\begin{pmatrix}
 		M_0(X_2)-z&\cdot \mathsf F_{X_2}\\
 		\langle\cdot, \mathsf G_{X_2}\rangle&0
 	\end{pmatrix} : B^2(\R)\oplus\C\to L^2(\R)\oplus \C\,.\]
If $|z-b_0| < b_0$, then $\mathsf{P}_{0}(X_2)$ is bijective and its inverse is given by
\begin{equation}\label{eq.P0-1}
\mathsf{P}_{0}(X_2)^{-1}=\begin{pmatrix}
R_0(z)&\cdot \mathsf F_{X_2}\\
\langle\cdot, \mathsf G_{X_2}\rangle&z- \mathcal B - \Sigma
\end{pmatrix}\,.
\end{equation}
Moreover, $\mathsf P_{0}(X_2)$ and $\mathsf P_{0}(X_2)^{-1}$ are bounded symbols.
 \end{proposition}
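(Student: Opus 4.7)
The plan is to first verify the algebraic identity for the inverse by direct block-matrix computation, and then check that both matrices satisfy the uniform symbol estimates.

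For the algebraic part, I would compute both products $\mathsf{P}_0(X_2) \cdot \mathsf{P}_0(X_2)^{-1}$ and $\mathsf{P}_0(X_2)^{-1} \cdot \mathsf{P}_0(X_2)$ and check that each block of each product reduces to the identity or zero. The facts needed, all following from the definitions of $\mathsf{F}_{X_2}, \mathsf{G}_{X_2}$ and Lemma~\ref{lem.Riesz}, are: (i) $M_0 \mathsf{F}_{X_2} = (\mathcal B + \Sigma)\mathsf{F}_{X_2}$, together with the dual identity $\langle M_0 u, \mathsf{G}_{X_2}\rangle = (\mathcal B + \Sigma)\langle u, \mathsf{G}_{X_2}\rangle$, which holds because $\mathsf{G}_{X_2}$ is an eigenfunction of $M_0^*$ for $\overline{\mathcal B} + \Sigma$; (ii) the normalization $\langle \mathsf{F}_{X_2},\mathsf{G}_{X_2}\rangle = 1$, which identifies $\Pi_{X_2}$ with $\langle\cdot,\mathsf{G}_{X_2}\rangle \mathsf{F}_{X_2}$; (iii) $R_0(z)\mathsf{F}_{X_2} = 0$ and $\langle R_0(z) u,\mathsf{G}_{X_2}\rangle = 0$, since $R_0(z)$ commutes with $\Pi_{X_2}$ by the contour-integral formula \eqref{eq.R0}, together with $(M_0 - z) R_0(z) = R_0(z)(M_0 - z) = \mathrm{Id} - \Pi_{X_2}$. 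Plugging these into the four blocks of the matrix product, the off-diagonal terms cancel against the $(z - \mathcal B - \Sigma)$ entry, and the diagonal blocks recombine to $\mathrm{Id} - \Pi_{X_2} + \Pi_{X_2} = \mathrm{Id}$, giving the claimed inverse.

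For the symbol bounds, I would observe that the $X_2$-dependence of every entry of $\mathsf{P}_0$ enters only through the smooth functions $\mathcal B(\xi_2 + i\Phi'(x_2), x_2)$, $\alpha(\xi_2 + i\Phi'(x_2), x_2)$, and $\Sigma(x_2)$, whose derivatives of all orders are bounded uniformly by Lemma~\ref{lem.analytic2} and the assumption $\Phi' \in S(\R)$. Since $\mathrm{Re}(\mathcal B - i\alpha)$ is bounded below away from zero by Remark~\ref{remark.positiverealpart}, differentiating the Gaussian expressions \eqref{def.FX2}--\eqref{def.GX2} in $X_2$ produces polynomials in $x_1$ times the same Gaussian, so all $X_2$-derivatives of $\mathsf{F}_{X_2}$ and $\mathsf{G}_{X_2}$ lie in every $B^k(\R)$ uniformly in $X_2$. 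This handles $\mathsf{P}_0$ itself.

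The only nontrivial block of $\mathsf{P}_0^{-1}$ is $R_0(z)$. I would use the representation \eqref{eq.R0} along the fixed contour $\mathscr C = \partial D(b_0, b_0)$ already used in Lemma~\ref{lem.Riesz}: under $\|\Sigma\|_\infty < (1 - 3\varepsilon) b_0$ and $|\mathcal B - b_0| \leq \varepsilon b_0$, this contour stays at positive distance from $\mathrm{sp}(M_0(X_2))$ uniformly in $X_2 \in \R^2$, and the resolvent $(M_0(X_2) - \zeta)^{-1}$ is bounded $L^2(\R) \to B^2(\R)$ uniformly in $(\zeta, X_2)$ by the harmonic-oscillator structure of Section~\ref{sec.harmonic}. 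Differentiating via $\partial_{X_2}(M_0 - \zeta)^{-1} = -(M_0 - \zeta)^{-1}(\partial_{X_2} M_0)(M_0 - \zeta)^{-1}$, the factor $\partial_{X_2} M_0$ is a second-order differential operator in $x_1$ with bounded coefficients, hence bounded $B^2 \to L^2$; iterating this yields uniform bounds on all $X_2$-derivatives of $R_0(z)$ as operators $L^2 \to B^2$, which is the desired symbol estimate. The only real subtlety is ensuring that $\mathscr C$ genuinely lies in the resolvent set uniformly, but this is exactly the reason for the smallness assumptions on $\varepsilon$ and $\Sigma$ and was already checked in Lemma~\ref{lem.Riesz}.
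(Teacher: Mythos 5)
Your proposal is correct and takes essentially the same approach as the paper: verify the inverse formula by block-matrix multiplication using the eigenfunction relations for $\mathsf F_{X_2}$, $\mathsf G_{X_2}$ and the properties of $R_0(z)$ and $\Pi_{X_2}$, and then obtain the uniform symbol bounds by observing that $\mathsf F_{X_2}$, $\mathsf G_{X_2}$ are Gaussians with parameters in $S(\R^2)$ and that $R_0(z)$ is a contour integral over an $X_2$-independent path lying uniformly in the resolvent set. The paper compresses the algebra into ``one easily checks'' and leaves the derivative estimates at the level of ``similarly the $X_2$-derivatives are also bounded,'' whereas you spell out the needed identities $R_0(z)\mathsf F_{X_2}=0$, $\Pi_{X_2} R_0(z)=0$ and the resolvent-identity step $\partial_{X_2}(M_0-\zeta)^{-1}=-(M_0-\zeta)^{-1}(\partial_{X_2}M_0)(M_0-\zeta)^{-1}$, but these are precisely the ingredients the paper is relying on implicitly.
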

 \begin{proof}
 	One easily checks the formula for the inverse by taking the product of the two matrices. Clearly, $M_0(X_2)$, $\mathsf F_{X_2}$ and $\mathsf G_{X_2}$ are in the class of bounded symbols. It only remains to estimate $R_0(z)$ as function of $X_2$. $R_0(z)$ is given by formula \eqref{eq.R0}, where the contour $\mathscr C$ is independent of $X_2$. Therefore, it is enough to notice that $X_2 \mapsto (M_0(X_2) - \zeta)^{-1}$ is a smooth bounded function of $X_2$ with values in $\mathcal L(L^2(\R),B^2(\R))$, which depends continuously on $\zeta$. Similarly the $X_2$-derivatives are also bounded and therefore $\mathsf P_{0}(X_2)^{-1}$ belongs to the class
 	\[S(\R^2,\mathcal L( L^2(\R) \oplus \C, B^2(\R)\oplus \C)). \]
 	\end{proof}
 
 \subsection{Approximate parametrix}\label{sec.para}
We let $\mathscr{F}=\Op^w_h(\cdot\mathsf F_{X_2})$ and $\mathscr{G}=\Op^w_h(\langle \cdot , \mathsf G_{{X_2}}\rangle)$, and we consider
\[\mathscr{P}:=\begin{pmatrix}
	\mathscr{M}_u^\Phi-z&	\mathscr{F}\\
	\mathscr{G}	&0
\end{pmatrix}\,,\]
which is a pseudodifferential operator with principal symbol $\mathsf P_{0,z}(X_2)$. Using Proposition \ref{prop.P0z}, we will construct a parametrix for $\mathscr{P}$.  The existence of this explicit approximate inverse will give coercivity properties of $\mathscr{M}_u^\Phi$, eventually leading to the bound stated in Proposition \ref{prop.functionalin}. We define the subprincipal terms of $\mathscr P$ by 
\[\mathsf{P}_j(X_2)=\begin{pmatrix}
	M_j(X_2)&0	\\
	0&0
\end{pmatrix}\quad\mathrm{ and }\quad \mathscr{P}_j=\mathsf{P}_j^w\,,\]
where the subprincipal terms $M_j(X_2)$ of $\mathscr{M}_h^\Phi$ are defined in Proposition \ref{prop.symbol}.

\begin{lemma}\label{lem.expPh} For all $z \in \C$,
\begin{enumerate}
\item When $\Phi\in S(\R)$, for all $J\in\N$ we have the expansion
\[\mathscr{P}=\mathscr{P}_{0}+h^{\frac12}\mathscr{P}_1+h\mathscr{P}_2+\ldots+h^{\frac{J}{2}}\mathscr{P}_J+h^{\frac{J+1}{2}}\mathscr{R}_J\,,\]
where $\| \mathscr{R}_J \|_{( B^{J+3} \otimes L^2) \oplus L^2 \to (L^2 \otimes L^2) \oplus L^2}$ is uniformly bounded with respect to $h$.
\item When $\Phi'\in S(\R)+R h^{\frac12}S_{\frac12}(\R)$, we have
\[\mathscr{P}=\mathscr{P}_{0}+h^{\frac 12} \mathscr P_1 + h\mathscr{R}\,,\]
where $\| \mathscr{R} \|_{(B^4 \otimes L^2) \oplus L^2 \to (L^2 \otimes L^2) \oplus L^2}$ is uniformly bounded with respect to $h \in (0,1)$ and $R \in [1,\infty)$.
\end{enumerate}
	\end{lemma}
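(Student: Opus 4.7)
The statement is a direct translation of Proposition \ref{prop.symbol} and Lemma \ref{prop.symbolh12} into block-matrix form. I first observe that only the $(1,1)$-entry of $\mathscr{P}$ depends on $\Phi$: the operators $\mathscr{F} = \Op^w_h(\cdot\,\mathsf{F}_{X_2})$ and $\mathscr{G} = \Op^w_h(\langle \cdot, \mathsf{G}_{X_2}\rangle)$ are built from $\Phi$-independent symbols and are, by the very definition of $\mathsf{P}_0(X_2)$, already sitting inside the principal operator $\mathscr{P}_0$, while the $(2,2)$-block vanishes identically. Consequently the $h^{1/2}$-expansion of $\mathscr{P}$ is inherited entirely from that of $\mathscr{M}_u^\Phi - z$, with each subprincipal symbol $M_j(X_2)$ occupying the $(1,1)$-slot of $\mathsf{P}_j(X_2)$, exactly as prescribed by the definition stated just above the lemma.

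For part (1), Proposition \ref{prop.symbol} directly provides
\[ \mathscr{M}_u^\Phi = \Op_h^{w,2}\bigl(M_0 + h^{1/2} M_1 + \cdots + h^{J/2} M_J\bigr) + h^{(J+1)/2} \mathscr{R}_{J,h}, \]
with the symbol of $\mathscr{R}_{J,h}$ in $S(\R^2, \mathcal{L}(B^{J+3}(\R), L^2(\R)))$. I define $\mathscr{R}_J$ as the $2\times 2$ block operator with $\mathscr{R}_{J,h}$ in the upper-left slot and zeros elsewhere; the expansion of $\mathscr{P}$ claimed in the statement then follows at once. The operator-valued Calderón-Vaillancourt theorem (see for instance \cite{Keraval}) bounds $\mathscr{R}_{J,h}$ from $L^2(\R_{x_2}, B^{J+3}(\R_{x_1}))$ to $L^2(\R_{x_2}, L^2(\R_{x_1}))$ in terms of finitely many seminorms of its symbol, uniformly in $h$. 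Extending trivially by zero on the additional $\C$-factor yields the uniform bound for $\mathscr{R}_J$.

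Part (2) has the same structure, with Lemma \ref{prop.symbolh12} replacing Proposition \ref{prop.symbol}: only $M_0$ and $M_1$ appear in the main part, and the remainder is $h \mathscr{R}_h$ whose symbol lies in the exotic class $S_{1/2,0}(\R^2, \mathcal{L}(B^4(\R), L^2(\R)))$. To obtain the uniform bound jointly in $h \in (0,1)$ and $R \in [1,\infty)$, I would revisit the Taylor expansion of $m_u^\Phi$ in the variables $h^{1/2}(x_1,\xi_1)$ underlying the proof of Proposition \ref{prop.symbol} and carefully track the $R$-dependence introduced by $\Phi' \in S(\R) + R h^{1/2} S_{1/2}(\R)$. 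The mechanism is that the prefactor $R h^{1/2}$ on the $S_{1/2}$-component of $\Phi'$ is absorbed precisely into $h^{1/2} M_1$, so that the residual contributions to the symbol of $\mathscr{R}_h$ come either from the $S(\R)$-component of $\Phi'$, where no $h$- or $R$-loss occurs, or from strictly higher-order Taylor remainders, where the extra $h$-gains dominate the polynomial $R$-growth that repeated $x_2$-differentiation may produce. The $S_{1/2,0}$-version of the Calderón-Vaillancourt theorem, again involving only finitely many seminorms, then concludes.

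The only genuine obstacle is the $R$-bookkeeping in part (2); the block-matrix structure is otherwise an immediate repackaging of the cited symbol expansions.
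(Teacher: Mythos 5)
Your proposal reproduces the paper's short proof: cite Proposition~\ref{prop.symbol} (respectively Lemma~\ref{prop.symbolh12}), notice the expansion is carried by the $(1,1)$-block, and invoke the Calder\'on--Vaillancourt theorem to bound the remainder uniformly. This is exactly what the authors do. However, your opening observation contains a factual error worth flagging: the symbols $\mathsf F_{X_2}$ and $\mathsf G_{X_2}$ are \emph{not} $\Phi$-independent. As the paper notes just after \eqref{def.FX2}, the functions $\mathcal B$ and $\alpha$ appearing in those Gaussians are evaluated at $(\xi_2 + i\Phi'(x_2),x_2)$, so $\mathscr F$ and $\mathscr G$ do depend on $\Phi$. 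What saves the argument is the part of your sentence that is correct --- namely that $\mathscr F$ and $\mathscr G$ are, by definition, \emph{precisely} the off-diagonal entries of $\mathscr P_0 = \Op_h^{w,2}\mathsf P_0$, so $\mathscr P - \mathscr P_0$ is supported in the $(1,1)$-block regardless of how $\Phi$ enters. The conclusion that the expansion of $\mathscr P$ reduces to that of $\mathscr M_u^\Phi - z$ therefore stands, but you should drop the claim of $\Phi$-independence.

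Your explanation of the $R$-uniformity in part (2) is also somewhat off-target, though not wrong in its conclusion. There is no ``absorption'' of the prefactor $Rh^{1/2}$ into $h^{1/2}M_1$: the dependence on $\Phi'$ (hence on $R$) is already baked into every $M_j$ through the argument $\xi_2 + i\Phi'(x_2)$, not through the Taylor order in $X_1$. Moreover, repeated $x_2$-differentiation of the $R h^{1/2} S_{1/2}$-component of $\Phi'$ does not produce polynomial $R$-growth --- quite the opposite: since the seminorms of $\Phi'/(R h^{1/2})$ are uniformly bounded in $R$, the $n$-th derivative of that component is $\mathscr O(R^{1-n}h^{(1-n)/2})$, which for $R \geq 1$ and $n \geq 1$ is $\mathscr O(h^{(1-n)/2})$ uniformly in $R$. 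This is the observation the paper records in one line: the uniform-in-$R$ seminorm bounds on $\Phi'$ force $m_u^\Phi$ into $S_{\frac12,0}$ with $R$-uniform seminorms, and then Calder\'on--Vaillancourt gives operator bounds depending on finitely many of those seminorms. Your final sentence reaches the same endpoint, but the intermediate heuristic would be worth tightening.
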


\begin{proof}
This result follows from  Proposition \ref{prop.symbol} and Lemma \ref{prop.symbolh12}, using the Calder\'on-Vaillancourt theorem (see \cite[Théorème 2.1.16]{Keraval}) to estimate the remainders. Note that, since the semi-norms of $\Phi'$ are uniformly bounded with respect to $R \in [1,\infty)$, the symbols involved belong to the class $S_{\frac 12, 0}(\R^2)$ uniformly with respect to $R$. Thus the estimates are independent of $R$. 
\end{proof}

\begin{proposition}\label{prop.parametrix}
	Assume $|z-b_0|<b_0$. Then there exists a sequence $(\mathsf{Q}_j)_{j\geq 0}$ of symbols in $S(\R^2,\mathcal L(B^j \oplus \C, B^2 \oplus \C))$ such that the following holds. Let $\mathscr{Q}_j=\Op_h^{w,2} \mathsf{Q}_j$, and for $J \in \N$, 
	\[\mathscr{Q}^{[J]} = \mathscr{Q}_0 + h^{\frac 12} \mathscr{Q}_1 + \ldots + h^{\frac J2} \mathscr{Q}_J. \]
\begin{enumerate}[\rm (i)]
	\item\label{eq.i} When $\Phi\in S(\R)$, we have for all $J \in \N$,
	\begin{equation}\label{eq.parametrixright}
	\mathscr{P}\mathscr{Q}^{[J]}=\mathrm{Id}+h^{\frac{J+1}{2}}\mathscr{R}_J\,,
	\end{equation}
	where $\mathscr R_J$ is bounded in $\mathcal L((B^{J+1} \otimes L^2) \oplus L^2,(L^2 \otimes L^2)\oplus L^2)$.
	\item When $\Phi'\in S(\R)+Rh^{\frac12}S_{\frac12}(\R)$, we have
	\begin{equation}\label{eq.parametrixleft}
	\mathscr{Q}^{[1]} \mathscr{P} =\mathrm{Id}+h\mathscr{R}\,,
	\end{equation}
	where $\mathscr{R}$ is bounded in $\mathcal L((B^4 \otimes L^2) \oplus L^2),(B^2 \otimes L^2)\oplus L^2)$.
	\end{enumerate}
	Moreover, we can choose $\mathsf{Q}_{0}=\mathsf{P}^{-1}_{0}$, $\mathsf{Q}_1=-\mathsf{Q}_0\mathsf{P}_1\mathsf{Q}_0$, and
		\begin{equation}\label{Q2eq}
		\mathsf{Q}_2=-\mathsf{Q}_0\mathsf{P}_2\mathsf{Q}_0+\mathsf{Q}_0\mathsf{P}_1\mathsf{Q}_0\mathsf{P}_1\mathsf{Q}_0-\frac{1}{2i}\{\mathsf{Q}_0,\mathsf{P}_0\}\mathsf{Q}_0\,,
		\end{equation}
		where we recall formula \eqref{eq.P0-1} for the inverse of $\mathsf P_0$.
	\end{proposition}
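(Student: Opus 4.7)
The plan is to construct the symbols $\mathsf{Q}_j$ recursively by matching powers of $h^{1/2}$ in the operator-valued Moyal composition $\mathscr{P} \# \mathscr{Q}^{[J]}$. The cornerstone is Proposition~\ref{prop.P0z}: it provides a two-sided symbolic inverse $\mathsf{Q}_0 = \mathsf{P}_0^{-1}$ in a bounded class, so that the transport-type equation arising at each order can be solved by left multiplication by $\mathsf{Q}_0$.

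Concretely, I plug the expansion $\mathscr{P} \sim \sum_j h^{j/2} \mathsf{P}_j$ from Lemma~\ref{lem.expPh}(1) and the Ansatz $\mathscr{Q}^{[J]} = \sum_{j=0}^J h^{j/2} \mathscr{Q}_j$ into the standard operator-valued Weyl composition
\[
a \# b \sim \sum_{k \geq 0} \frac{1}{k!} \Big( \frac{h}{2i} \Big)^k A_k(a,b), \qquad A_1(a,b) = \{a,b\},
\]
where $\{\cdot,\cdot\}$ is the (non-antisymmetric) operator-valued Poisson bracket in the variables $X_2 = (x_2,\xi_2)$. Collecting by half-integer powers of $h$, the first equations read
\[
\begin{aligned}
h^0 : \quad & \mathsf{P}_0 \mathsf{Q}_0 = \mathrm{Id}, \\
h^{1/2} : \quad & \mathsf{P}_0 \mathsf{Q}_1 + \mathsf{P}_1 \mathsf{Q}_0 = 0, \\
h : \quad & \mathsf{P}_0 \mathsf{Q}_2 + \mathsf{P}_1 \mathsf{Q}_1 + \mathsf{P}_2 \mathsf{Q}_0 + \tfrac{1}{2i}\{\mathsf{P}_0, \mathsf{Q}_0\} = 0.
\end{aligned}
\]
The first two give $\mathsf{Q}_0 = \mathsf{P}_0^{-1}$ (by Proposition~\ref{prop.P0z}) and $\mathsf{Q}_1 = -\mathsf{Q}_0 \mathsf{P}_1 \mathsf{Q}_0$; solving the third and using the identity $\mathsf{Q}_0\{\mathsf{P}_0,\mathsf{Q}_0\} = \{\mathsf{Q}_0,\mathsf{P}_0\}\mathsf{Q}_0$---itself obtained by differentiating $\mathsf{P}_0\mathsf{Q}_0 = \mathrm{Id}$ in $X_2$---gives exactly \eqref{Q2eq}. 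The general inductive step is identical.

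For the symbol class claim $\mathsf{Q}_j \in S(\R^2, \mathcal{L}(B^j \oplus \C, B^2 \oplus \C))$, I note that by Proposition~\ref{prop.symbol} each subprincipal symbol $\mathsf{P}_i$ is a polynomial of degree at most $i+2$ in $X_1 = (x_1,\xi_1)$, hence $\mathsf{P}_i^w$ maps $B^{k+i+2} \to B^k$; and by elliptic regularity for the harmonic-oscillator-type symbol $\mathsf{P}_0$, the inverse from Proposition~\ref{prop.P0z} maps $\mathsf{Q}_0 : B^k \oplus \C \to B^{k+2} \oplus \C$ for every $k \geq 0$. A composition count in the recursive formulas then yields the required $X_1$-regularity for each $\mathsf{Q}_j$, while the uniform $X_2$-derivative bounds descend from those of $\mathsf{Q}_0$. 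The remainder in \eqref{eq.parametrixright} has two origins: the Moyal tail beyond order $J/2$, bounded in $\mathcal{L}(B^{J+1}\otimes L^2, L^2\otimes L^2)$ via the operator-valued Calderón--Vaillancourt theorem (cf.~\cite{Keraval}), and the truncation of $\mathscr{P}$ from Lemma~\ref{lem.expPh}(1).

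For the left parametrix \eqref{eq.parametrixleft}, the same $\mathsf{Q}_0, \mathsf{Q}_1$ work because the inverse of Proposition~\ref{prop.P0z} is two-sided and because $\mathsf{Q}_0\mathsf{P}_1 + \mathsf{Q}_1\mathsf{P}_0 = \mathsf{Q}_0\mathsf{P}_1 - \mathsf{Q}_0\mathsf{P}_1\mathsf{Q}_0\mathsf{P}_0 = 0$. The $O(h)$ error is controlled by Lemma~\ref{lem.expPh}(2) combined with the Moyal expansion in the class $S_{1/2,0}(\R^2)$, in which every $x_2$-derivative costs a factor $h^{-1/2}$, so that the $k$-th Moyal correction remains of size $h^{k/2}$ and the first nontrivial one lands precisely at the target order $h$. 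The main technical obstacle is the simultaneous management of three bookkeeping indices---half-integer semiclassical orders, polynomial degree in $X_1$ of the subprincipal symbols, and the $h^{-1/2}$-loss per $x_2$-derivative in the $S_{1/2,0}$ calculus---but these are handled by the standard operator-valued pseudodifferential calculus of~\cite{Keraval}, with all inversions reducing to Proposition~\ref{prop.P0z}.
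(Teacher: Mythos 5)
Your proof follows the same overall strategy as the paper: expand the operator-valued Moyal product in half-integer powers of $h$, invert $\mathsf{P}_0$ via Proposition~\ref{prop.P0z}, and solve the resulting hierarchy of equations recursively. The composition count for the symbol classes $S(\R^2,\mathcal L(B^j\oplus\C, B^2\oplus\C))$ and the identity $\mathsf{Q}_0\{\mathsf{P}_0,\mathsf{Q}_0\}=\{\mathsf{Q}_0,\mathsf{P}_0\}\mathsf{Q}_0$ (obtained by differentiating $\mathsf{P}_0\mathsf{Q}_0=\mathrm{Id}$) are correct and make explicit what the paper leaves implicit. One stylistic difference: you set up the recursion from the right-parametrix equations $\mathsf{P}\#\mathsf{Q}\sim\mathrm{Id}$ and then verify the left-parametrix to the needed order, whereas the paper derives the $\mathsf{Q}_j$ from $\mathsf{Q}\#\mathsf{P}\sim\mathrm{Id}$ and says the other direction is analogous; both routes give the same formulas.

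There is one subtle point in your argument for part (ii) that, taken literally, does not close. You assert that in $S_{1/2,0}$ ``every $x_2$-derivative costs a factor $h^{-1/2}$, so that the $k$-th Moyal correction remains of size $h^{k/2}$.'' With that counting, the first Moyal correction $\frac{h}{2i}\{\mathsf{Q}_0,\mathsf{P}_0\}$ would be $\mathscr{O}(h^{1/2})$, not $\mathscr{O}(h)$, and $\mathscr{Q}^{[1]}\mathscr{P}-\mathrm{Id}$ would only be $\mathscr{O}(h^{1/2})$. The hypothesis $\Phi'\in S(\R)+Rh^{1/2}S_{1/2}(\R)$ is chosen precisely so that this does not happen: it guarantees $\Phi''=\mathscr{O}(1)$, hence the \emph{first} $x_2$-derivatives of $\mathsf{P}_0$ and $\mathsf{Q}_0$ are uniformly bounded, and only from the second derivative onward does the $h^{-1/2}$ loss appear. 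With the corrected counting the Poisson-bracket term is genuinely $\mathscr{O}(h)$ and the $k\geq 2$ corrections are $\mathscr{O}(h^{3/2})$ or better, which is what makes \eqref{eq.parametrixleft} hold. You should state this refinement explicitly, since the naive $S_{1/2,0}$ counting you quote is exactly the one the mixed class is designed to improve upon.
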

	
\begin{proof}
If $\mathsf{P}$ and $\mathsf{Q}$ denote the symbol of $\mathscr{P}$ and $\mathscr{Q}$ respectively, then the symbol of $\mathscr{Q}\mathscr{P}$ is given by the Moyal product (see the composition theorem in \cite[Theorem 2.1.12]{Keraval}),
	\begin{align*}
		 \mathsf Q \star \mathsf P &=  \mathsf Q \mathsf P + \frac{h}{2i}\lbrace \mathsf Q, \mathsf P \rbrace + \mathscr{O}(h^2)\\
		 &= \mathsf Q_0 \mathsf P_0 + h^{\frac 12} (\mathsf Q_0 \mathsf P_1 +\mathsf Q_1 \mathsf P_0)+ h \Big( \mathsf Q_2 \mathsf P_0 + \mathsf Q_1 \mathsf P_1 + \mathsf Q_0 \mathsf P_2 +  \frac{1}{2i} \lbrace \mathsf Q_0, \mathsf P_0 \rbrace \Big)  \\ &\qquad + \mathscr{O}(h^{\frac32})\,.
	\end{align*}
We obtain an expansion in powers of $h^{\frac 12}$. We want the first coefficient to equal the identity, and therefore $\mathsf Q_0 = \mathsf P_0^{-1}$. We want all other coefficients to vanish. This leads to a sequence of equations that can be solved recursively, to find the values of $\mathsf{Q}_j$. The first two equations give the claimed formulas for $\mathsf Q_1$ and $\mathsf Q_2$. Note that $\mathsf Q_j$ belongs to the symbol class $S(\R^2, \mathcal L(B^{j} \oplus \C,B^{2} \oplus \C))$. The final error is controlled by using Lemma \ref{lem.expPh}. The estimate of $\mathscr{P}\mathscr{Q}$ follows from the same arguments.
\end{proof}

\begin{remark}
As mentionned above, the expressions of the $\mathsf{Q}_j$ are the same as in \cite{MRVN23}. Note however that, in the context of \cite{MRVN23}, it is stated that $(\{\mathsf{Q}_0,\mathsf{P}_0\}\mathsf{Q}_0)_{\pm}=0$, which follows from an unfortunate mistake. To correct this, let us compute its explicit value in general. By writing that $\{\mathsf{Q}_0,\mathsf{P}_0\}\mathsf{Q}_0=\partial_{\xi_2}\mathsf{Q}_0\partial_{x_2}\mathsf{P}_0\mathsf{Q}_0-\partial_{x_2}\mathsf{Q}_0\partial_{\xi_2}\mathsf{P}_0\mathsf{Q}_0$ and by computing the product of matrices, we get
\[(\partial_{\xi_2}\mathsf{Q}_0\partial_{x_2}\mathsf{P}_0\mathsf{Q}_0)_\pm=\langle\partial_{x_2}\mathsf{P}_0 \mathsf F,\partial_{\xi_2} \mathsf G \rangle+\mathsf{Q}_0^\pm\langle\partial_{x_2} \mathsf F,\partial_{\xi_2} \mathsf G \rangle+\partial_{\xi_2}\mathsf{Q}^\pm_0\langle  \mathsf F,\partial_{x_2} \mathsf G\rangle\,.\]
Then, we notice that $(\mathsf{P}_0-E)\partial_{x_2}  \mathsf F=\partial_{x_2}E  \mathsf F-\partial_{x_2}\mathsf{P}_0  \mathsf F$. This implies that
\[(\partial_{\xi_2}\mathsf{Q}_0\partial_{x_2}\mathsf{P}_0\mathsf{Q}_0)_\pm=-\langle(\mathsf{P}_0-z)\partial_{x_2}  \mathsf F,\partial_{\xi_2} \mathsf G\rangle+\partial_{x_2}E\langle  \mathsf F,\partial_{\xi_2} \mathsf G \rangle-\partial_{\xi_2}E\langle  \mathsf F,\partial_{x_2}  \mathsf G \rangle\,.\]
By exchanging the roles of $x_2$ and $\xi_2$, we find that
\begin{multline*}
	(\{\mathsf{Q}_0,\mathsf{P}_0\}\mathsf{Q}_0)_{\pm}=-\langle(\mathsf{P}_0-z)\partial_{x_2}  \mathsf F,\partial_{\xi_2} \mathsf G\rangle+\langle(\mathsf{P}_0-z)\partial_{\xi_2}  \mathsf F,\partial_{x_2} \mathsf G\rangle\\
	+2\partial_{x_2}E\langle  \mathsf F,\partial_{\xi_2} \mathsf G\rangle-2\partial_{\xi_2}E\langle  \mathsf F,\partial_{x_2} \mathsf G\rangle\,.
	\end{multline*}
In \cite {MRVN23}, the correct expression of the effective symbol is
\begin{equation*}
	\begin{split}
	\mu^{\mathsf{eff}}_h(X_2) =& \rond B (X_2) + \rond{V}(X_2) \\
	&+ h \left( \langle \mathsf{P}_2(X_2) f_{X_2}, f_{X_2} \rangle - \langle \mathsf{P}_1(X_2) (\mathsf{P}_0(X_2) - z)^{-1} \Pi^{\perp} \mathsf{P}_1(X_2) f_{X_2}, f_{X_2} \rangle \right) \\
	&+\frac{h}{2i}	(\{\mathsf{Q}_0,\mathsf{P}_0\}\mathsf{Q}_0)_{\pm}\,.
\end{split}
\end{equation*}
\end{remark}

\subsection{Implications of the existence of the left inverse}

We deduce from Proposition \ref{prop.parametrix} a lower bound on $\mathscr{M}_u^\Phi$.

\begin{corollary}\label{cor.coercivity}
Let $C_0>0$ and consider $\Phi'\in S(\R)+Rh^{\frac12}S_{\frac12}(\R)$. Then there is $C>0$ such that, for $R\geq 1$, $h$ small enough, $z\in D(b_0,C_0h)$ and for all $f \in B^4(\R) \otimes L^2(\R)$,
	\[\|f\|_{L^2} \leq C\|\mathscr{G} f \|_{L^2}+C\|(\mathscr{M}_u^\Phi-z)f\|_{B^1 \otimes L^2}+Ch\|f \|_{B^4 \otimes L^2}\,,\]
	and
	\begin{multline*}
		\Re\langle \Op_h^{w,2} (\mathcal{B}(\xi_2 + i \Phi'(x_2),x_2) + \Sigma(x_2) -b_0) \mathscr{G}f,\mathscr{G}f\rangle \\ \leq C\|(\mathscr{M}_u^\Phi-z)f\|_{B^1 \otimes L^2} \|f\|_{L^2}+Ch\|f\|_{B^4 \otimes L^2} \| f \|_{L^2} \,.
		\end{multline*}
\end{corollary}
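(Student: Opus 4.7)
The plan is to exploit the left parametrix identity $\mathscr Q^{[1]} \mathscr P = \mathrm{Id} + h \mathscr R$ from Proposition \ref{prop.parametrix}(ii), applied to the column vector $(f,0)^\top \in (B^4 \otimes L^2) \oplus L^2$. Since $\mathscr P (f,0)^\top = ((\mathscr M_u^\Phi - z) f,\, \mathscr G f)^\top$, this reads
\[ \begin{pmatrix} f \\ 0 \end{pmatrix} = \mathscr Q^{[1]} \begin{pmatrix} (\mathscr M_u^\Phi - z) f \\ \mathscr G f \end{pmatrix} - h \mathscr R \begin{pmatrix} f \\ 0 \end{pmatrix}. \]
The first claimed inequality follows from reading the first row, $f = (\mathscr Q^{[1]})_{11} (\mathscr M_u^\Phi - z)f + (\mathscr Q^{[1]})_{12} \mathscr G f - h (\mathscr R (f,0)^\top)_1$. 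By the Calder\'on--Vaillancourt theorem for operator-valued symbols, together with the bounded-symbol estimates of Proposition \ref{prop.P0z}, the entries $(\mathscr Q^{[1]})_{11}$ and $(\mathscr Q^{[1]})_{12}$ define bounded operators uniformly in $h$ and $R$, while the first component of $\mathscr R(f,0)^\top$ is controlled in $L^2 \otimes L^2$ by $C\|f\|_{B^4 \otimes L^2}$ thanks to Proposition \ref{prop.parametrix}(ii). Taking $L^2$-norms gives the bound.

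The second inequality hinges on a parity cancellation. Reading the second row of the identity yields
\[ 0 = (\mathscr Q^{[1]})_{21} (\mathscr M_u^\Phi - z) f + (\mathscr Q^{[1]})_{22} \mathscr G f - h (\mathscr R (f,0)^\top)_2, \]
where, by \eqref{eq.P0-1} and the formula $\mathsf Q_1 = -\mathsf Q_0 \mathsf P_1 \mathsf Q_0$ of Proposition \ref{prop.parametrix}, one computes $(\mathsf Q_0)_{22} = z - \mathcal B - \Sigma$ and $(\mathsf Q_1)_{22} = -\langle M_1 \mathsf F_{X_2}, \mathsf G_{X_2}\rangle$. The crucial point is that this latter scalar vanishes identically: inspection of the formula for $m_1$ in Proposition \ref{prop.symbol} shows that it is a sum of monomials cubic in $(x_1, \xi_1)$, hence odd under the reflection $(x_1, \xi_1) \mapsto -(x_1, \xi_1)$, so its Weyl quantization $M_1$ in $x_1$ sends even functions to odd ones; since $\mathsf F_{X_2}$ and $\mathsf G_{X_2}$ are Gaussians even in $x_1$, the function $M_1 \mathsf F_{X_2}$ is odd and hence orthogonal to $\mathsf G_{X_2}$.

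Once the cancellation is established, $(\mathscr Q^{[1]})_{22} = \Op_h^{w,2}(z - \mathcal B - \Sigma)$ and the second-row identity becomes
\[ \Op_h^{w,2}(\mathcal B + \Sigma - z)\, \mathscr G f = (\mathscr Q^{[1]})_{21} (\mathscr M_u^\Phi - z) f - h (\mathscr R (f,0)^\top)_2. \]
Pair this in $L^2(\R_{x_2})$ with $\mathscr G f$ and take real parts. Replacing $z$ by $b_0$ on the left-hand side introduces an error bounded by $|z - b_0|\,\|\mathscr G f\|_{L^2}^2 \leq Ch\|f\|_{L^2}^2 \leq C h \|f\|_{B^4 \otimes L^2} \|f\|_{L^2}$; Cauchy--Schwarz combined with the boundedness of $(\mathscr Q^{[1]})_{21}$ and $\mathscr G$ handles the first right-hand side term, and the mapping properties of $\mathscr R$ from Proposition \ref{prop.parametrix}(ii) handle the second. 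The main obstacle is precisely the parity cancellation $(\mathsf Q_1)_{22} = 0$; without it the error in the second estimate would only be of order $h^{\frac 12}$, which would be insufficient for the subsequent application to Proposition \ref{prop.functionalin}.
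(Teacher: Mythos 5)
Your proof is correct and follows the same route as the paper: apply the left parametrix identity from Proposition~\ref{prop.parametrix}(ii) to $(f,0)^\top$, read the first row for the first estimate, read the second row for the second, compute $\mathsf{Q}_\pm = z - \mathcal{B} - \Sigma - h^{\frac12}\langle M_1 \mathsf{F}_{X_2}, \mathsf{G}_{X_2}\rangle$, and kill the $h^{\frac12}$ term by parity of $m_1$ in $X_1$ against the even Gaussians. Your explicit account of why the parity cancellation is essential (otherwise the error would be $\mathscr{O}(h^{\frac12})$, too large for Proposition~\ref{prop.functionalin}) is a useful remark that the paper leaves implicit, but the argument itself is identical.
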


\begin{proof}
We use Proposition \ref{prop.parametrix}, and denote
\[ \mathscr Q^{[1]} =
\begin{pmatrix}
\mathscr Q_{++} & \mathscr Q_{+} \\ \mathscr Q_{-} & \mathscr Q_{\pm}
\end{pmatrix}.
 \]
Applying \eqref{eq.parametrixleft} to a vector $(f,0)$ gives
	\begin{equation}\label{eq.para1}
	\mathscr{Q}_{++}(\mathscr{M}_u^\Phi-z)f+\mathscr{Q}_{+}\mathscr{G}f=f+h\mathscr{R} f \,,
	\end{equation}
	and
	\begin{equation}\label{eq.para2}
	\mathscr{Q}_-(\mathscr{M}_u^\Phi-z)f+\mathscr{Q}_\pm \mathscr{G}f =h \tilde{\mathscr R}f \,,
	\end{equation}
	where the remainders satisfy 
	\[\| \mathscr R f\|_{B^2 \otimes L^2 }\leq C\|f\|_{B^4 \otimes L^2}\,,\quad \|\tilde {\mathscr R} f\|_{L^2} \leq C\|f\|_{B^4 \otimes L^2}\,.\]
	The first estimate follows from \eqref{eq.para1} since $\mathscr{Q}_{++}$ is bounded operator $B^1 \otimes L^2 \to L^2 \otimes L^2$ and $\mathscr{Q}_+$ is bounded $L^2 \to L^2 \otimes L^2$. For the second estimate, we take the scalar product with $\mathscr G f$ in \eqref{eq.para2} and obtain
	\begin{align*}
	 - {\mathrm{Re}} \langle \mathscr Q_{\pm} \mathscr G f, \mathscr G f \rangle &\leq \| \mathscr Q_- (\mathscr{M}_u^\Phi-z)f \|_{L^2} \|\mathscr G f \|_{L^2} + Ch \| f \|_{B^4 \otimes L^2} \| f \|_{L^2} \\
	 &\leq C \|(\mathscr{M}_u^\Phi-z)f \|_{B^1 \otimes L^2} \| f \|_{L^2}+ Ch \| f \|_{B^4 \otimes L^2} \| f \|_{L^2} ,
	 \end{align*}
	 where we used that $\mathscr Q_-$ is a bounded operator $B^1 \otimes L^2 \to L^2$. Using the formula from Proposition \ref{prop.parametrix} and equation \eqref{eq.P0-1}, we find the symbol of $\mathscr Q_{\pm}$,
	 \[ \mathsf Q_{\pm} = z - \mathcal B - \Sigma - h^{\frac 12} \langle M_1(X_2) \mathsf F_{X_2}, \mathsf G_{X_2} \rangle.\]
	 The subprincipal term is vanishing because $M_1(X_2)$ is odd in $x_1$ and $\mathsf F_{X_2}$, $\mathsf G_{X_2}$ are Gaussians. The result follows, replacing $z$ by $b_0$ up to additional $\mathscr{O}(h)$ error terms.
\end{proof}

\subsection{Proof of Proposition \ref{prop.functionalin}}\label{sec.proof}

The second inequality in Corollary \ref{cor.coercivity} implies that
	\begin{multline*}
	\Re\langle \Op_h^{w,2}( \mathcal{B}(\xi_2 + i \Phi'(x_2),x_2) + \Sigma(x_2) -b_0+ cR^2 h\chi_{R,h}) \mathscr{G} f,\mathscr{G}f\rangle \\ \leq C\|(\mathscr{M}_u^\Phi-z)f\|_{B^1 \otimes L^2} \|f\|_{L^2}+c R^2h\langle \chi_{R,h}\mathscr{G}f,\mathscr{G}f\rangle +Ch\|f\|_{B^4\otimes L^2} \| f \|_{L^2}\,,
	\end{multline*}
	where we introduced the function $\chi_{R,h}$ given in Assumption \ref{hyp.subsolution}, which satisfies
\[p_h(X_2):= \mathrm{Re}\, \mathcal{B}(\xi_2 + i \Phi'(x_2),x_2) + \Sigma(x_2) -b_0+cR^2h\chi_{R,h}(x_2)-cR^2h\geq 0\,.\]
In order to exploit the positivity, one will need a Fefferman-Phong inequality in a slightly exotic class.
\begin{lemma}[A Fefferman-Phong inequality]\label{lem.FP}
There exists $C>0$ such that, for all $R\geq 1$ and $h$ small enough,
\[\mathrm{Op}^{w,2}_h p_h\geq -Ch\,.\]
\end{lemma}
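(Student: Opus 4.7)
The strategy is to interpret $p_h$ as a nonnegative symbol in an exotic Hörmander class whose Planck scale is $h^{1/2}$, and to invoke the Fefferman--Phong inequality in that class. Because the Planck function is $h^{1/2}$ rather than $h$, the resulting loss is $O(h)$ instead of the usual $O(h^2)$ of the standard semiclassical Fefferman--Phong inequality.

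\emph{First}, the pointwise nonnegativity $p_h \geq 0$ on $\R^2$ is precisely Assumption~\ref{hyp.subsolution} once one notes that $1-\chi_0((x_2-c_u)/(Rh^{1/2}))$ and $1-\chi_{R,h}$ are built from the same cutoff.

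\emph{Second}, I would check that $p_h \in S(1,g)$ uniformly in $R \geq 1$ and in $h \in (0,h_0(R))$, where $g = h^{-1}\, dx_2^2 + d\xi_2^2$ (so that the Planck function is $h_g = h^{1/2}$). This amounts to verifying, for every $(\alpha,\beta) \in \N^2$, the bound $|\partial_{x_2}^\alpha \partial_{\xi_2}^\beta p_h(x_2,\xi_2)| \leq C_{\alpha\beta}\, h^{-\alpha/2}$. For the analytic part $\Re\,\mathcal B(\xi_2 + i\Phi'(x_2),x_2) + \Sigma(x_2) - b_0$, Lemma~\ref{lem.analytic2} controls $\xi_2$-derivatives, while $x_2$-derivatives are handled by the chain rule using the decomposition $\Phi' = \phi_1 + Rh^{1/2}\phi_2$ with $\phi_1 \in S(\R)$ and $\phi_2 \in S_{1/2}(\R)$; this gives $\partial_{x_2}^k \Phi' = O(1) + O(Rh^{1/2}\cdot h^{-(k-1)/2})$, which fits into $S(1,g)$ provided $h$ is small relative to $R$. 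For the cutoff piece, a direct computation gives
\[
\partial_{x_2}^k\bigl[cR^2 h\,\chi_{R,h}(x_2)\bigr] = cR^{2-k}\,h^{1-k/2}\,\chi_0^{(k)}\!\Bigl(\tfrac{x_2}{Rh^{1/2}}\Bigr),
\]
which is $O(h^{1-k/2}) = O(h\cdot h^{-k/2})$ uniformly in $R \geq 1$, hence automatically fits into $S(1,g)$.

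\emph{Third}, the Fefferman--Phong inequality for nonnegative symbols in $S(1,g)$ (see Hörmander, Volume III, or Lerner's book on pseudodifferential metrics on the phase space; a version adapted to classes of the present type is set up in~\cite{DR24}) yields
\[
\Op^{w,2}_h(p_h) \geq -C\, h_g^2 = -C h,
\]
which is the claimed inequality.

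The main obstacle is the uniformity of the Fefferman--Phong constant in the parameter $R$. As computed above, the cutoff term gains a factor $R^{2-k}$ for $k \geq 2$, so its seminorms are uniform in $R \geq 1$; in particular the second derivative $\partial_{x_2}^2(cR^2 h\,\chi_{R,h})=c\,\chi_0''(x_2/(Rh^{1/2}))$ is bounded by the $R$-independent constant $c\|\chi_0''\|_\infty$. The $\mathcal B$-term contributes $x_2$-derivatives of size $O(Rh^{(1-k)/2})$, which become uniformly controlled by $h^{-k/2}$ upon choosing $h_0(R)$ small enough (for instance so that $Rh_0(R)^{1/2} \leq 1$). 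With this choice the relevant finite list of $S(1,g)$-seminorms of $p_h$ that enter the Fefferman--Phong constant are bounded independently of $R$, so the constant $C$ in the conclusion is uniform, as required.
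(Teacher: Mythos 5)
Your proof is essentially the same as the paper's: the paper carries out the $h^{1/2}$-rescaling $x_2 = c_u + h^{1/2}y$, $\xi_2 = h^{1/2}\eta$ explicitly and then invokes the "fourth-derivative" form of Fefferman--Phong (Bony, Lerner--Morimoto) for $h^{-1}q_h\geq 0$, whereas you stay in the unscaled variables and phrase the same observation in the Weyl--Hörmander metric language, where the effective Planck parameter for $\Op^w_h$ with $g = h^{-1}\,\dd x_2^2 + \dd\xi_2^2$ is $h\cdot(g_1g_2)^{1/2}=h^{1/2}$, yielding loss $h_g^2 = h$. Both reduce to the same two facts: nonnegativity from Assumption~\ref{hyp.subsolution}, and the $O(h^{-\alpha/2})$ control of $x_2$-derivatives supplied by $\Phi'\in S(\R)+Rh^{1/2}S_{1/2}(\R)$, with uniformity in $R$ secured through the cutoff scaling and the constraint $Rh_0(R)^{1/2}\leq 1$.

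One small slip: since $\Phi' = \phi_1 + Rh^{1/2}\phi_2$ with $\phi_2\in S_{1/2}$, the correct bound is $\partial_{x_2}^k\Phi' = \phi_1^{(k)} + Rh^{1/2}\phi_2^{(k)} = O(1)+O(Rh^{1/2}\cdot h^{-k/2})$, not $O(Rh^{1/2}\cdot h^{-(k-1)/2})$ as you wrote (you are off by one derivative on $\phi_2$). Happily this is a more pessimistic bound than yours and the argument still goes through: comparing $Rh^{1/2}h^{-k/2}$ with $h^{-k/2}$ requires only $Rh^{1/2}\lesssim 1$, exactly the condition you already impose on $h_0(R)$. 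The rest -- the uniform-in-$R$ treatment of the cutoff term, the observation that $\chi_{R,h}$ and the cutoff in Assumption~\ref{hyp.subsolution} coincide, and the use of Lemma~\ref{lem.analytic2} for the $\xi_2$-derivatives -- matches the intended argument.
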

\begin{proof}
By translation and rescaling ($x_2=c_u+h^{\frac12}y$, $\xi_2=h^{-\frac12}\eta$), we see that $\mathrm{Op}^w_h p_h$ is unitarily equivalent to $\mathrm{Op}^{w}_1(q_h)$ with
\begin{multline*}
q_h(y,\eta)=\\ \mathrm{Re}\, \mathcal{B}(h^{\frac12}\eta + i \Phi'(c_u+h^\frac12 y),c_u+h^\frac12 y) + \Sigma(c_u+h^\frac12 y) -b_0+cR^2 h\chi_{0}(y)-cR^2 h\,.
\end{multline*}
By using that $\Phi' \in S(\R) + R h^{\frac 12} S_{\frac 12}(\R)$, we see that $\partial^\alpha (h^{-1}q_h)$ is uniformly bounded as soon as $|\alpha|\geq 4$. Therefore, we can apply the standard Fefferman-Phong inequality (see, for instance, \cite[Théorème 3.2]{Bony} or \cite[Corollary 1.3.2]{LM07}) and the conclusion follows.
\end{proof}
Thanks to Lemma \ref{lem.FP}, we deduce
\begin{multline*}
(c R^2-C)h\left\langle \mathscr{G}f,\mathscr{G}f\right\rangle\leq C\|(\mathscr{M}_u^\Phi-z)f\|_{B^1 \otimes L^2}\|f\|_{L^2}+cR^2 h\langle \chi_{R,h}\mathscr{G}f,\mathscr{G}f\rangle \\ +Ch\|f\|_{B^4\otimes L^2} \| f\| _{L^2}\,.
\end{multline*}
Then, we notice that
\[[\mathscr{G},\chi_{R,h}]=\mathscr{O}(h^{\frac12})\,,\]
which implies, up to reducing $h_0$,
\begin{multline*}
(cR^2-C)h\| \mathscr{G}f \|^2 \leq C\|(\mathscr{M}_u^\Phi-z)f\|_{B^1 \otimes L^2}\|f\|_{L^2}+CR^2 h\|\chi_{R,h}f\|_{L^2} \|f \|_{L^2}\\+Ch\|f\|_{B^4\otimes L^2} \| f \|_{L^2}\,.
\end{multline*}
By using the first estimate in Corollary \ref{cor.coercivity}, we get
\[(cR^2-C)h\|f\|_{L^2} \leq C\|(\mathscr{M}_u^\Phi-z)f\|_{B^1 \otimes L^2}+CR^2 h\|{\chi}_{R,h}f\|_{L^2} +Ch\|f\|_{B^4 \otimes L^2}\,.\]
This ends the proof of Proposition \ref{prop.functionalin}.

\section{Exponential decay}\label{sec.5}

\subsection{Magnetic Agmon distance}

In this section we prove an almost optimal result on the decay of eigenfunctions of $\mathscr{M}_u$, the single upper-well operator. More precisely, we prove decay in the variable $x_2$, which is the direction where tunneling occurs. The optimal decay rate is given by a function $\varphi_u(x_2)$, which we call \emph{Agmon distance} by analogy with electric wells. This function $\varphi_u$ is solution to an eikonal equation where $\mathcal{B}+ \Sigma$ is the effective symbol. This is where Assumption \ref{hyp.complexconnection} on the magnetic field is crucial.

\begin{lemma}\label{lem.Brond=b0}
Assume $\Sigma \in C^{\infty}_0(\R, \R_+)$ is such that $\Sigma(c_d)>0$ and $\| \Sigma \|_\infty$, $| {\rm{supp}}(\Sigma)|$ are small enough. Then there exists a smooth non-negative function $\varphi_u$ such that for all $x_2 \in \R$,
\begin{equation}\label{eq.eikonal}
 \mathcal B\left(i\varphi_u'(x_2),x_2\right) + \Sigma(x_2) =b_0\,. 
 \end{equation}
It satisfies $\varphi_u(c_u) = 0$, $\varphi_u'(c_u) = 0$ and $\varphi_u''(c_u)=-b_0 \gamma'(c_u)$. Moreover, outside the support of $\Sigma$ it is given by
\begin{equation}\label{eq.phius}
 \varphi_u(x_2) = \Big| \int_{c_u}^{x_2} |\Gamma(s)| \dd s \Big|, \qquad x_2 \in \R \setminus {\rm{supp}}( \Sigma ), \quad x_2 > c_d.
 \end{equation}
with
\[\Gamma(x_2):=\gamma(x_2)\int_{[0,1]} B(it \gamma(x_2),x_2)\mathrm{d}t\,.\]
	\end{lemma}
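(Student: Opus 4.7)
The plan is to solve the eikonal equation $\mathcal B(i \varphi_u'(x_2), x_2) + \Sigma(x_2) = b_0$ pointwise for $\varphi_u'$ and then integrate. Two preliminary observations are key. First, the symmetry $\mathcal B(-x_1, x_2) = \mathcal B(x_1,x_2)$ forces the Taylor expansion of $\mathcal B$ in $x_1$ to contain only even real monomials, so that $\mathcal B(i\xi, x_2) \in \R$ whenever $\xi \in \R$; the eikonal is thus a real equation. Second, Assumption \ref{hyp.complexconnection} combined with the diffeomorphism $\iota$ shows that $\mathcal B(x_1, x_2) - b_0$ vanishes precisely on the curves $x_1 = \pm i\Gamma(x_2)$, since $A_2(\pm i\gamma(q_2), q_2) = \pm i\Gamma(q_2)$. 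In particular, outside $\mathrm{supp}(\Sigma)$ the equation $\mathcal B(i\xi, x_2) = b_0$ admits exactly the two real branches $\xi = \pm \Gamma(x_2)$.

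On $(c_d, +\infty) \setminus \mathrm{supp}(\Sigma)$ I would set $\varphi_u'(x_2) = -\Gamma(x_2)$. Because $\gamma > 0$ on $(c_d, c_u)$ and $\gamma < 0$ on $(c_u, +\infty)$, this is negative for $c_d < x_2 < c_u$ and positive for $x_2 > c_u$; integrating from $c_u$ gives exactly \eqref{eq.phius} and $\varphi_u \geq 0$. The claimed values $\varphi_u(c_u) = 0$, $\varphi_u'(c_u) = 0$, $\varphi_u''(c_u) = -b_0 \gamma'(c_u)$ follow from $\Gamma(c_u) = 0$ and the direct computation $\Gamma'(c_u) = \gamma'(c_u)\, B(0,c_u) = b_0 \gamma'(c_u)$. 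The point $(0, c_u)$ is critical for $F(\xi,x_2) = \mathcal B(i\xi, x_2) - b_0$ (so the ordinary implicit function theorem fails), but using that $\mathcal B(i\xi, x_2)$ is even in $\xi$ one can write $F(\xi, x_2) = \tilde G(\xi^2, x_2)$ and apply the Malgrange preparation theorem, together with the non-degeneracy of $\mathrm{Hess}_{C_u} \mathcal B$, to get the factorization $F(\xi, x_2) = (\xi^2 - \Gamma(x_2)^2)\, h(\xi, x_2)$ near $(0, c_u)$ with $h(0, c_u) < 0$. This yields the smooth branches $\pm \Gamma(x_2)$ globally up to $c_u$, and shows $\varphi_u$ is smooth there.

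To continue $\varphi_u$ across $\mathrm{supp}(\Sigma)$, which we may take to be a small neighborhood $[c_d - \delta, c_d + \delta]$ of $c_d$, I would apply the implicit function theorem to $F(\xi, x_2) := \mathcal B(i\xi, x_2) + \Sigma(x_2) - b_0$. At $x_2 = c_d$ one has $F(\xi, c_d) = -a_d \xi^2 + \Sigma(c_d) + O(\xi^4)$ with $a_d > 0$ from $\mathrm{Hess}_{C_d} \mathcal B$; for $\|\Sigma\|_\infty$ small, its two real roots $\xi \approx \pm \sqrt{\Sigma(c_d)/a_d}$ stay bounded away from $0$, so $\partial_\xi F \neq 0$ and the IFT produces two smooth branches $\xi = \pm g(x_2)$ on $\mathrm{supp}(\Sigma)$ with $g > 0$ reducing to $|\Gamma|$ at the boundary where $\Sigma$ vanishes. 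Choosing $\varphi_u'(x_2) = -g(x_2) < 0$ extends the definition smoothly across $c_d$; continuing to $(-\infty, c_d - \delta)$ by $\varphi_u'(x_2) = +\Gamma(x_2)$ (still negative, since $\Gamma < 0$ there) completes the construction, and integration yields $\varphi_u$ smooth and non-negative on $\R$.

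The main subtlety is precisely this branch-switching: outside $\mathrm{supp}(\Sigma)$ the two real branches $\pm\Gamma$ cross transversally at $(\xi, x_2) = (0, c_d)$, and no fixed sign choice would give a smooth $\varphi_u' < 0$ across $c_d$. The role of the sealing potential $\Sigma$ is to convert this crossing into an avoided crossing, allowing the $-\Gamma$ branch on the right of $c_d$ to connect smoothly, via the IFT branch $-g(x_2)$, to the $+\Gamma$ branch on the left. Making this work uniformly is what forces the smallness hypotheses on $\|\Sigma\|_\infty$ and $|\mathrm{supp}(\Sigma)|$.
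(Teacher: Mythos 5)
Your proposal is correct and follows essentially the same route as the paper: construct $\Gamma$ from $\gamma$ via the diffeomorphism $\iota$ (so that $\mathcal B(\pm i\Gamma(x_2),x_2)=b_0$), set $\varphi_u'=\pm\Gamma$ with the sign dictated by $\gamma$ outside $\mathrm{supp}(\Sigma)$, verify the quadratic behavior at $c_u$ from $\Gamma'(c_u)=b_0\gamma'(c_u)$, and connect smoothly across $\mathrm{supp}(\Sigma)$ using the implicit function theorem together with the observation that the solution branch cannot cross $\xi=0$ because $\mathcal B(0,x_2)+\Sigma(x_2)>b_0$ there. The Malgrange-preparation detour near $(0,c_u)$ is superfluous since $\Gamma$ is given in closed form in terms of the (smooth) $\gamma$ from Assumption~\ref{hyp.complexconnection}, but your explicit quadratic model $F(\xi,c_d)\approx -a_d\xi^2+\Sigma(c_d)$ near $(0,c_d)$ is a useful elaboration of a step the paper states rather laconically.
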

	\begin{proof}
		We recall that, for all $q\in\R^2$, 
	\[\mathcal B(\iota(q))=B(q)\,,\]
	where $\iota$ is the diffeomorphism defined in \eqref{def.iota}. Thus, by analyticity we have
	\[b_0 = B(i\gamma(q_2),q_2)=\mathcal B(\iota(i\gamma(q_2),q_2))=\mathcal B\left(\int_{[0,i\gamma(q_2)]} B(q_1,q_2)\mathrm{d}q_1 \,, q_2\right)\,.\]
	 By symmetry, this is also true replacing $\gamma$ by $-\gamma$, and we deduce that the smooth functions $\pm \Gamma$ satisfy
	 \begin{equation}
	 	\mathcal B( \pm i \Gamma(x_2), x_2) = b_0, \qquad \forall x_2 \in \R.
	 \end{equation}
	 We recall that $B(ix_1,x_2) >0$ for $x_1$, $x_2 \in \R$, and therefore $\Gamma$ has the same sign as $\gamma$,
	 \[ \begin{cases}
	 	\Gamma(x_2) < 0 \quad &{\rm{for}} \quad x_2 \notin [c_d,c_u],\\
	 	\Gamma(x_2) > 0 \quad &{\rm{for}} \quad x_2 \in (c_d,c_u).
	 	\end{cases}  \]
	For this reason the function
	\[ \psi(x_2) =  \Big| \int_{c_u}^{x_2} |\Gamma(s)| \dd s \Big|=
	\begin{cases}
	  \int_{x_2}^{c_u}\Gamma(s) \dd s &\quad {\rm{if}} \quad x_2 \geq c_d,\\
	  -\int_{x_2}^{c_d} \Gamma(s) \dd s + \int_{c_d}^{c_u}\Gamma(s) \dd s&\quad {\rm{if}} \quad x_2 \leq c_d,
	\end{cases}
	\]
	is non-negative, increasing on $[c_u,\infty)$, decreasing on $(- \infty, c_u]$, and its global minimum is $\psi(c_u)=0$. It is a solution to
	\[\mathcal B(i\psi'(x_2),x_2) = b_0,\]
	however it is not smooth at $c_d$. This is due to the double-well of $\mathcal B$ on the real axis, and it is fixed using the non-negative cutoff $\Sigma(x_2)$. We define $\varphi_u$ to be equal to $\psi$ outside the support of $\Sigma$. Equation \eqref{eq.eikonal} has locally a smooth solution, which cannot cross the axis $x_1=0$ because $\mathcal B(0,x_2) + \Sigma(x_2) > b_0$ away from $c_u$. Thus there must be a smooth function $\varphi_u'$ connecting the two branches of $\psi'$ on the support of $\Sigma$.
\end{proof}

\subsection{Weights satisfying Assumption \ref{hyp.subsolution}}\label{sec.weights}

The proof of exponential decay is based on the functional inequality Corollary \ref{cor.functionalin} derived in Section \ref{sec.functionaline}. However, we cannot apply this inequality directly to the weight $\varphi_u$ since it is not a bounded function. For this reason we introduce $\tilde \varphi_u$ which is constant at infinity. Let $A>c_u$, and consider $\chi_1 \in \mathscr{C}_0^{\infty}(\R)$ such that $\chi_1 = 1$ on $[-A,A]$, $\chi_1 = 0$ on $\R \setminus [-2A,2A]$ and $0 \leq \chi_1 \leq 1$. We define
\begin{equation}\label{eq:defphitilde}
\forall x \in \R, \quad {\tilde\varphi}_u(x) = \int_{c_u}^x \chi_1(s) \varphi_u'(s)\mathrm{d}s\,.
\end{equation}
One easily checks the following properties of $\tilde \varphi_u$.
	\begin{enumerate}[---]
		\item ${\tilde\varphi}_u=\varphi_u$ on $[-A,A]$,
		\item ${\tilde\varphi}_u$ is constant on $\R \setminus [-2A,2A]$,
		\item  ${\tilde\varphi}_u(\pm2A)<\varphi_u(\pm2A)$,
		\item for all $x\in\R$, $ (x-c_u)\tilde\varphi_u'(x) \geq 0$,
		\item $|{\tilde\varphi}_u' | \leq |\varphi_u'|$ and thus ${\tilde\varphi}_u \leq \varphi_u$.
	\end{enumerate} 
Moreover, the function $(1-\eta) \tilde \varphi_u \in S(\R)$ satisfies Assumption~\ref{hyp.subsolution}.

\begin{lemma} \label{lem.phiutilde}
If $\eta >0$ is small enough, then the function $(1-\eta) \tilde \varphi_u$ satisfies Assumption~\ref{hyp.subsolution}.
\end{lemma}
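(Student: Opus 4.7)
The plan is to verify Assumption~\ref{hyp.subsolution} pointwise for $\Phi=(1-\eta)\tilde\varphi_u$. The construction \eqref{eq:defphitilde} makes $\tilde\varphi_u'$ compactly supported, so $(1-\eta)\tilde\varphi_u$ lies in $S(\R)\subset S(\R)+Rh^{1/2}S_{1/2}(\R)$, and the bound $|(1-\eta)\tilde\varphi_u'|\leq \tilde r$ follows from $|\varphi_u'|=|\Gamma|\leq\|\gamma\|_\infty\|\mathcal B\|_\infty$ together with $\|\gamma\|_\infty<r$, possibly after shrinking $\eta$. Using that $\mathcal B$ is even in the first variable on the real axis and extends analytically, one gets $\Re\mathcal B(-\xi_2+iy,x_2)=\Re\mathcal B(\xi_2+iy,x_2)$, so we may restrict to $\xi_2\geq 0$.

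Locally, for $(\xi_2,x_2)$ in a fixed neighborhood of $(0,c_u)$, I would Taylor-expand $\mathcal B$ using its diagonal positive Hessian at $C_u$:
\[\mathcal B(q_1,x_2)=b_0+aq_1^2+b(x_2-c_u)^2+\mathcal O\bigl((|q_1|+|x_2-c_u|)^3\bigr),\quad a,b>0.\]
Since $\Sigma$ vanishes near $c_u$, the eikonal equation \eqref{eq.eikonal} forces $\varphi_u'(x_2)^2=(b/a)(x_2-c_u)^2+\mathcal O((x_2-c_u)^3)$. Substituting $q_1=\xi_2+i(1-\eta)\varphi_u'(x_2)$ and taking real parts yields
\[\Re\mathcal B+\Sigma-b_0\geq a\xi_2^2+b\eta(2-\eta)(x_2-c_u)^2+\mathcal O\bigl((|\xi_2|+|x_2-c_u|)^3\bigr)\geq c\eta(x_2-c_u)^2\]
for $\eta$ small and a small enough neighborhood. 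This dominates $cR^2h(1-\chi_{R,h}(x_2-c_u))$ since $(x_2-c_u)^2\geq R^2h$ on the support of $1-\chi_{R,h}$.

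Globally, away from $(0,c_u)$, I would first reduce to $\xi_2=0$ by exploiting Assumption~\ref{hyp.odd}. It gives $\Re(\partial_1\mathcal B/\mathcal B)\geq 0$ on $\{\Re q_1\geq 0\}$, hence
\[\partial_{\xi_2}|\mathcal B(\xi_2+iy,x_2)|^2=2|\mathcal B|^2\Re(\partial_1\mathcal B/\mathcal B)\geq 0,\qquad \xi_2\geq 0;\]
this is the monotonicity referred to in~\eqref{eq.monotone}. Combined with the smallness $|\Im\mathcal B|\leq\tilde r\|\partial_1\mathcal B\|_\infty$ and the positivity $\Re\mathcal B\geq (1-\varepsilon)b_0>0$, this translates into a lower bound $\Re\mathcal B(\xi_2+iy,x_2)\geq\mathcal B(iy,x_2)-C\tilde r^2$ with $C$ modest. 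At $\xi_2=0$, the real analytic map $y\mapsto\mathcal B(iy,x_2)$ attains its minimum at $y=\varphi_u'(x_2)$ with value $b_0-\Sigma$; evaluating at $y=(1-\eta)\varphi_u'$ therefore strictly increases the value, and for $x_2\in[-2A,2A]\setminus\{c_u\}$ this increase is bounded below by a uniform constant (using $B(0,x_2)>b_0$ strictly away from $c_u,c_d$, cf.\ Lemma~\ref{lem.Brond=b0}). The transition $|x_2|\in[A,2A]$ uses $|\tilde\varphi_u'|\leq|\varphi_u'|$ and the same argument; for $|x_2|>2A$ we have $\tilde\varphi_u'=0$, so the inequality reduces to $B(0,x_2)+\Sigma-b_0>0$, which holds since the minima of $B$ are not attained at infinity.

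The main obstacle is the passage from the monotonicity of $|\mathcal B|^2$ (directly furnished by Assumption~\ref{hyp.odd}) to a lower bound on $\Re\mathcal B$, since the inequality in Assumption~\ref{hyp.subsolution} is on real parts. Bridging the gap requires carefully exploiting the smallness of $\Im\mathcal B$, itself controlled by the parameter $r$ of Assumption~\ref{hyp.reg}; this is the sense in which Assumption~\ref{hyp.odd} \emph{simplifies the analysis} mentioned in Section~\ref{sec.orga}, allowing one to avoid a more delicate microlocal argument on the magnetic curve $\gamma$.
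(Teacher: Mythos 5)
Your overall outline is close to the paper's (split $x_2$ near/far from $c_u$, reduce to $\xi_2=0$, then Taylor), but the mechanism you propose for the reduction to $\xi_2=0$ has a gap that is fatal precisely in the region where the lemma is most delicate.

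The crucial identity you are missing is $\partial_1\mathcal B=\bigl(\tfrac{\partial_1 B}{B}\bigr)\circ\iota^{-1}$, which follows from $\mathcal B=B\circ\iota^{-1}$ and $\partial_{x_1}(\iota^{-1})_1=1/B$. Combined with the fact that $\iota^{-1}$ preserves the half-plane $\{\Re q_1\geq 0\}$, Assumption~\ref{hyp.odd} gives directly $\Re\partial_1\mathcal B(\xi_2+i\Phi',x_2)\geq 0$ for $\xi_2\geq 0$. But $\Re\partial_1\mathcal B(\xi_2+i\Phi',x_2)$ is precisely $\partial_{\xi_2}\,\Re\mathcal B(\xi_2+i\Phi',x_2)$, so one obtains monotonicity of $\Re\mathcal B$ \emph{itself} in $|\xi_2|$, and therefore the exact inequality \eqref{eq.monotone}, $\Re\mathcal B(\xi_2+i\Phi',x_2)\geq\mathcal B(i\Phi',x_2)$, with no error. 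You instead differentiate $|\mathcal B|^2$; besides the fact that $\Re(\partial_1\mathcal B/\mathcal B)=\Re\bigl((\partial_1 B/B^2)\circ\iota^{-1}\bigr)$ is not literally what Assumption~\ref{hyp.odd} controls, you then need to pass back from $|\mathcal B|^2\geq\mathcal B(i\Phi',x_2)^2$ to a lower bound on $\Re\mathcal B$, and this costs a term of order $(\Im\mathcal B)^2/\Re\mathcal B$. Since $\Im\mathcal B(\xi_2+iy,x_2)=\int_0^y\Re\partial_1\mathcal B(\xi_2+it,x_2)\,\dd t$, with $y=(1-\eta)\varphi_u'(x_2)=\mathcal O(x_2-c_u)$ near $c_u$, this loss is $\mathcal O\bigl((x_2-c_u)^2\bigr)$ with a constant governed by $\|\partial_1\mathcal B\|_\infty$, not by $\eta$. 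The gain you need, however, is only $b\,\eta(2-\eta)(x_2-c_u)^2$, which vanishes as $\eta\to 0$. So in the regime where $x_2$ is close to $c_u$ but $\xi_2$ is not small (the regime your local Taylor expansion does not cover), the loss from the $|\mathcal B|^2\to\Re\mathcal B$ passage swamps the quadratic gain, and the inequality of Assumption~\ref{hyp.subsolution} fails. You flag this passage yourself as ``the main obstacle'' and suggest it can be bridged by ``carefully exploiting the smallness of $\Im\mathcal B$''; it cannot, because $\Im\mathcal B$ and the required quadratic gain are the same size near $c_u$ once $\eta$ is small. The clean route is the one just described: monotonicity of $\Re\mathcal B$, not of $|\mathcal B|^2$.

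A secondary, repairable inaccuracy: the claim that $y\mapsto\mathcal B(iy,x_2)$ has a minimum at $y=\varphi_u'(x_2)$ is not justified and in fact need not hold (the eikonal equation only says it takes the value $b_0-\Sigma$ there, and by Assumption~\ref{hyp.complexconnection} the crossing is transversal). What one actually uses is the sign of $i\partial_1\mathcal B(i\varphi_u',x_2)\cdot\varphi_u'/|\varphi_u'|$, i.e.\ monotonicity of $y\mapsto\mathcal B(iy,x_2)$ on the segment between $0$ and $\varphi_u'(x_2)$, together with a second-order Taylor expansion in $(\Phi'-\varphi_u')$ to get the quantitative quadratic lower bound near $c_u$ and the uniform lower bound away from it.
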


\begin{proof}
	The proof is divided into two steps.
	
	1. We first establish a couple of estimates for any weight $\Phi$. Thanks to the eikonal equation \eqref{eq.eikonal}, we can write
	\[\begin{split}
		&\Re \mathcal{B}(\xi_2+i\Phi',x_2) + \Sigma(x_2) -b_0\\
		&=\Re\left(\mathcal{B}(\xi_2+i\Phi',x_2)-\mathcal{B}(i\varphi_u',x_2)\right)\\
		&=\Re\left(\mathcal{B}(\xi_2+i\Phi',x_2)-\mathcal{B}(i\Phi',x_2)+\mathcal{B}(i\Phi',x_2)-\mathcal{B}(i\varphi_u',x_2)\right)\,.
	\end{split}\]
	The function $\xi_2\mapsto\Re\left(\mathcal{B}(\xi_2+i\Phi',x_2)-\mathcal{B}(i\Phi',x_2)\right)$ vanishes at $0$ and its derivative is the function
	$\xi_2\mapsto\Re\partial_1\mathcal{B}(\xi_2+i\Phi',x_2)=\Re\frac{\partial_{1}B}{B}(\iota^{-1}(\xi_2+i\Phi',x_2))$, which is non-negative when $\xi_2\geq 0$ and non-positive when $\xi_2\leq 0$ (as a consequence of Assumption \ref{hyp.odd}). Thus,
	\begin{equation}\label{eq.monotone}
		\Re\left(\mathcal{B}(\xi_2+i\Phi',x_2)-\mathcal{B}(i\Phi',x_2)\right)\geq 0\,.
	\end{equation}
	We deduce
	\begin{equation}
	\Re \mathcal{B}(\xi_2+i\Phi',x_2) + \Sigma(x_2) -b_0 \geq \mathcal{B}(i\Phi',x_2)-\mathcal{B}(i\varphi_u',x_2)\,,
	\end{equation}
	where the right-hand side is real due to the symmetry assumptions on $\mathcal B$. Then, we only have to bound the right-hand side from below. 
	
	 We use the Taylor formula to get
	\begin{equation}\label{eq.307}
	\mathcal{B}(i\Phi',x_2)-\mathcal{B}(i\varphi_u',x_2) = i\partial_1 \mathcal B(i\varphi_u',x_2)( \Phi' - \tilde\varphi_u') + \mathscr{O} \big( |\Phi' - \varphi_u'|^2 \big),
	\end{equation}
	and then
	\begin{multline}\label{eq.308}
	 \mathcal{B}(i\Phi',x_2)-\mathcal{B}(i\varphi_u',x_2) = - \partial_1^2 \mathcal B(0,x_2) \varphi_u' (\Phi' - \varphi_u') \\ + \mathscr{O} \big( (\varphi_u')^2 |\Phi' -\varphi_u'| +|\Phi' - \varphi_u'|^2\big)\,.
	\end{multline}
	Note that our assumptions on $B$ imply that, for all $x_2\in\R$, $\partial_1^2 \mathcal B(0,x_2) \geq c> 0$. Moreover,  thanks to Assumption \ref{hyp.complexconnection}, 
	\[ i\partial_1 \mathcal B(i\varphi'_u,x_2) \frac{\varphi_u'}{|\varphi_u'|} <0, \quad x_2 \neq c_u\,.\]

	2.  With $\Phi = (1-\eta) \varphi_u$, we deduce from \eqref{eq.308} that, 
	\begin{equation*} 
	\mathcal{B}(i(1-\eta)\varphi_u',x_2)-\mathcal{B}(i\varphi_u',x_2) \geq \partial_1^2 \mathcal B(0,x_2) \eta (\varphi_u')^2 + \mathscr{O} \big( \eta (\varphi_u')^3 + \eta^2 (\varphi_u')^2\big)\,.
	\end{equation*}
	We know that $\varphi_u$ is quadratic near $x_2=c_u$. Therefore, we can find a $c>0$ such that, if $\delta$ and $\eta$ are small enough,
	\begin{equation}\label{eq.lb1}
	\mathcal{B}(i(1-\eta)\varphi_u',x_2)-\mathcal{B}(i\varphi_u',x_2)  \geq c (x_2 - c_u)^2, \quad \forall x_2 \in D(c_u,\delta)\,.
	\end{equation}
	Then, we use \eqref{eq.307} so that, for all $x_2\in[-A,A]\setminus D(c_u,\delta)$,  we have
	\begin{align*}
	 \mathcal{B}(i(1-\eta)\varphi_u',x_2)-\mathcal{B}(i\varphi_u',x_2) &= i\partial_1 \mathcal B(i\varphi_u',x_2)( \Phi' - \varphi_u') + \mathscr{O} \big( |\Phi' - \varphi_u'|^2 \big) \\
	 & \geq c_\delta \eta |\varphi_u'|\,,
	 \end{align*}
	 for some $c_\delta >0$, and thus, for all $x_2\in[-A,A]\setminus D(c_u,\delta)$,
	 \begin{equation}\label{eq.lb2}
	  \mathcal{B}(i(1-\eta)\varphi_u',x_2)-\mathcal{B}(i\varphi_u',x_2)  \geq \tilde{c}_\delta >0\,.
	 \end{equation}
	 Since $\tilde\varphi_u=\varphi_u$ on $[-A,A]$, we get, from \eqref{eq.lb1} and \eqref{eq.lb2}, that for $\Phi=(1-\eta)\tilde\varphi_u$, on $[-A,A]$,
	 	 \[\mathcal{B}(i\Phi',x_2)-\mathcal{B}(i\varphi_u',x_2)\geq \tilde c_\delta\min(1,(x_2-c_u)^2)\,.\]
For all $x_2\notin[-A,A]$, we have
\[\mathcal{B}(i\Phi',x_2)-\mathcal{B}(i\varphi_u',x_2)\geq c_A>0\,.\]
Thus, $\Phi = (1-\eta) \tilde\varphi_u'$ satisfies the Assumption \ref{hyp.subsolution}. 
\end{proof}
	
In Proposition \ref{prop.approx} about WKB constructions, we will also need the following almost optimal weight.

\begin{lemma}\label{lem.phiuhat}
 Let $\rho$ be a smooth function equal to $1$ on $\R \setminus [-1,1]$ and supported in $\R \setminus [ - \frac 12, \frac 12 ]$. For all $R,\eta \geq 1$, define
 \[ \hat \varphi_u(x)=\tilde\varphi_u(x)- \eta R^2 h\int_{c_u}^x\rho\left(\frac{s-c_u}{Rh^{\frac12}}\right)\frac{\tilde{\varphi}'_u(s)}{\tilde{\varphi}_u(s)}\mathrm{d}s.\]
 Then if $\eta >0$ is small enough, $\hat \varphi_u$ satisfies Assumption \ref{hyp.subsolution}.
\end{lemma}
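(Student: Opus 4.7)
The plan is to follow the structure of Lemma \ref{lem.phiutilde}, splitting the verification into the symbol-class check and the coercivity estimate.

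For the symbol class, the correction $\hat\varphi_u' - \tilde\varphi_u'$ vanishes on $|x-c_u| \leq \tfrac{1}{2} Rh^{1/2}$. On its support, the quadratic behaviour $\tilde\varphi_u \sim \tfrac{b_0|\gamma'(c_u)|}{2}(x-c_u)^2$ near $c_u$ (extended to a positive lower bound elsewhere) yields $|\tilde\varphi_u'/\tilde\varphi_u| \lesssim (Rh^{1/2})^{-1}$, so the correction is of order $\eta R h^{1/2}$; each subsequent derivative produces one additional factor of $(Rh^{1/2})^{-1}$, which is exactly the class $R h^{1/2} S_{1/2}(\R)$, with constants uniform in $R \geq 1$.

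For the coercivity I follow step~1 of Lemma \ref{lem.phiutilde}: Assumption \ref{hyp.odd} together with the eikonal equation \eqref{eq.eikonal} reduce the required inequality to a lower bound on $\mathcal{B}(i\hat\varphi_u'(x_2),x_2) - \mathcal{B}(i\varphi_u'(x_2),x_2)$. Inside $[-A,A]$ we have $\tilde\varphi_u = \varphi_u$, so $\hat\varphi_u' - \varphi_u' = -\Delta$ with $\Delta := \eta R^2 h\, \rho\big(\tfrac{x_2-c_u}{Rh^{1/2}}\big)\, \varphi_u'/\varphi_u$. Plugging this into the Taylor expansion \eqref{eq.308} gives the leading term
\[ \partial_1^2 \mathcal{B}(0,x_2)\, \varphi_u'\, \Delta \;=\; \partial_1^2 \mathcal{B}(0,x_2)\, \eta R^2 h\, \rho\, \frac{(\varphi_u'(x_2))^2}{\varphi_u(x_2)} \;\geq\; c\, \eta R^2 h\, \rho, \]
using that $\partial_1^2 \mathcal{B}(0,\cdot) > 0$ and that $(\varphi_u')^2/\varphi_u$ is bounded below on $[-A,A]$ (it converges to $2\varphi_u''(c_u) = 2b_0|\gamma'(c_u)|$ at $c_u$, and $\varphi_u'$ stays away from $0$ elsewhere, including on $\operatorname{supp}\Sigma$). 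The remainders $\mathcal{O}\big((\varphi_u')^2|\Delta| + |\Delta|^2\big)$ from \eqref{eq.308} are absorbed into a fraction of this leading term: near $c_u$ the factor $(\varphi_u')^2$ is of size $(x-c_u)^2$ and thus makes the first remainder a small multiple of the leading term, exactly as in the proof of \eqref{eq.lb1}; and on $\operatorname{supp}\rho$ the buffer $|x-c_u| \geq \tfrac{1}{2}Rh^{1/2}$ gives $|\Delta| \leq 4 \eta R h^{1/2}\rho$, hence $|\Delta|^2 \leq 16\eta^2 R^2 h\, \rho$, which is small compared to the main term for $\eta$ small enough.

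Outside $[-A,A]$ we have $\tilde\varphi_u' = 0$ by construction, hence $\hat\varphi_u' = 0$, and the bound $\mathcal{B}(0,x_2) - b_0 \geq c_A > 0$ (for $A$ large enough, since $B(0,\cdot)$ has its only minima at $c_u, c_d$) dominates $cR^2 h$ once $h$ is small. To conclude I would fix $\chi_0 \in \mathscr{C}_0^\infty(\R)$ equal to $1$ on $[-1,1]$, so that $\operatorname{supp}(1-\chi_0) \subset \{\rho = 1\}$: for $|x_2 - c_u| \geq R h^{1/2}$ the preceding estimates give the required lower bound $\geq c\eta R^2 h$, while on the complement the right-hand side of Assumption \ref{hyp.subsolution} vanishes and the left-hand side is still $\geq 0$. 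The main delicate point is the error control on $\operatorname{supp}\rho$ near $c_u$: the singularity $\varphi_u'/\varphi_u \sim 2(x-c_u)^{-1}$ would blow up $|\Delta|$ if $\rho$ did not excise the thin region $|x-c_u| < \tfrac{1}{2}Rh^{1/2}$, and it is exactly this design of $\rho$ that makes the error absorbable while still producing a gain of order $R^2 h$.
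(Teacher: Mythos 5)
Your argument mirrors the paper's own proof of Lemma~\ref{lem.phiuhat}: same reduction via Assumption~\ref{hyp.odd} and the eikonal equation, same use of the Taylor bounds \eqref{eq.307}--\eqref{eq.308}, same exploitation of the quadratic behaviour of $\varphi_u$ near $c_u$, and the same absorption of the two error terms by taking $\eta$ small, so that the gain of order $\eta R^2 h\rho_h$ survives. The symbol-class check for $\hat\varphi_u'\in S(\R)+Rh^{1/2}S_{1/2}(\R)$ is also as in the paper.

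There is one small but genuine slip in the final region. You write that $\tilde\varphi_u'=0$ outside $[-A,A]$, hence $\hat\varphi_u'=0$ there, and conclude with $\mathcal B(0,x_2)-b_0\geq c_A$. But by the construction \eqref{eq:defphitilde} the cutoff $\chi_1$ only vanishes outside $[-2A,2A]$; on the transition zone $A\leq |x_2|\leq 2A$ one has $\tilde\varphi_u'=\chi_1\varphi_u'$ with $\chi_1\in[0,1]$, and therefore
\[
\hat\varphi_u'-\varphi_u' \;=\; (\chi_1-1)\varphi_u' \;-\; \eta R^2 h\,\rho_h\,\frac{\chi_1\varphi_u'}{\tilde\varphi_u}\,,
\]
which is not identically $0$. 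The conclusion is still reachable: both summands have sign opposite to $\varphi_u'$, so plugging into \eqref{eq.307} and using $i\partial_1\mathcal B(i\varphi_u',x_2)\,\varphi_u'/|\varphi_u'|<0$ gives a nonnegative leading term; since $|\varphi_u'|$ and $\tilde\varphi_u$ are bounded away from $0$ on $[A,2A]$, one gets a lower bound of order $\eta R^2 h$ (not a constant $c_A$, but enough, because $1-\chi_0((x_2-c_u)/Rh^{1/2})=1$ there). The constant lower bound $\mathcal B(0,x_2)-b_0\geq c_A$ that you invoke only kicks in once $|x_2|>2A$, where $\hat\varphi_u'$ truly vanishes. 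You should rewrite that last paragraph to treat the two ranges $A\leq |x_2|\leq 2A$ and $|x_2|>2A$ separately.
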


Note that $\hat \varphi_u'  \in S(\R) + Rh^{\frac 12} S_{\frac 12}(\R)$ uniformly with respect to $R$.

\begin{proof} 
We adapt the proof of Lemma \ref{lem.phiutilde}. We can use the bounds \eqref{eq.307} and \eqref{eq.308} to $\Phi = \hat \varphi_u$. Note that we have
		\[\hat \varphi_u'=\left(1-\frac{\eta R^2h\rho_h}{\tilde\varphi_u}\right)\tilde\varphi'_u\,,\]
		with $\rho_h(x) = \rho((x-c_u)/Rh^{\frac 12})$. Thus, \eqref{eq.308} gives, on $[-A,A]$,
		\begin{align*}
	&\mathcal{B}(i\hat \varphi_u',x_2)-\mathcal{B}(i\varphi_u',x_2) \\
	&\geq \partial_1^2 \mathcal B(0,x_2) \eta R^2 h \rho_h \frac{ (\varphi_u')^2}{ \varphi_u} + \mathscr{O} \Big(\eta R^2 h \rho_h \frac{ (\varphi_u')^3}{ \varphi_u} + \eta^2 R^4 h^2 \rho_h^2 \frac{ (\varphi_u')^2}{\varphi_u^2} \Big).
	\end{align*}
For $x_2 \in D(c_u,\delta)$ with $\delta$ small enough, using that $\varphi_u$ is quadratic near $c_u$, we deduce, 
	\begin{align*}
	 \mathcal{B}(i\hat \varphi_u',x_2)-\mathcal{B}(i\varphi_u',x_2) &\geq c \eta R^2 h \rho_h  + \mathscr{O} \Big(\eta R^2 h \rho_h |x_2-c_u| + \frac{\eta^2 R^2 h^2 \rho_h^2}{|x-c_u|^2} \Big) \\
	& \geq c\eta R^2 h \rho_h (1-C \delta - C \eta).
	\end{align*}
	When $\eta$ and $\delta$ are chosen small enough, we obtain, for all $x_2 \in D(c_u,\delta)$,
	\begin{equation}\label{eq.333}
	 \mathcal{B}(i\hat \varphi_u',x_2)-\mathcal{B}(i\varphi_u',x_2) \geq c\eta R^2 h \rho_h\,.
	\end{equation}
	 When $x_2 \in [-A,A]\setminus D(c_u,\delta)$, then $\rho_h(x_2) =1$ and we use \eqref{eq.307} to get
	\begin{align*}
	\mathcal{B}(i\hat \varphi_u',x_2)-\mathcal{B}(i\varphi_u',x_2) &\geq  c_\delta \eta R^2 h \frac{\varphi_u'}{\varphi_u} + \mathscr{O} \left( \eta^2 R^4 h^2 \frac{(\varphi_u')^2}{\varphi_u^2} \right) \\
	&\geq c_\delta \eta R^2 h - C_\delta \eta^2 R^4 h^2.
	\end{align*}
	We therefore obtain a $\tilde c_\delta>0$ (depending on $\delta$) such that, for all  $x_2 \in [-A,A]\setminus D(c_u,\delta)$,
	\begin{equation}\label{eq.334}
	\mathcal{B}(i\hat \varphi_u',x_2)-\mathcal{B}(i\varphi_u',x_2) \geq \tilde c_\delta \eta R^2 h\,.
	\end{equation}
	Moreover, we have, for all $x_2\notin[-A,A]$,
		\[\mathcal{B}(i\hat \varphi_u',x_2)-\mathcal{B}(i\varphi_u',x_2) \geq c_A>0\,.\]
	
	We combine \eqref{eq.333} and \eqref{eq.334} to deduce that $\hat{\varphi}_u$ satisfies Assumption \ref{hyp.subsolution}, as soon as $\eta$ is chosen small enough.
\end{proof}
	
	\begin{remark}\label{rem.AA}~
\begin{enumerate}[\rm (i)]
	\item The proofs of Lemmas \ref{lem.phiutilde} and \ref{lem.phiuhat} are rather similar to those of \cite[Prop. 5.3 \& 5.8]{AA23} obtained in the case of magnetic fields vanishing on curves. Note however that the proof of \cite[Prop. 5.8]{AA23} is not correct since the proof of the analog of \eqref{eq.monotone} (which follows from Assumption \ref{hyp.odd} in our case) is wrong. Fortunately, this  mistake can be corrected by improving the microlocalization estimates near "$\xi_2=0$".
		
		\item We have also to emphasize that our expressions of $\tilde\varphi_u$ and $\hat \varphi_u$ are taken from \cite{DR24} and are smooth functions contrary to those used in \cite{AA23, BHR22}. Due to the microlocal estimates, the smoothness of the weights is required.	
	\end {enumerate}
	\end{remark}

\subsection{Exponential decay of the one-well groundstate}\label{sec:One-well}

We are now in position to prove an almost-optimal decay estimate on eigenfunctions of $\mathscr{M}_u$.

\begin{theorem}\label{thm.agmon}
Let $\eta > 0$ small enough, $C_0>0$ and $L\in\N$. There exist $C,h_0>0$ such that, for all $h\in(0,h_0)$ and all eigenfunctions $\mathbf f$ associated with an eigenvalue $\lambda$ of $\mathscr{M}_u$ such that $|\lambda - b_0|\leq C_0h$, we have
\[\|e^{(1-\eta) \,\tilde\varphi_u(x_2)/h} \mathbf f\|_{B^L \otimes L^2}\leq C\|\mathbf f\|\,\]
where $\tilde\varphi_u$ is the weight defined in \eqref{eq:defphitilde}.
	\end{theorem}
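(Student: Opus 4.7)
The plan is to apply Corollary \ref{cor.functionalin} to the weight $\Phi := (1-\eta)\tilde\varphi_u$ and to the conjugated function $g := e^{\Phi/h}\mathbf{f}$, with $z := \lambda$. By Lemma \ref{lem.phiutilde}, if $\eta > 0$ is small enough, the weight $\Phi$ satisfies Assumption \ref{hyp.subsolution}; moreover $\tilde\varphi_u \in S(\R)$ by construction (it is constant outside $[-2A,2A]$), so its derivative lies in $S(\R) \subset S(\R) + Rh^{1/2}S_{1/2}(\R)$ with trivial $S_{1/2}$ component, and $|z-b_0| \leq C_0 h$ by assumption on $\lambda$.

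Since $\mathscr{M}_u \mathbf f = \lambda \mathbf f$ and $\mathscr{M}_u^\Phi = e^{\Phi/h} \mathscr{M}_u e^{-\Phi/h}$, one has identically $(\mathscr{M}_u^\Phi - \lambda) g = 0$. The eigenfunction $\mathbf f$ of $\mathscr{M}_u$ is Schwartz (by elliptic regularity in $(x_1,\xi_1)$, combined with the decay of magnetic eigenfunctions in all variables transferred through the unitary equivalences of Section \ref{sec.3}); and since $\Phi$ is bounded, so is $g$. Corollary \ref{cor.functionalin}, with the source term vanishing, then gives for some fixed constants $R, C > 0$ and all $h < h_0$:
\[
\|g\|_{B^L \otimes L^2} \;\leq\; C R^2 \bigl\| \chi_{R,h}(x_2-c_u) \, g \bigr\|_{L^2}.
\]

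The right-hand side is controlled thanks to the quadratic vanishing of $\tilde\varphi_u$ at $c_u$: since $\tilde\varphi_u(c_u) = \tilde\varphi_u'(c_u) = 0$ (recall $\tilde\varphi_u = \varphi_u$ near $c_u$ and $\varphi_u''(c_u) = -b_0 \gamma'(c_u)$ by Lemma \ref{lem.Brond=b0}), there exists $C_1 > 0$ such that $\tilde\varphi_u(x_2) \leq C_1 (x_2-c_u)^2$ on a fixed neighborhood of $c_u$. On the support of $\chi_{R,h}(\cdot - c_u)$ we have $|x_2 - c_u| \leq 2 R h^{1/2}$, hence $(1-\eta) \tilde\varphi_u(x_2)/h \leq 4 C_1 (1-\eta) R^2$, so $e^{(1-\eta)\tilde\varphi_u/h}$ is bounded by a constant $K(R)$ uniform in $h$ on this set. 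Therefore
\[
\bigl\| \chi_{R,h}(x_2 - c_u) g \bigr\|_{L^2} \;\leq\; K(R) \, \|\mathbf f\|_{L^2},
\]
and combining the two displays yields the claimed inequality with $C := C R^2 K(R)$.

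The main obstacle is to justify applying Corollary \ref{cor.functionalin} to $g$, since the corollary is stated for Schwartz functions. Apart from a density/truncation argument, this amounts to showing that eigenfunctions of $\mathscr{M}_u$ lie in $\mathcal{S}(\R^2)$; the ellipticity of $\mathscr{M}_u$ in $(x_1,\xi_1)$ and the presence of the confining potential $\Sigma$ are not enough by themselves in the variable $x_2$, so one should either invoke the \emph{a priori} Gaussian decay of eigenfunctions of the original operator $\mathscr{L}$ (which is preserved under the metaplectic change of variables in Proposition \ref{prop.Mh}), or alternatively apply the inequality to $\chi_n(x_2) g$ for a suitable sequence of cutoffs $\chi_n \to 1$ and pass to the limit, controlling the commutator $[\mathscr{M}_u^\Phi, \chi_n]$ by standard pseudodifferential calculus.
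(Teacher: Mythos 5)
Your proof follows the same route as the paper: apply Corollary \ref{cor.functionalin} with $\Phi=(1-\eta)\tilde\varphi_u$ and $z=\lambda$ to $g=e^{\Phi/h}\mathbf f$, note that the source term vanishes because $(\mathscr{M}_u^\Phi-\lambda)g=0$, and bound the remaining right-hand side using the quadratic vanishing of $\tilde\varphi_u$ at $c_u$ to control the exponential weight on the $h^{1/2}$-neighborhood where $\chi_{R,h}$ lives. The verification that $\Phi'\in S(\R)\subset S(\R)+Rh^{1/2}S_{1/2}(\R)$ and the appeal to Lemma~\ref{lem.phiutilde} are exactly the paper's justification.

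The only place you go beyond the paper is the discussion of whether $g$ lies in $\mathcal{S}(\R^2)$, which Corollary~\ref{cor.functionalin} formally requires. The paper simply applies the corollary to $e^{\Phi/h}\mathbf f$ without comment, so it is reasonable that you flag this. Of your two suggested fixes, the truncation/density argument with commutator control is the clean one and is standard. The first suggestion is a bit off: $\mathscr{M}_u=\mathscr{M}+\Sigma(x_2)$ is \emph{not} unitarily equivalent to the original magnetic Laplacian $\mathscr{L}$ (only $\mathscr{M}$ is), so eigenfunctions of $\mathscr{M}_u$ are not directly images of eigenfunctions of $\mathscr{L}$ under the conjugations of Section~\ref{sec.3}, and one cannot simply import their Gaussian decay. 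Stick with the cutoff-and-pass-to-the-limit route, noting also that the a priori localization estimate \eqref{eq.apriorigap} type results (microlocalization near $X_1=0$ at scale $h^{1/2}$ plus $\mathcal{O}(e^{-c/\sqrt h})$ decay) already give enough integrability for the commutator terms to vanish. With that caveat, the argument is correct.
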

	
\begin{proof}
We can assume $\eta$ as small as we need, and we apply Corollary \ref{cor.functionalin} to $\Phi=(1-\eta) \,\tilde\varphi_u$ (which is allowed thanks to Lemma \ref{lem.phiutilde}). Applying the inequality to $f= e^{(1-\eta) \,\tilde\varphi_u(x_2)/h}\mathbf f$ with $z=\lambda$, we obtain
\[ \| e^{(1-\eta) \,\tilde\varphi_u(x_2)/h} \mathbf f \|_{B^L \otimes L^2} \leq C \| \chi_{R,h}(x_2-c_u) e^{(1-\eta) \,\tilde\varphi_u(x_2)/h} \mathbf f \|, \]
for some $R>0$ and $C>0$. Since $\chi_{R,h}$ is supported on a $h^{\frac 12}$-neighborhood of $c_u$, and $\tilde{\varphi}_u$ is quadratic near $c_u$, the exponential on the right-hand side is bounded on the support of $\chi_{R,h}$. The result follows.
\end{proof}

\section{A WKB-like construction}\label{sec.6}

The parametrix constructed in Proposition \ref{prop.parametrix} can also be used to contruct quasimodes for the operator $\mathscr M_u^\Phi$, when $\Phi=\varphi_u$ is the \emph{magnetic Agmon distance} defined in Lemma \ref{lem.Brond=b0}. This yields a WKB-like construction for the upper-well operator $\mathscr M_u$.

\begin{theorem}\label{thm.WKB}
There exist sequences of smooth functions $(a_j)_{j \in \N}$ and of real numbers $(z_j)_{j \in \N}$, and $N>0$ such that the following holds for all $J \in \N$. Define
\[a^{[J]}=a_0+h^{\frac12}a_1+\ldots+h^{\frac{J}{2}}a_J\,,\quad z^{[J]}=z_0+h^{\frac12}z_1+\ldots+h^{\frac{J}{2}}z_J\,,\]
and let $\chi \in \mathscr{C}^\infty_0(\R)$ be such that $\chi=1$ on a neighborhood of $c_u$.
\begin{enumerate}
\item There exists an $h$-pseudodifferential operator $\mathscr{Q}_+^{[J]}$ with principal symbol
\[ \mathsf{Q}^{+} = \mathsf{F}_{(x_2,\xi_2)}(x_1) + \mathscr{O}( h ^{\frac 12}) \quad {\rm in} \quad S(1), \]
such that
\[(\mathscr{M}_u^{ \varphi_u}-z^{[J]})\mathscr{Q}^{[J]}_+[\chi a^{[J]} ]=\mathscr{O}(h^{\frac{J+1}{2}}) \quad {\rm{in}} \quad B^L(\R) \otimes L^2(\lbrace \chi =1 \rbrace)\,,\]
for all $L \in \N$.
\item The quasimode
\[\mathbf{f}^{\rm{wkb}}_{J,u}(x_1,x_2)=\varphi_u''(c_u)^{\frac 14} h^{-\frac14}e^{-\varphi_u(x_2)/h}\mathscr{Q}^{[J]}_+ \left[\chi a^{[J]} \right](x_1,x_2)\]
satisfies
\[e^{\varphi_u/h}(\mathbf{f}_{J,u}^{\rm{wkb}}-\Pi_u \mathbf{f}_{J,u}^{\rm{wkb}})=\mathscr{O}(h^{\frac{J}{2}-1-N }) \quad \textrm{in} \quad B^L(\R) \otimes L^2(\lbrace \chi =1 \rbrace)\,,\]
for all $L \in \N$, and where $\Pi_u$ is the orthogonal projection on the ground state of $\mathscr{M}_u$.
\end{enumerate}
Moreover $z_0 = b_0$, $ z_1 = 0$, $ a_0(c_u) = 1$, and $\|\mathbf{f}^{\rm{wkb}}_{J,u}\| = 1 + \mathscr{O}(h^{\frac 12})$.
\end{theorem}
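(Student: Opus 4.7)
The plan is to derive both parts from the right-inverse parametrix of Proposition~\ref{prop.parametrix}\eqref{eq.i} specialized to the weight $\Phi = \varphi_u$. Writing $\mathscr Q^{[J]}$ in its $2\times 2$ block form and applying the identity $\mathscr P\mathscr Q^{[J]} = \mathrm{Id}+h^{(J+1)/2}\mathscr R_J$ to the column $(0,\chi a^{[J]})^\top$ yields the two scalar relations
\begin{align*}
(\mathscr M_u^{\varphi_u}-z^{[J]})\mathscr Q_+^{[J]}[\chi a^{[J]}]+\mathscr F\, \mathscr Q_\pm^{[J]}[\chi a^{[J]}] &= \mathcal O(h^{(J+1)/2}), \\
\mathscr G\,\mathscr Q_+^{[J]}[\chi a^{[J]}]-\chi a^{[J]} &= \mathcal O(h^{(J+1)/2}).
\end{align*}
Thus part~(1) will reduce to constructing $(a^{[J]},z^{[J]})$ so that $\mathscr Q_\pm^{[J]}[\chi a^{[J]}] = \mathcal O(h^{(J+1)/2})$, since then the first relation gives the sought quasimode equation. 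The statement on the principal symbol of $\mathscr Q_+^{[J]}$ is read off directly from~\eqref{eq.P0-1}.

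The operator $\mathscr Q_\pm^{[J]}$ is scalar in $(x_2,\xi_2)$, and its principal symbol is $\mathsf q_0(x_2,\xi_2) = z - \mathcal B(\xi_2 + i\varphi_u'(x_2),x_2) - \Sigma(x_2)$. Setting $z_0 = b_0$, the eikonal equation \eqref{eq.eikonal} ensures that $\mathsf q_0$ vanishes identically on the Lagrangian $\{\xi_2=0\}$. The strategy is then the classical one: since $a^{[J]}$ depends only on $x_2$, I would factor $\xi_2$ out of $\mathsf q_0$, use $\mathrm{Op}_h^{w,2}(\xi_2) = -ih\partial_{x_2}$, and exploit the Moyal product with the subprincipal terms (expanded as in Proposition~\ref{prop.symbol}) to rewrite $\mathscr Q_\pm^{[J]}[\chi a^{[J]}]$ as a formal series in $h^{1/2}$ of differential operators in $x_2$ acting on $\chi a^{[J]}$. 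Vanishing at each order produces a hierarchy of transport ODEs on $\R_{x_2}$. At order $h^0$ the equation is automatic; at order $h^{1/2}$ the contribution of $M_1$ vanishes by the same odd-in-$x_1$ parity argument against the Gaussians~\eqref{def.FX2}--\eqref{def.GX2} that was used in the proof of Corollary~\ref{cor.coercivity}, forcing $z_1 = 0$. At each subsequent order one gets an inhomogeneous first-order ODE of the type $v(x_2) a'_{j-2} + w(x_2) a_{j-2} = z_j a_0 + F_j[a_0,\dots,a_{j-3}]$, where $v$ vanishes simply at $c_u$ because $\partial_{\xi_2}\mathsf q_0(c_u,0) = -\partial_1\mathcal B(0,c_u)=0$; the requirement of smoothness at the critical point $c_u$ determines $z_j$ uniquely, and $a_{j-2}$ is then fixed by the normalization $a_0(c_u) = 1$.

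For part~(2), let $\lambda_u$ and $\Pi_u$ denote the ground-state eigenvalue and rank-one spectral projector of $\mathscr M_u$. Part~(1) together with Theorem~\ref{thm.agmon} classically yields $|\lambda_u - z_0| = \mathcal O(h)$ and a spectral gap of order $h$ between $\lambda_u$ and the rest of $\mathrm{sp}(\mathscr M_u)$ in a window $|z - b_0| \leq C_0 h$. I would then write $(\mathscr M_u - \lambda_u)(1-\Pi_u)\mathbf f_{J,u}^{\mathrm{wkb}} = (1-\Pi_u) r$ with $r = \mathcal O(h^{(J+1)/2} e^{-\varphi_u/h})$ in all $B^L \otimes L^2$-norms, and control this in the weighted norm $e^{\hat\varphi_u/h}$ of Lemma~\ref{lem.phiuhat} by conjugating and applying Corollary~\ref{cor.functionalin} on the subspace orthogonal to $\Pi_u$: the spectral gap upgrades the elliptic bound into
\[ \| e^{\hat\varphi_u/h} (1 - \Pi_u) \mathbf f_{J,u}^{\mathrm{wkb}} \|_{L^2} \leq C h^{-1} \| e^{\hat\varphi_u/h} r \|_{B^N \otimes L^2} = \mathcal O(h^{J/2 - 1 - N}).\]
Since $\hat\varphi_u$ agrees with $\varphi_u$ outside a $h^{1/2}$-neighborhood of $c_u$ up to logarithmic losses, this gives the bound announced. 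The normalization $\|\mathbf f_{J,u}^{\mathrm{wkb}}\| = 1 + \mathcal O(h^{1/2})$ will follow from Laplace's method at the unique non-degenerate minimum of $\varphi_u$ at $c_u$, using $\|\mathsf F_{(c_u,0)}\|_{L^2(\R_{x_1})} = 1$, $a_0(c_u) = 1$, and the identity $\varphi_u''(c_u) = -b_0\gamma'(c_u) > 0$ from Lemma~\ref{lem.Brond=b0}, the prefactor $\varphi_u''(c_u)^{1/4} h^{-1/4}$ being precisely designed to compensate the Gaussian concentration $e^{-\varphi_u''(c_u)(x_2-c_u)^2/h}$.

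The main obstacle I anticipate is the bookkeeping of the formal reduction to transport equations at each order $h^{j/2}$: because the parametrix symbols $\mathsf Q_j$ are operator-valued and the Moyal expansion mixes derivatives in $X_2$ with the Landau-oscillator structure in $x_1$, one has to verify that after projecting on $\mathrm{span}\,\mathsf F_{X_2}$ the resulting scalar transport operators are indeed first-order with a simple zero of $v$ at $c_u$, and that the Fredholm solvability condition on $\R_{x_2}$ (which fixes $z_j$) is tractable. Once this hierarchy is set up and the $B^N$-losses in Corollary~\ref{cor.functionalin} are tracked, everything else is standard.
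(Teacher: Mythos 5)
Your proposal follows the paper's argument closely in both structure and substance: part~(1) is obtained from the right parametrix of Proposition~\ref{prop.parametrix}\eqref{eq.i} by reducing to the vanishing of $\mathscr Q_\pm^{[J]}[\chi a^{[J]}]$ and solving the resulting transport hierarchy (with $z_1=0$ from the odd-parity argument, $z_{j+2}$ chosen to cancel the simple zero of the transport coefficient $i\partial_1\mathcal B(i\varphi_u',x_2)$ at $c_u$), and part~(2) is proved via Corollary~\ref{cor.functionalin} with the refined weight $\hat\varphi_u$ of Lemma~\ref{lem.phiuhat}, followed by Laplace's method for the normalization. Two technical points you elide that the paper handles explicitly: Proposition~\ref{prop.parametrix} requires $\Phi\in S(\R)$, so one must work with the bounded modification $\tilde\varphi_u$ (which equals $\varphi_u$ on $[-A,A]\supset\mathrm{supp}\,\chi$) rather than $\varphi_u$ itself; and Corollary~\ref{cor.functionalin} produces a residual term $CR^2\|\chi_{R,h}(x_2-c_u)\,e^{\hat\varphi_u/h}(1-\Pi_u)\mathbf f^{\rm{wkb}}_{J,u}\|$ which does not simply drop out---the paper absorbs it by first establishing the unweighted estimate $\|(1-\Pi_u)\mathbf f^{\rm{wkb}}_{J,u}\|_{L^2}=\mathscr O(h^{J/2-3/4})$ from the quadratic form of $\mathscr M_u$ and the bound $|z^{[J]}-\mu|=\mathscr O(h^{J/2+1/4})$, and then using that $\hat\varphi_u=\mathscr O(h)$ on $\mathrm{supp}\,\chi_{R,h}$, which is the concrete content behind your phrase that the spectral gap upgrades the elliptic bound.
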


This results tells us that $\mathbf{f}^{\rm{wkb}}_{J,u}$ is a good approximation of the true ground state $\mathbf f_u$ of $\mathscr{M}_u$. This section is devoted to the proof of Theorem \ref{thm.WKB}.

\subsection{A sequence of transport equations}

We want to use the parametrix construction in Proposition \ref{prop.parametrix} with weight $\varphi_u$. Since $\varphi_u$ is not bounded, we use instead $\Phi = \tilde{\varphi}_u$ defined in \eqref{eq:defphitilde}. We may choose the constant $A>0$ defining $\tilde{\varphi}_u$ such that ${\rm{supp}}\, \chi \subset [-A,A]$. With this choice, and for any $z$ of the form 
\[z = b_0 + z_1 h^{\frac 12} + \cdots + z_J h^{\frac{J}{2}},\]
we obtain an $h$-pseudodifferential operator
\[ \mathscr{Q}^{[J]} = 
\begin{pmatrix}
\mathscr Q_{++}^{[J]} & \mathscr Q_{+}^{[J]} \\ \mathscr Q_{-}^{[J]} & \mathscr Q_{\pm}^{[J]}
\end{pmatrix},
 \]
such that
\begin{equation}\label{eq.param2}
 \begin{pmatrix}
\mathscr{M}_u^{\tilde{\varphi}_u} - z & \mathscr{F} \\ \mathscr{G} & 0
\end{pmatrix}
 \mathscr{Q}^{[J]} = I + \mathscr{O}(h^{\frac{J+1}{2}}).
\end{equation}
We recall that $\mathscr{F} = \Op_h^{w} \mathsf F_{(x_2,\xi_2)}(x_1)$, with the notation \eqref{def.FX2}. Proposition \ref{prop.parametrix} also tells us that $\mathscr{Q}_+^{[J]} = \mathscr{F} + \mathscr{O}(h^{\frac 12})$ in the symbol class $S(1)$. Applying \eqref{eq.param2} to a function of the form $(0,\chi (x_2) a(x_2))$, we deduce that
\begin{equation}\label{eq.inv1}
(\mathscr{M}_u^{\tilde{\varphi}_u}-z)\mathscr{Q}^{[J]}_+(\chi a)=-\mathscr{F}\mathscr{Q}^{[J]}_\pm (\chi a)+ \mathscr{O}\Big( h^{\frac{J+1}{2}} \| \chi a \| \Big) \,,
\end{equation}
in $L^2(\R^2)$, and
\begin{equation}\label{eq.inv2}
\mathscr{G} \mathscr{Q}^{[J]}_+(\chi a )=\chi a + \mathscr{O} \Big( h^{\frac{J+1}{2}} \| \chi a \| \Big)\,,
\end{equation}
in $L^2(\R)$. Our quasimode $\mathbf{f}^{\rm{wkb}}_{J,u}$ should solve the eigenvalue equation for $\mathscr{M}_u$, which suggests to find $z$ and $a$ such that $\mathscr{Q}_{\pm}^{[J]}(\chi a)$ is small (in \eqref{eq.inv1}). This is the purpose of the following lemma.

\begin{lemma}\label{lem.transport}
There exist sequences of smooth functions $(a_j)_{j \in \mathbb N}$ and of numbers $(z_j)_{j \in \mathbb N}$ such that
\[ \mathscr{Q}_\pm^{[J]} ( \chi a^{[J]} ) = \mathscr{O}(h^{\frac{J+1}{2}}) \quad {\rm{in}} \quad L^2(\lbrace \chi =1 \rbrace), \]
for all $J \in \N$, where $a^{[J]} = a_0 + h^{\frac 12}a_1 + \cdots + h^{\frac J2} a_J$ and $z = z_0 +h^{\frac 12} z_1 + \cdots h^{\frac J2} z_J$.
Moreover, 
\[z_0 = b_0, \quad z_1 = 0, \quad z_2 = \frac{\sqrt{\partial_1^2 B(C_u) \partial_2^2 B(C_u)}}{2b_0} + \frac{\big( \sqrt{\partial_1^2 B(C_u)} + \sqrt{\partial_2^2B(C_u)} \big)^2}{4b_0}, \]
and $a_0$ is explicitly given in equation \eqref{def.a0} below.
\end{lemma}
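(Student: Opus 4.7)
The plan is to determine the sequences $(a_j,z_j)_{j\geq 0}$ by matching powers of $h^{1/2}$ in the identity to be achieved. By Propositions~\ref{prop.P0z} and~\ref{prop.parametrix}, $\mathscr{Q}_\pm^{[J]}=\Op_h^{w,2}\mathsf{Q}_\pm^{[J]}$ is a scalar pseudodifferential operator in $x_2$ whose symbol has the form
\[\mathsf{Q}_\pm(z,h;x_2,\xi_2)=\bigl(z-\mathcal{B}(\xi_2+i\varphi_u'(x_2),x_2)-\Sigma(x_2)\bigr)+h^{1/2}\mathsf{Q}_{1,\pm}+h\mathsf{Q}_{2,\pm}+\cdots.\]
The eikonal equation \eqref{eq.eikonal} means precisely that the principal symbol $\mathsf{Q}_{0,\pm}(b_0;\cdot)$ vanishes at $\xi_2=0$, so it factors as $\xi_2\,r(x_2,\xi_2)$ with $r(x_2,0)=-\partial_1\mathcal{B}(i\varphi_u'(x_2),x_2)$. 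Combined with the standard expansion
\[\Op_h^{w,2}(p)u(x_2)=\sum_{k+l\leq N}\frac{(-ih)^{k+l}}{2^{k}\,k!\,l!}(\partial_{\xi_2}^{k+l}\partial_{x_2}^k p)(x_2,0)\,(\partial_{x_2}^l u)(x_2)+\mathscr{O}(h^{N+1}),\]
valid for smooth $h$-independent $u$ (obtained by Taylor-expanding the oscillatory integral and integrating by parts in $\xi_2$), this lets one expand $\mathscr{Q}_\pm^{[J]}(\chi a^{[J]})$ order by order in $h^{1/2}$.

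Matching the first few orders, the $h^0$ equation reads $(z_0-b_0)\chi a_0=0$ and forces $z_0=b_0$. At order $h^{1/2}$ only $z_1\chi a_0$ survives, because $\mathsf{Q}_{1,\pm}\equiv 0$ by the parity argument already used in the proof of Corollary~\ref{cor.coercivity} ($M_1$ is odd in $x_1$ whereas $\mathsf{F}_{X_2},\mathsf{G}_{X_2}$ are Gaussian), which forces $z_1=0$. At order $h$, the factoring $\mathsf{Q}_{0,\pm}(b_0)=\xi_2\,r$ combined with the $(-ih)$-term of the Weyl expansion produces the first-order transport equation
\[V(x_2)\,a_0'(x_2)+W(x_2)\,a_0(x_2)=z_2\,a_0(x_2)\qquad\text{on } \{\chi=1\},\]
with $V(x_2):=i\partial_1\mathcal{B}(i\varphi_u'(x_2),x_2)$ (real-valued, by parity of $\mathcal{B}$ in its first argument) and $W$ an explicit smooth function built from $\mathsf{Q}_{2,\pm}(b_0;x_2,0)$ via \eqref{Q2eq} and the $h$-subprincipal of $\Op_h^{w,2}\mathsf{Q}_{0,\pm}(b_0)$. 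Since $\varphi_u'(c_u)=0$ and $\mathcal{B}$ is even in its first argument, $V$ has a simple zero at $c_u$; using $\varphi_u''(c_u)=-b_0\gamma'(c_u)$ and $\mathrm{Hess}_{C_u}B$ diagonal, one computes $V'(c_u)=b_0\,\partial_1^2 B(C_u)\,\gamma'(c_u)\neq 0$. Thus $c_u$ is a regular Fuchsian singular point of the transport equation, and the requirement that $a_0$ be smooth and normalized by $a_0(c_u)=1$ uniquely fixes $z_2=W(c_u)$ and $a_0$. A direct evaluation of $W(c_u)$ then yields the announced formula for $z_2$.

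The higher terms $(a_j,z_{j+2})$ for $j\geq 1$ are built by induction: at order $h^{(j+2)/2}$ one obtains an inhomogeneous transport equation
\[V(x_2)\,a_j'(x_2)+(W(x_2)-z_2)\,a_j(x_2)=z_{j+2}\,a_0(x_2)+F_j(x_2),\]
where $F_j$ depends only on $(a_0,\ldots,a_{j-1})$ and $(z_2,\ldots,z_{j+1})$. The homogeneous operator $V\partial+W-z_2$ has $a_0$ as a one-dimensional kernel of smooth germs at $c_u$; the Fredholm alternative in this class then selects a unique $z_{j+2}$ rendering $z_{j+2}a_0+F_j$ solvable, and determines $a_j$ uniquely modulo $a_0$ (fixed e.g.\ by $a_j(c_u)=0$). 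Away from $c_u$ the ODE is regular, so the local solution extends smoothly to $\mathrm{supp}\,\chi$. The hard part will be the analysis at the Fuchsian singularity $c_u$—establishing the Fredholmness of $V\partial+W-z_2$ and the smoothness of its solutions—and the accompanying algebraic bookkeeping required to compute $W(c_u)$ explicitly from the Poisson-bracket piece $\tfrac{1}{2i}\{\mathsf{Q}_0,\mathsf{P}_0\}\mathsf{Q}_0\big|_\pm$ appearing in \eqref{Q2eq} together with the subprincipal operators $M_1,M_2$, which is precisely what pins down the stated value of $z_2$.
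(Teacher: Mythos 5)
Your proposal follows the same strategy as the paper's proof: expand the scalar symbol $\mathsf{Q}^\pm=\mathsf{Q}_0^\pm+h^{1/2}\mathsf{Q}_1^\pm+\cdots$ and the amplitude $a^{[J]}$ in powers of $h^{1/2}$, apply a Taylor expansion of $\Op_h^{w,2}$ at $\xi_2=0$ (your explicit sum is the fully expanded version of the paper's formula \eqref{exp.Ops}), and match orders: the eikonal equation at order $h^0$ forces $z_0=b_0$, the parity argument $\mathsf{Q}_1^\pm\equiv 0$ gives $z_1=0$, the order-$h$ terms produce a first-order transport ODE for $a_0$ whose leading coefficient $V=i\partial_1\mathcal B(i\varphi_u',x_2)$ has a simple zero at $c_u$ so that regularity pins down $z_2$, and an induction handles the higher $(a_j,z_{j+2})$. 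The paper does not invoke Fuchsian/Fredholm language; it simply notes that the zeroth-order coefficient $D(x_2)$ vanishes to first order at $c_u$ once $z_2$ is chosen, so $D/V$ is smooth and $a_0$ is obtained by explicit quadrature as in \eqref{def.a0} — same mechanism, presented more concretely, and this concreteness is what makes the coefficient computation in Appendix~\ref{sec.calculs} possible. One bookkeeping slip to fix: since $\mathsf{Q}_0^\pm(x_2,0)=z-b_0$ contributes $+z_2 a_0$ at order $h$, the transport equation reads $Va_0'+Wa_0+z_2a_0=0$ (the paper's $\mathcal T a_0=0$), so the regularity condition gives $z_2=-W(c_u)$ rather than $z_2=W(c_u)$; this does not affect the structure of your argument, only the sign convention. (A minor side note: your displayed $V'(c_u)$ has a stray factor of $b_0^2$, since $\partial_1^2\mathcal B(0,c_u)=\partial_1^2 B(C_u)/b_0^2$.)
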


\begin{proof}
We recall that $\mathscr{Q}^{[J]}_\pm$ is a pseudodifferential operator with symbol in $S(\R^2)$ of the form
\begin{equation}
\mathsf{Q}^\pm = \mathsf Q_0^{\pm} + h^{\frac 12} \mathsf Q^{\pm}_1 + h \mathsf Q^{\pm}_2 + \cdots + h^{\frac{J}{2}} \mathsf Q^{\pm}_J.
\end{equation}
with 
\begin{equation}
\begin{cases}
\mathsf Q_0^{\pm}(x_2,\xi_2) = z - \mathcal B(\xi_2 + i \tilde{\varphi}_u'(x_2),x_2) - \Sigma(x_2) \\
\mathsf Q_1^{\pm}(x_2,\xi_2) = 0.
\end{cases}
\end{equation}
Now note that, given any symbol $\sigma \in S(1)$ and any function $a \in \mathscr{C}^\infty_0(\R)$, we have 
\begin{equation}\label{exp.Ops}
\Op_h^w ( \sigma(x,\xi) ) a = \Op_1^w( \sigma(x,h\xi)) a = \sum_{n=0}^{N} \frac{h^n}{n!} \Op_1^w ( \partial_\xi^n \sigma(x,0) \xi^n) a + \mathscr{O}(h^{N+1}).
\end{equation}
Therefore, we can expand our quantity of interest
\begin{equation}\label{exp.Q}
 \mathscr{Q}_\pm^{[J]}( \chi a_h) = \Big( \Op_h^w ( \mathsf Q_0^\pm) + h^{\frac 12} \Op_h^w (\mathsf Q_1^\pm) + \cdots \Big) (\chi a_0 + h^{\frac 12} \chi a_1 + \cdots ) 
 \end{equation}
in powers of $h^{\frac 12}$. We want every term of the expansion to vanish, and this gives rise to a series of equation. We explain below how to find the sequence $(a_j)_{j\in \mathbb N}$ that solves these equations, and the claimed result will follow. Note that, since the terms in \eqref{exp.Ops} are differential operators, and since we only need to estimate \eqref{exp.Ops} on $\lbrace \chi  = 1 \rbrace$, we can remove $\chi$ in the equations.

\medskip
\textit{Equation of order} $h^0$. The first equation arising from \eqref{exp.Q} is
\[ \mathcal B(i \tilde{\varphi}_u'(x_2),x_2) + \Sigma(x_2) = z_0 .\]
Since we want $z$ to be close to $b_0$, we choose $z_0=b_0$. On $\lbrace \chi =1 \rbrace$, we have $\tilde{\varphi}_u = \varphi_u$ and the equation is solved by Lemma \ref{lem.Brond=b0}. We then have $\mathsf{Q}_0^\pm(x_2,0) = z-z_0$.

\medskip
\textit{Equation of order} $h^{\frac 12}$. Due to the first equation, and since $\mathsf{Q}_1^{\pm} = 0$, it only remains $z_1 a_0 = 0$, which is solved by taking $z_1=0$.

\medskip
\textit{Equation of order} $h$. The terms of order $h$ in equation \eqref{exp.Q} give a differential equation for $a_0$,
\begin{equation}\label{transport0}
 \mathcal T a_0 = 0, 
 \end{equation}
with
\[ \mathcal T = i \partial_{1} \mathcal B( i \varphi_u'(x_2),x_2) \frac{\partial}{\partial x_2} + \frac 12 \frac{d}{dx_2} \Big( i \partial_{1} \mathcal B(i \varphi_u'(x_2),x_2) \Big) + \mathsf Q_2^{\pm}(x_2,0) + z_2. \]
In comparison to standard WKB constructions, $\mathcal T$ is our \emph{transport} equation.
Note that, due to our assumptions on $\mathcal B$, the coefficient $i \partial_{1} \mathcal B(i \varphi_u'(x_2),x_2)$ is real and vanishes only when $x_2=c_u$, and to first order only. If we choose

\[ z_2 = \frac{\varphi_u''(c_u)}{2}\partial_{1}^2 \mathcal B(0,c_u) - \mathsf{Q}_2^{\pm}(c_u,0), \]
then the function \begin{equation}\label{eq.D} D(x_2) =  \frac 12 \frac{d}{dx_2} \Big( i \partial_{1} \mathcal B(i \varphi_u'(x_2),x_2) \Big) + \mathsf Q_2^{\pm}(x_2,0) + z_2\end{equation} vanishes to first order at $x_2=c_u$. Thus, equation \eqref{transport0} has a smooth solution,
\begin{equation}\label{def.a0}
a_0(x_2) = \exp \Big( - \int_{c_u}^{x_2} \frac{D(s)}{i \partial_{1} \mathcal B(i \varphi_u'(s),s)} \dd s \Big).
\end{equation}

\textit{Equation of order} $h^{\frac{j+2}{2}}$ \textit{for} $j \geq 1$. We then follow the usual strategy, with details carried over in \cite{DS99} for instance. For all $j \geq 1$, we obtain an equation for $a_j$ of the form
\begin{equation}\label{eq.transport2}
\mathcal T a_j + z_{j+2} a_0 = g_j(x_2; a_0, \cdots, a_{j-1}, z_0, \cdots z_{j+1}),
\end{equation}
where $g_j$ is some function of the quantities fixed at previous steps. We can always choose $z_{j+2}$ such that \eqref{eq.transport2} has as a smooth solution $a_j$.
\end{proof}

\subsection{Properties of the WKB-like state}

Lemma \ref{lem.transport} provides us with a sequence $(a_j)_{j\in \N}$ defining the quasimode $\mathbf{f}^{\rm{wkb}}_{J,u}$. It then follows from \eqref{eq.inv1} that
\[(\mathscr{M}_u^{\tilde \varphi_u}-z^{[J]})\mathscr{Q}^{[J]}_+[\chi a^{[J]} ]=\mathscr{O}(h^{\frac{J+1}{2}}) \quad {\rm{in}} \quad B^N(\R) \otimes L^2(\lbrace \chi =1 \rbrace)\,,\]
for all $N \in \N$. Since $\tilde \varphi_u = \varphi_u$ on ${\rm{supp}} \, \chi$, and by definition of $\mathbf{f}^{\rm{wkb}}_{J,u}$ we deduce
	\begin{equation} \label{eq.bkwpoid}
	e^{\varphi_u/h}(\mathscr{M}_u-z^{[J]}) \mathbf{f}^{\rm{wkb}}_{J,u} =\mathscr{O}(h^{\frac{J}{2} +\frac 14} ) \quad {\rm{in}} \quad B^N(\R) \otimes L^2(\lbrace \chi =1 \rbrace)\,.
	\end{equation}
We also get

\begin{lemma}\label{lem.wkb}
The state $\mathbf{f}^{\rm{wkb}}_{J,u}$, and the real number $z^{[J]}$ constructed using Lemma~\ref{lem.transport} are such that:
\begin{enumerate}
\item For all $N \in \N$, 
	\[ \| (\mathscr{M}_u - z^{[J]}) \mathbf f^{\rm{wkb}}_{J,u} \|_{B^N \otimes L^2} = \mathscr{O}\big(h^{\frac{J}{2} + \frac 14}\big). \]
\item The ground state energy $\mu$ of $\mathscr{M}_u$ satisfies
\[|z^{[J]} - \mu | = \mathscr{O}\big(h^{\frac{J}{2}+ \frac 14}\big). \]
\item The WKB state is asymptotically normalized: $\|\mathbf{f}^{\rm{wkb}}_{J,u}\| = 1 + \mathscr{O}(h^{\frac 12})$.
\end{enumerate}
\end{lemma}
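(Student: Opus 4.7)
The three statements have different flavors: item~(1) extends the localized weighted estimate \eqref{eq.bkwpoid} to a global unweighted one, item~(3) is a Laplace-type asymptotic computation, and item~(2) follows from (1) and (3) via the spectral theorem together with a priori semiclassical information on the spectrum of $\mathscr M_u$ near $b_0$.

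For item~(1), on the region $\{\chi=1\}$ the bound is immediate from \eqref{eq.bkwpoid}, since the weight $e^{\varphi_u/h}\ge 1$ may be dropped only strengthening the inequality. To extend the bound globally in $x_2$, I would pick a smooth cutoff $\psi$ with $\psi\equiv 1$ on a neighborhood of $c_u$ and $\mathrm{supp}\,\psi\subset\{\chi=1\}$, and split
\[(\mathscr M_u-z^{[J]})\mathbf f^{\rm wkb}_{J,u}=\psi(\mathscr M_u-z^{[J]})\mathbf f^{\rm wkb}_{J,u}+(\mathscr M_u-z^{[J]})\bigl((1-\psi)\mathbf f^{\rm wkb}_{J,u}\bigr)+[(1-\psi),\mathscr M_u]\mathbf f^{\rm wkb}_{J,u}.\]
The first term is handled by the localized estimate. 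Outside $\mathrm{supp}\,\psi$ one has $\varphi_u\geq\delta>0$, so $(1-\psi)\mathbf f^{\rm wkb}_{J,u}$ is $\mathscr O(h^{\infty})$ in any $B^L\otimes L^2$-norm (derivatives of the Gaussian and of $e^{-\varphi_u/h}$ produce only algebraic $h$-losses, absorbed by $e^{-\delta/h}$), and applying the bounded operator $\mathscr M_u-z^{[J]}$ preserves this. The commutator $[(1-\psi),\mathscr M_u]$ is pseudodifferential with symbol supported where $\nabla\psi\neq 0$, again outside a neighborhood of $c_u$, so by pseudolocality its contribution is also $\mathscr O(h^{\infty})$.

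For item~(3), the norm expands as
\[\|\mathbf f^{\rm wkb}_{J,u}\|^2=\sqrt{\varphi_u''(c_u)}\,h^{-1/2}\int_{\R^2}e^{-2\varphi_u(x_2)/h}\bigl|\mathscr Q^{[J]}_+[\chi a^{[J]}](x_1,x_2)\bigr|^2\,\dd x_1\,\dd x_2.\]
The principal symbol of $\mathscr Q^{[J]}_+$ is $\mathsf F_{X_2}$, so to leading order $\mathscr Q^{[J]}_+[\chi a^{[J]}](x_1,x_2)\simeq\mathsf F_{(x_2,0)}(x_1)\,\chi(x_2)\,a_0(x_2)$; since $\|\mathsf F_{X_2}\|_{L^2_{x_1}}=1$, the $x_1$-integration contributes $|a_0(x_2)|^2\chi(x_2)^2+\mathscr O(h^{1/2})$. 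Laplace's method at $x_2=c_u$---using $\varphi_u(c_u)=\varphi_u'(c_u)=0$, $\varphi_u''(c_u)>0$, $\chi(c_u)=1$, $a_0(c_u)=1$---then yields $\|\mathbf f^{\rm wkb}_{J,u}\|^2=1+\mathscr O(h^{1/2})$, the prefactor $\varphi_u''(c_u)^{1/4}h^{-1/4}$ being precisely the Gaussian concentration factor.

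For item~(2), combining (1) and (3), $\hat{\mathbf f}:=\mathbf f^{\rm wkb}_{J,u}/\|\mathbf f^{\rm wkb}_{J,u}\|$ is of unit norm and satisfies $(\mathscr M_u-z^{[J]})\hat{\mathbf f}=\mathscr O(h^{J/2+1/4})$, so the spectral theorem yields a $\lambda\in\mathrm{spec}(\mathscr M_u)$ with $|\lambda-z^{[J]}|=\mathscr O(h^{J/2+1/4})$. From the single-well semiclassical analysis of $\mathscr M_u$ (the analogue of Theorem~\ref{thm.RVN}, unaffected by $\Sigma$ since $\mathrm{supp}\,\Sigma$ is near $c_d$), the low-lying spectrum near $b_0$ consists of simple eigenvalues with consecutive gaps of order $h$; the WKB construction is designed so that the ground-state coefficient matches $z_2$. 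Hence for $J\geq 2$, where $h^{J/2+1/4}=o(h)$, the spectral value $\lambda$ is uniquely identified as $\mu$, and for smaller $J$ the conclusion propagates via $|z^{[J']}-z^{[J]}|=\mathscr O(h^{(J+1)/2})$ for $J'<J$. The main obstacle is item~(1): controlling the nonlocality of the pseudodifferential operator $\mathscr M_u$ via cutoff and commutator arguments while leveraging the exponential decay of $\mathbf f^{\rm wkb}_{J,u}$ outside a neighborhood of $c_u$.
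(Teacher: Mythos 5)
Your proposal is correct and follows essentially the same route as the paper's proof: item~(1) from the weighted estimate \eqref{eq.bkwpoid} together with the positivity of $\varphi_u$ off $\{\chi=1\}$ (the paper is terse here, relying on the exponential smallness outside $\{\chi=1\}$, while you spell out the cutoff-plus-commutator bookkeeping and the pseudolocality argument), item~(3) via Laplace/stationary phase at $x_2=c_u$ using that the principal symbol of $\mathscr{Q}^{[J]}_+$ is $\mathsf F_{X_2}$ with $\|\mathsf F_{X_2}\|_{L^2_{x_1}}=1$ and $\chi(c_u)a_0(c_u)=1$, and item~(2) from the spectral theorem combined with the a priori single-well asymptotics of the low-lying spectrum of $\mathscr{M}_u$ near $b_0+hz_2$. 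The extra detail you supply for the extension to a global unweighted bound in (1) is exactly the content the paper's one-line remark is summarizing.
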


\begin{proof}
The first statement follows from \eqref{eq.bkwpoid} using $\varphi_u \geq 0$ on $\lbrace \chi =1 \rbrace$, and $\varphi_u  >0$ on $\overline{\lbrace \chi \neq 1 \rbrace}$. Then, by the spectral theorem we deduce that
\[ |\mu - z^{[J]} | = \mathscr{O}\big(h^{\frac{J}{2}+\frac 14}  \big), \]
since $\mu$ is the only eigenvalue of $\mathscr{M}_u$ behaving like  $b_0+hz_2+\mathcal{O}(h^{\frac32})$.
Finally, we can estimate the norm of $\mathbf{f}^{\rm{wkb}}_{J,u}$ using that
\[ \mathscr{Q}^{[J]}_+ = \mathscr{F} + \mathscr{O}(h^{\frac12}) = \Op_h^{w,2}\big( \mathsf F_{(x_2,\xi_2)}(x_1) \big) + \mathscr{O}(h^{\frac 12}). \]
Indeed, we then have
\begin{align*}
\| \mathbf f^{\rm{wkb}}_{J,u} \|^2 &= \sqrt{\frac{\varphi_u''(c_u)}{h}} \| e^{- \varphi_u / h } \Op_h^{w,2} \big( \mathsf F_{(x_2,\xi_2)} \big) \chi a_0 \|^2 ( 1 + \mathscr{O}(h^{\frac 12}))\\
&=  \sqrt{\frac{\varphi_u''(c_u)}{h}} \int e^{-2 \varphi_u(x_2) /h} |\mathsf{F}_{(x_2,0)}(x_1) \chi(x_2) a_0(x_2)|^2 \dd x_1  \dd x_2 \big( 1 + \mathscr{O}(h^{\frac 12}) \big)\\
&= \sqrt{\frac{\varphi_u''(c_u)}{h}} \int e^{-2 \varphi_u(x_2) /h} |\chi(x_2) a_0(x_2)|^2 \dd x_2 \big( 1+ \mathscr{O}(h^{\frac 12}) \big),
\end{align*}
because $x_1 \mapsto \mathsf F_{X_2}(x_1)$ has norm $1$. We estimate this integral using the stationary phase method. We deduce $\| \mathbf f^{\rm{wkb}}_{J,u} \| = 1 + \mathscr{O}(h^{\frac 12})$ since $\chi(c_u)a_0(c_u)=1$.
\end{proof}

\subsection{Proof of Theorem \ref{thm.WKB}}

Using the construction above, Theorem \ref{thm.WKB} is then a consequence of the following lemma.

\begin{lemma}\label{prop.approx}
There exists $N >0$ such that, for all $L, J \in \N$ we have
\[e^{\varphi_u/h}(\mathbf{f}_{J,u}^{\rm{wkb}}-\Pi_u \mathbf{f}_{J,u}^{\rm{wkb}})=\mathscr{O}(h^{\frac{J}{2}-1-N }) \quad \textrm{in} \quad B^L(\R) \otimes L^2(\lbrace \chi =1 \rbrace)\,.\]
\end{lemma}

\begin{proof}
We first notice that
\begin{equation}\label{eq.norm.app}
\| \mathbf f^{\rm{wkb}}_{J,u}- \Pi_u  \mathbf f^{\rm{wkb}}_{J,u} \|_{L^2(\R^2)} = \mathscr{O} \big( h^{\frac{J}{2}-\frac 34} \big)\,,
\end{equation}
which follows from
\begin{align*}
h b_0 \| \mathbf f^{\rm{wkb}}_{J,u}- \Pi_u  \mathbf f^{\rm{wkb}}_{J,u} \|^2 &\leq \langle \mathscr M_u (\mathbf f^{\rm{wkb}}_{J,u}- \Pi_u  \mathbf f^{\rm{wkb}}_{J,u}), \mathbf f^{\rm{wkb}}_{J,u}- \Pi_u  \mathbf f^{\rm{wkb}}_{J,u} \rangle \\
&\leq \| \mathscr{M}_u \mathbf f^{\rm{wkb}}_{J,u} - \mu  \Pi_u  \mathbf f^{\rm{wkb}}_{J,u} \| \| \mathbf f^{\rm{wkb}}_{J,u}- \Pi_u  \mathbf f^{\rm{wkb}}_{J,u} \|\\
& \leq \big( \| (\mathscr{M}_u - z^{[J]}) \mathbf f^{\rm{wkb}}_{J,u} \|+ |z^{[J]} - \mu | \|  \mathbf f^{\rm{wkb}}_{J,u} \| \big)  \| \mathbf f^{\rm{wkb}}_{J,u}- \Pi_u  \mathbf f^{\rm{wkb}}_{J,u} \|,
\end{align*}
and Lemma \ref{lem.wkb}.

We then use Corollary \ref{cor.functionalin} with $z=\mu$, $f= e^{\hat{\varphi}_u/h} \big( \mathbf f^{\rm{wkb}}_{J,u}- \Pi_u  \mathbf f^{\rm{wkb}}_{J,u} \big)$, and the weight $\Phi=\hat{\varphi}_u$ defined in Lemma \ref{lem.phiuhat}. We obtain
\begin{multline}
\| e^{\hat{\varphi}_u/h} \big( \mathbf f^{\rm{wkb}}_{J,u}- \Pi_u  \mathbf f^{\rm{wkb}}_{J,u} \big) \|_{B^L \otimes L^2} \leq \frac{C}{h} \|e^{\hat{\varphi}_u/h} (\mathscr{M}_u - \mu) ( \mathbf f^{\rm{wkb}}_{J,u}- \Pi_u  \mathbf f^{\rm{wkb}}_{J,u}) \|_{B^N \otimes L^2} \\ 
+ C \|\chi_{R,h} e^{\hat{\varphi}_u/h} (\mathbf f^{\rm{wkb}}_{J,u}- \Pi_u  \mathbf f^{\rm{wkb}}_{J,u}) \|\,,
\end{multline}
for some $C, R >0$.
On the support of $\chi_{R,h}$, the function $\hat{\varphi}_u$ is of order $h$. Therefore, the last term can be estimated by using \eqref{eq.norm.app}. We deduce that
\begin{multline}
\| e^{\hat{\varphi}_u/h} \big( \mathbf f^{\rm{wkb}}_{J,u}- \Pi_u  \mathbf f^{\rm{wkb}}_{J,u} \big) \|_{B^L \otimes L^2} \leq \frac{C}{h} \|e^{\hat{\varphi}_u/h} (\mathscr{M}_u - \mu) ( \mathbf f^{\rm{wkb}}_{J,u}- \Pi_u  \mathbf f^{\rm{wkb}}_{J,u}) \|_{B^N \otimes L^2} \\ + \mathscr{O}\big(h^{\frac{J}{2}-\frac 34} \big).
\end{multline}
Since $\Pi_u$ projects on the kernel of $\mathscr{M}_u -\mu$, we get
\begin{equation*}
\| e^{\hat{\varphi}_u/h} \big( \mathbf f^{\rm{wkb}}_{J,u}- \Pi_u  \mathbf f^{\rm{wkb}}_{J,u} \big) \|_{B^L \otimes L^2} \leq \frac{C}{h} \|e^{\hat{\varphi}_u/h} (\mathscr{M}_u - \mu) \mathbf f^{\rm{wkb}}_{J,u} \|_{B^N \otimes L^2} + \mathscr{O}\big(h^{\frac{J}{2}-\frac 34} \big).
\end{equation*}
We insert the weight $e^{\varphi_u/h}$ and replace $\mathbf{f}^{\rm{wkb}}_{J,u}$ by its expression to get
\begin{multline*}
\| e^{\hat{\varphi}_u/h} \big( \mathbf f^{\rm{wkb}}_{J,u}- \Pi_u  \mathbf f^{\rm{wkb}}_{J,u} \big) \|_{B^L \otimes L^2} \leq \frac{C}{h^{5/4}} \|e^{(\hat{\varphi}_u - \varphi_u)/h} (\mathscr{M}_u^{\varphi_u} - \mu) \mathscr{Q}_+^{[J]}(\chi a^{[J]}) \|_{B^N \otimes L^2} \\+ \mathscr{O}\big(h^{\frac{J}{2}-\frac 34}\big)\,.
\end{multline*}
Since $\hat \varphi_u \leq \varphi_u$, we can use Lemma \ref{lem.wkb} to find
\[ \| e^{\hat{\varphi}_u/h} \big( \mathbf f^{\rm{wkb}}_{J,u}- \Pi_u  \mathbf f^{\rm{wkb}}_{J,u} \big) \|_{B^L \otimes L^2} = \mathscr{O}\big(h^{\frac{J}{2} - 1} \big). \]
Finally, we use that $\hat \varphi_u \geq \tilde \varphi_u - CR^2 h|\ln h|$. This gives an $N>0$ such that
\[ h^N \| e^{\tilde{\varphi}_u/h} \big( \mathbf f^{\rm{wkb}}_{J,u}- \Pi_u  \mathbf f^{\rm{wkb}}_{J,u} \big) \|_{B^L \otimes L^2} = \mathscr{O}\big(h^{\frac{J}{2} - 1} \big),\]
and the result follows because $\tilde \varphi_u = \varphi_u$ on $\lbrace \chi =1 \rbrace$.
\end{proof}

\section{Interaction matrix and proof of Theorem~\ref{thm.main} }\label{sec.7}

\subsection{Interaction matrix}

Once we have the exponential decay of the eigenfunctions given by Theorem \ref{thm.agmon}, we can adapt the Helffer-Sjöstrand theory \cite{HS84} to obtain an effective interaction matrix describing the spectrum of the double-well operator $\mathscr{M}$. 
Recall that by Proposition~\ref{prop.Mh} the double-well magnetic Laplacian $\mathscr{L}$ is unitarily equivalent to $h \mathscr{M}$, so it suffices to understand the eigenvalues of $\mathscr{M}$.
We consider the eigenspace \[E=\ker(\mathscr{M}-\lambda_1)\oplus\ker(\mathscr{M}-\lambda_2)\] and let $\Pi_E$ be the associated orthogonal projection. We construct a basis for $E$ using the ground states $\mathbf f_u$ and $\mathbf f_d$ of $\mathscr{M}_u$ and $\mathscr{M}_d$ respectively. Recall that the \emph{up} and \emph{down} operators 
\[ \mathscr{M}_u = \mathscr{M} + \Sigma(x_2), \qquad \mathscr{M}_d = \mathscr{M} + \Sigma(-x_2) \]
are related by the symmetry $U \mathbf f (x) = \mathbf f (-x)$:
\[ U \mathscr{M}_u U = \mathscr{M}_d, \qquad \mathbf f_d = U \mathbf f_u.\]
We then define the truncated versions of $\mathbf f_u$ and $\mathbf f_d$,
\[f_{{u}}=\chi_{{u}}\mathbf f_{{u}}\,,\qquad f_{{d}}=\chi_{{d}}\mathbf f_{{d}} = U f_u\,,\]
where $\chi_{{d}}(x_2)=U\chi_{{u}}(x_2)=\chi_{{u}}(-x_2)$ and the cutoff $\chi_u$ is such that
\begin{equation}\label{def.chiu}
\chi_u(x_2) = 
\begin{cases}
0 &{\textrm{on }}  {\rm{supp}}( \Sigma ),\\
1 &{\textrm{away from a small neighborhood of }} {\rm{supp}}( \Sigma ).
\end{cases}
\end{equation}
Using the decay of $\mathbf f_u$ and $\mathbf f_d$, we deduce that they define good approximate eigenfunctions for $\mathscr{M}$. This is a corollary of Theorem \ref{thm.agmon}.

\begin{proposition}\label{prop.quasimodesexp}
Let $\kappa >0$. Then, if $\|\Sigma \|_{\infty}$ and $|{\rm{supp}}( \Sigma )|$ are small enough, for $\star\in\{u,d\}$, we have
\[\|(\mathscr{M}-\mu) f_\star\|=\mathscr{O}(e^{-(1- \kappa) \frac{S}{h}})\|f_\star\|\,,\]
with $S= \int_{c_d}^{c_u} \Gamma(s) \dd s$, and where $\mu$ is the ground state energy of $\mathscr{M}_u$.
Moreover, the first eigenvalues of $\mathscr{M}$ satisfy
\[\lambda_2 - \lambda_1 =\mathscr{O}(e^{-(1-\kappa)\frac{S}{h}})\,.\]	
\end{proposition}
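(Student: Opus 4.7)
The plan is to reduce the first estimate to a commutator bound. Since $\chi_u$ vanishes on $\mathrm{supp}(\Sigma)$, one has $\chi_u \Sigma = 0$, so from $\mathscr{M}_u \mathbf{f}_u = \mu \mathbf{f}_u$ the identity
\[(\mathscr{M}-\mu)f_u = [\mathscr{M}, \chi_u]\mathbf{f}_u\]
follows immediately. The commutator is carried by $\chi_u'$, whose support lies in a thin region around $\mathrm{supp}(\Sigma)$, which by construction is a small neighborhood of $c_d$. From the formula \eqref{eq.phius} and its analogue for $x_2 \leq c_d$, the weight $\varphi_u$ approaches $S = \int_{c_d}^{c_u} \Gamma(s)\,\dd s$ from either side of $c_d$, so by taking $|\mathrm{supp}(\Sigma)|$ small enough we may arrange $\tilde{\varphi}_u \geq S - \delta$ on $\mathrm{supp}(\chi_u')$ for any prescribed $\delta > 0$.

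To convert this into an exponential bound I would conjugate by $e^{\Phi/h}$ with $\Phi = (1-\eta)\tilde{\varphi}_u$ for small $\eta \in (0,\kappa)$. Since $\Phi$, $\chi_u$ and $\Sigma$ depend only on $x_2$ and $[\Sigma,\chi_u]=0$,
\[ e^{\Phi/h}[\mathscr{M}, \chi_u]\mathbf{f}_u = [\mathscr{M}_u^\Phi, \chi_u]\,e^{\Phi/h}\mathbf{f}_u. \]
Proposition~\ref{prop.symbol} identifies $\mathscr{M}_u^\Phi$ as an $h$-pseudodifferential operator in $x_2$, and Theorem~\ref{thm.agmon} provides the bound $\|e^{\Phi/h}\mathbf{f}_u\|_{B^L \otimes L^2} \leq C\|\mathbf{f}_u\|$. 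I would then introduce a second cutoff $\tilde{\chi}$ equal to $1$ on $\mathrm{supp}(\chi_u')$ and still supported where $\tilde{\varphi}_u \geq S - 2\delta$, and split
\[ [\mathscr{M}_u^\Phi, \chi_u] = \tilde{\chi}\,[\mathscr{M}_u^\Phi, \chi_u] + (1-\tilde{\chi})\,[\mathscr{M}_u^\Phi, \chi_u]. \]
The second piece is $\mathscr{O}(h^\infty)$ in operator norm by Moyal calculus, because every term in its symbolic expansion involves a derivative of $\chi_u$ and is therefore supported in $\mathrm{supp}(\chi_u') \subset \{\tilde{\chi}=1\}$. The first piece, after multiplication by $e^{-\Phi/h}$, carries the pointwise factor $e^{-(1-\eta)(S-2\delta)/h}$. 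Choosing $\eta$, $\delta$ with $(1-\eta)(S-2\delta) \geq (1-\kappa)S$, and using $\|\mathbf{f}_u\| = (1+o(1))\|f_u\|$ (again from Theorem~\ref{thm.agmon}), yields the claim. The estimate for $f_d$ follows via the symmetry $U$.

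For the gap, the eikonal relations $\tilde{\varphi}_u' = -\Gamma$ and $\tilde{\varphi}_d' = \Gamma$ on $(c_d,c_u)$ imply $\tilde{\varphi}_u + \tilde{\varphi}_d \equiv S$ there, so by Cauchy--Schwarz and Theorem~\ref{thm.agmon} the overlap $\langle f_u, f_d\rangle$ is $\mathscr{O}(e^{-(1-\kappa)S/h})$, making $\{f_u,f_d\}$ almost orthonormal. The min-max principle applied to $v = \alpha f_u + \beta f_d$, combined with the commutator estimate just established, gives
\[ \langle \mathscr{M}v, v\rangle = \mu \|v\|^2 + \mathscr{O}(e^{-(1-\kappa)S/h})(|\alpha|^2 + |\beta|^2), \]
whence $\lambda_2 \leq \mu + \mathscr{O}(e^{-(1-\kappa)S/h})$. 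A matching lower bound $\lambda_1 \geq \mu - \mathscr{O}(e^{-(1-\kappa)S/h})$ follows from a spectral projection argument: the a priori estimate \eqref{eq.apriorigap} together with Theorem~\ref{thm.RVN} (applied to each well separately) guarantees that the third eigenvalue is separated from $\mu$ by $\Omega(h)$, so the two-dimensional spectral projector on a $Ch^2$-neighborhood of $\mu$ must capture both $\lambda_1$ and $\lambda_2$, forcing each into $[\mu - Ce^{-(1-\kappa)S/h}, \mu + Ce^{-(1-\kappa)S/h}]$.

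The main obstacle will be the pseudodifferential bookkeeping in the decomposition of $[\mathscr{M}_u^\Phi, \chi_u]$: since $\mathscr{M}_u^\Phi$ is genuinely non-local in $x_2$, the $\mathscr{O}(h^\infty)$ step for $(1-\tilde{\chi})\,[\mathscr{M}_u^\Phi, \chi_u]$ requires careful Moyal expansion with uniform bounds in the $h$-dependent weight $\tilde{\varphi}_u$, precisely in the symbol class of Lemma~\ref{prop.symbolh12}. Once this localization is in hand, the rest is a standard Helffer--Sj\"ostrand adaptation.
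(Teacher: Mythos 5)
Your reduction $(\mathscr{M}-\mu)f_u = [\mathscr{M},\chi_u]\mathbf{f}_u$ is correct and coincides with the paper's identity $(\mathscr{M}-\mu)f_u = (\mathscr{M}_u-\mu)\bigl((\chi_u-1)\mathbf{f}_u\bigr)$ (the two expressions are equal since $\mathbf{f}_u$ is an eigenfunction). However, your strategy for estimating this quantity has a genuine gap that you flagged yourself but misdiagnosed as ``bookkeeping.'' The second piece in your decomposition, $(1-\tilde\chi)[\mathscr{M}_u^\Phi,\chi_u]$, is $\mathscr{O}(h^\infty)$ by Moyal calculus, and that is all it is: $h^\infty$, not $e^{-c/h}$. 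When you then bound the second contribution to $(\mathscr{M}-\mu)f_u$ you get at best $\mathscr{O}(h^\infty)\|\mathbf{f}_u\|$, which is exponentially \emph{larger} than the desired $\mathscr{O}(e^{-(1-\kappa)S/h})\|\mathbf{f}_u\|$ and therefore dominates, so your argument only proves $(\mathscr{M}-\mu)f_u = \mathscr{O}(h^\infty)$. To beat the Moyal remainder down to exponential size you would need analytic microlocal analysis in $\xi_2$, which is not available here because the weight $\tilde\varphi_u$ is merely $C^\infty$. This is not a bookkeeping issue but a structural obstruction to the pseudo-local route.

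The paper sidesteps the problem by localizing the \emph{function} rather than the \emph{operator}: it bounds
\[
\|(\mathscr{M}_u-\mu)((\chi_u-1)\mathbf{f}_u)\| \leq C\,\|(\chi_u-1)\mathbf{f}_u\|_{B^2\otimes L^2}
\]
using Calder\'on--Vaillancourt (no Moyal remainder to control), and then notes that $(\chi_u-1)$ is supported in a small neighborhood of $c_d$ where $\tilde\varphi_u > (1-\kappa)S$, so Theorem~\ref{thm.agmon} with weight $(1-\eta)\tilde\varphi_u$ yields $\|(\chi_u-1)\mathbf{f}_u\|_{B^2\otimes L^2} = \mathscr{O}(e^{-(1-\kappa)S/h})\|\mathbf{f}_u\|$. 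The exponential smallness comes entirely from the a priori decay of $\mathbf{f}_u$ on $\mathrm{supp}(1-\chi_u)$, not from any localization of the operator. If you drop the cutoff $\tilde\chi$ and instead keep the factor $(\chi_u-1)$ glued to $\mathbf{f}_u$, your proof repairs itself and becomes essentially identical to the paper's. Your argument for the gap $\lambda_2-\lambda_1$ (Cauchy--Schwarz for near-orthogonality, min-max for the upper bound, the a priori estimate \eqref{eq.apriorigap} to rule out a third eigenvalue near $\mu$) is correct and matches the paper's spectral-theorem argument in substance.
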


\begin{proof}
Note that
\[(\mathscr{M}-\mu) f_u =(\mathscr{M}-\mu) (\chi_u \mathbf f_u)=(\mathscr{M}_u -\mu) (\chi_u \mathbf f_u)\,,\]
where we used that $\Sigma(x_2)\chi_u(x_2)=0$. Then, we write
\[(\mathscr{M}_u-\mu) (\chi_u \mathbf f_u)=(\mathscr{M}_u - \mu) \mathbf f_u+(\mathscr{M}_u-\mu) ((\chi_u-1)\mathbf f_u)\,.\]
Since $\mathbf f_u$ is an eigenfunction of $\mathscr{M}_u$, we deduce
\[\|(\mathscr{M}-\mu) f_u\|=\|(\mathscr{M}_u-\mu) ((\chi_u-1)\mathbf f_u)\|\,.\]
Thanks to the Calder\'on-Vaillancourt theorem and Theorem \ref{thm.agmon}, we get that
\[\|(\mathscr{M}_u-\mu) ((\chi_u-1)\mathbf f_u)\|\leq C\| (\chi_u-1)\mathbf f_u\|_{B^2 \otimes L^2}=\mathscr{O}(e^{-(1-\kappa) \frac S h})\|\mathbf f_u\|\,,\]
where we used that $\tilde \varphi_u(x_2) > (1-\kappa) S$ on the support of $1-\chi_u$, if this support is small enough around $c_d$ (see \eqref{eq.phius}). 
Therefore, by using Theorem \ref{thm.agmon} again,
\[\|(\mathscr{M}-\mu) f_u\|={\mathscr{O}}(e^{-(1-\kappa) \frac S h})\|f_u\|\,.\]
By symmetry we also have
\[\|(\mathscr{M}-\mu) f_d\|={\mathscr{O}}(e^{-(1-\kappa) \frac S h})\|f_d\|\,.\]
Then, the spectral theorem tells us that there are two eigenvalues of $\mathscr{M}$ close to $\mu$, at a distance of order at most $\mathscr{O}(e^{-(1-\kappa)\frac Sh})$. Due to the a priori estimate \eqref{eq.apriorigap} of the spectrum of the double-well operator, these two eigenvalues are the only ones that close to $\mu$ and they are the lowest two ones. 
\end{proof}

We then construct a basis of the eigenspace $E$ using $f_d$ and $f_u$. Let us consider the projections of our one-well quasimodes $g_\star=\Pi_E f_\star$, with $\star\in\{u,d\}$. In order to orthonormalize this basis, we consider the Gram matrix,
\[G=\begin{pmatrix}
	\langle g_u,g_u\rangle&\langle g_u,g_d\rangle\\
	\langle g_d,g_u\rangle&\langle g_d,g_d\rangle
\end{pmatrix}=\begin{pmatrix}
	g_u\\
	g_d
\end{pmatrix}\cdot\begin{pmatrix}
	g_u&g_d
\end{pmatrix}\geq 0\,.\]
The matrix of the quadratic form associated with $\mathscr{M}$ in the basis $(g_u,g_d)$ is
\[M=\begin{pmatrix}
	\langle \mathscr{M} g_u,g_u\rangle&\langle  \mathscr{M} g_u,g_d\rangle\\
	\langle  \mathscr{M} g_d,g_u\rangle&\langle  \mathscr{M} g_d,g_d\rangle
\end{pmatrix}.\]
The following proposition is a consequence of Proposition \ref{prop.quasimodesexp}. It tells us that $\mathrm{span}(f_u, f_d)$ is an exponentially good approximation of $E$.

\begin{proposition}\label{prop.interaction}
	Let $\kappa >0$. We have, with $\star\in\{u,d\}$,
	\begin{equation}\label{eq.approx}
		\|g_\star-f_\star\|={\mathscr{O}}(e^{-(1-\kappa)\frac{S}{h}})\,,\quad \langle \mathscr{M} (g_\star-f_\star), g_\star - f_\star \rangle ={\mathscr{O}}(e^{-(1-\kappa)\frac{2S}{h}})\,.
	\end{equation}
	Moreover, the family $(g_u,g_d)$ is asymptotically an orthonormal basis of $F$. More precisely, we have
	\begin{equation}\label{eq.G}
	G=\mathrm{Id}+\mathrm{T}+{\mathscr{O}}(e^{-(1-\kappa) \frac{2S}{h}})\,,\quad \mathrm{T}=\begin{pmatrix}
			0&\langle f_u, f_d\rangle\\
			\langle f_d, f_u\rangle&0
		\end{pmatrix}\,.
	\end{equation}
	In addition, we have
	\begin{equation}\label{eq.L}
		M=\begin{pmatrix}
			\mu& w\\
			\overline{w}&\mu
		\end{pmatrix}+\mu\mathrm{T}+{\mathscr{O}}(e^{-(1-\kappa) \frac{2S}{h}})\,,\quad w=\langle(\mathscr{M}-\mu)f_u,f_d\rangle\,.
	\end{equation}
	Finally, the matrix of $\mathscr{M}$ in the orthonormal basis 
	$G^{-\frac12}\begin{pmatrix}
		g_u\\
		g_d
	\end{pmatrix}$
	is of the form
\[\begin{pmatrix}
		\mu& w\\
		\overline{w}&\mu
	\end{pmatrix}+{\mathscr{O}}(e^{-(1-\kappa) \frac{2S}{h}})+\mathscr{O}(|\rm T|^2)\,.\]
\end{proposition}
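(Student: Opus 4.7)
The plan follows the Helffer-Sjöstrand framework, with the two essential inputs being the quasimode estimate $\|(\mathscr{M}-\mu)f_\star\|=\mathscr{O}(e^{-(1-\kappa)S/h})$ from Proposition~\ref{prop.quasimodesexp} and the spectral gap $\lambda_3(\mathscr{M})-\lambda_2(\mathscr{M})\gtrsim h$ inherited, via the unitary equivalence $\mathscr{L}\simeq h\mathscr{M}$ of Proposition~\ref{prop.Mh}, from Theorem~\ref{thm.RVN}. Since $\mathscr{M}$ commutes with $\Pi_E$ and has spectrum on $E^\perp$ at distance $\gtrsim h$ from $\mu$, the restriction $(\mathscr{M}-\mu)|_{E^\perp}$ is invertible with inverse of norm $\leq Ch^{-1}$. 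Applied to $r_\star:=f_\star-g_\star=(\mathrm{Id}-\Pi_E)f_\star\in E^\perp$, this gives $\|r_\star\|\leq Ch^{-1}\|(\mathscr{M}-\mu)f_\star\|=\mathscr{O}(e^{-(1-\kappa)S/h})$, absorbing $h^{-1}$ by shrinking $\kappa$. Moreover $(\mathscr{M}-\mu)r_\star=(\mathrm{Id}-\Pi_E)(\mathscr{M}-\mu)f_\star$, so Cauchy-Schwarz yields $\langle(\mathscr{M}-\mu)r_\star,r_\star\rangle=\mathscr{O}(e^{-2(1-\kappa)S/h})$; combined with $\mu\|r_\star\|^2=\mathscr{O}(e^{-2(1-\kappa)S/h})$ this proves \eqref{eq.approx}.

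For the Gram matrix \eqref{eq.G}, I write $\|g_\star\|^2=\|f_\star\|^2-\|r_\star\|^2$; the first term is $1+\mathscr{O}(e^{-2(1-\kappa)S/h})$ because $\|f_u\|^2=\int\chi_u^2|\mathbf f_u|^2$ and Theorem~\ref{thm.agmon} provides exponential decay of $\mathbf f_u$ on $\mathrm{supp}(1-\chi_u^2)\subset\{\tilde\varphi_u\geq(1-\kappa)S\}$. The off-diagonal is handled via the orthogonality identity
\[\langle g_u,g_d\rangle=\langle f_u,\Pi_Ef_d\rangle=\langle f_u,f_d\rangle-\langle f_u,r_d\rangle,\qquad\langle f_u,r_d\rangle=\langle\Pi_Ef_u+r_u,r_d\rangle=\langle r_u,r_d\rangle,\]
since $\Pi_E f_u\in E\perp r_d$; then $|\langle r_u,r_d\rangle|\leq\|r_u\|\|r_d\|=\mathscr{O}(e^{-2(1-\kappa)S/h})$.

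The same orthogonality trick yields the off-diagonal of $M$: $\langle\mathscr{M}g_u,g_d\rangle=\langle\mathscr{M}f_u,g_d\rangle$ because $r_u\in E^\perp$ and $\mathscr{M}g_d\in E$; expanding $g_d=f_d-r_d$ and reusing $\langle f_u,r_d\rangle=\langle r_u,r_d\rangle$ together with $\|(\mathscr{M}-\mu)f_u\|\|r_d\|=\mathscr{O}(e^{-2(1-\kappa)S/h})$ gives $\langle\mathscr{M}g_u,g_d\rangle=\mu\langle f_u,f_d\rangle+w+\mathscr{O}(e^{-2(1-\kappa)S/h})$. For the diagonal, similar manipulations reduce everything to bounding $\langle(\mathscr{M}-\mu)f_u,f_u\rangle$. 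Since $\chi_u\Sigma=0$, we have $(\mathscr{M}-\mu)f_u=[\mathscr{M},\chi_u]\mathbf f_u$, and this real expectation, using the skew-adjointness $[\mathscr{M},\chi_u]^*=-[\mathscr{M},\chi_u]$ (valid because $\chi_u$ is real), can be rewritten as $-\tfrac12\langle[[\mathscr{M},\chi_u],\chi_u]\mathbf f_u,\mathbf f_u\rangle$. By pseudo-differential calculus the double commutator is $\mathscr{O}(h^2)$ with symbol supported in $\mathrm{supp}\,\chi_u'\subset\{\tilde\varphi_u\geq(1-\kappa)S\}$, so Theorem~\ref{thm.agmon} delivers the required $\mathscr{O}(e^{-2(1-\kappa)S/h})$ bound. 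This double-commutator step is the main technical point of the proof, as a naive Cauchy-Schwarz on $\langle(\mathscr{M}-\mu)f_u,f_u\rangle$ would only give a single exponential.

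Finally, the matrix of $\mathscr{M}$ in the orthonormal basis $G^{-1/2}(g_u,g_d)^\top$ is $G^{-1/2}MG^{-1/2}$. A Neumann expansion yields $G^{-1/2}=\mathrm{Id}-T/2+\mathscr{O}(|T|^2+e^{-2(1-\kappa)S/h})$, and writing $M=\mu\,\mathrm{Id}+W_0+\mu T+\mathscr{O}(e^{-2(1-\kappa)S/h})$ with $W_0=\begin{pmatrix}0&w\\ \bar w&0\end{pmatrix}$, the product $G^{-1/2}MG^{-1/2}$ produces $-\mu T$ from the two factors $-T/2$ surrounding the $\mu\,\mathrm{Id}$ piece, exactly cancelling the explicit $\mu T$ in $M$. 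The remaining cross terms $W_0 T$, $TW_0$ and $T^2$ are all $\mathscr{O}(|T|^2)$ because $|w|$ is of the same exponential order as $|T|$, yielding the claimed final form.
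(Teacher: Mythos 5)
Your overall architecture matches the Helffer--Sj\"ostrand framework invoked by the paper (which delegates the proof to \cite[Section 4.1]{FMR23}): the treatment of $\|r_\star\|$ via the resolvent on $E^\perp$, the Pythagoras identity for $G$, and the orthogonality trick for the off-diagonal entries are all sound, as is the final $G^{-1/2}MG^{-1/2}$ expansion. (A minor inaccuracy: the $\gtrsim h$ gap between $\lambda_2$ and $\lambda_3$ of $\mathscr{M}$ is the double-well version, not literally Theorem~\ref{thm.RVN}, but that is indeed what the a priori estimate \eqref{eq.apriorigap} of the paper provides.)

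The genuine gap is in your treatment of the diagonal term $\langle(\mathscr{M}-\mu)f_u,f_u\rangle$. You rewrite it, correctly, as $-\tfrac12\langle[[\mathscr{M},\chi_u],\chi_u]\mathbf f_u,\mathbf f_u\rangle$, and then assert that the double commutator ``has symbol supported in $\mathrm{supp}\,\chi_u'$,'' so that Theorem~\ref{thm.agmon} gives $\mathscr{O}(e^{-2(1-\kappa)S/h})$. This reasoning is borrowed from the \emph{differential} setting $-h^2\Delta+V$, where $[[H,\chi],\chi]=-2h^2|\nabla\chi|^2$ is a genuine multiplication operator supported on $\mathrm{supp}\,\chi'$. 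Here $\mathscr{M}$ is an $h$-pseudodifferential operator with a merely smooth (non-compactly supported, non-analytic) symbol, so the symbol of $[[\mathscr{M},\chi_u],\chi_u]$ is supported in $\mathrm{supp}\,\chi_u'$ only term-by-term in its \emph{asymptotic} expansion; the actual operator satisfies $[[\mathscr{M},\chi_u],\chi_u]=\rho[[\mathscr{M},\chi_u],\chi_u]\rho+R$ with $\|R\|=\mathscr{O}(h^\infty)$, where $\rho$ localizes near $\mathrm{supp}\,\chi_u'$. Since $\mathscr{O}(h^\infty)$ is \emph{not} $\mathscr{O}(e^{-c/h})$ (e.g.\ $e^{-1/\sqrt h}=\mathscr{O}(h^\infty)$ but is not $\mathscr{O}(e^{-c/h})$), pseudo-locality leaves you with an $\mathscr{O}(h^\infty)$ error which can dominate $e^{-2(1-\kappa)S/h}$ and, downstream, destroy the leading term $c_0h^{1/2}e^{-S/h}$ in the tunneling formula.

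The correct route avoids pseudo-locality entirely and uses the exact eigenvalue equation together with self-adjointness. Since $\mathbf f_u\in\ker(\mathscr{M}_u-\mu)$ and $\Sigma\chi_u=0$, one has $(\mathscr{M}-\mu)f_u=(\mathscr{M}_u-\mu)\chi_u\mathbf f_u=-(\mathscr{M}_u-\mu)(1-\chi_u)\mathbf f_u$; pairing with $f_u=\mathbf f_u-(1-\chi_u)\mathbf f_u$ and using $(\mathscr{M}_u-\mu)\mathbf f_u=0$ yields the identity
\[
\langle(\mathscr{M}-\mu)f_u,f_u\rangle=\langle(\mathscr{M}_u-\mu)(1-\chi_u)\mathbf f_u,(1-\chi_u)\mathbf f_u\rangle.
\]
Now \emph{both} factors carry the cutoff $1-\chi_u$: by the Calder\'on--Vaillancourt theorem $\|(\mathscr{M}_u-\mu)(1-\chi_u)\mathbf f_u\|\leq C\|(1-\chi_u)\mathbf f_u\|_{B^2\otimes L^2}$, and Theorem~\ref{thm.agmon} gives $\|(1-\chi_u)\mathbf f_u\|_{B^2\otimes L^2}=\mathscr{O}(e^{-(1-\kappa)S/h})$ because $\tilde\varphi_u\geq(1-\kappa)S$ on $\mathrm{supp}(1-\chi_u)$. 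Cauchy--Schwarz then delivers the doubly-exponential bound $\mathscr{O}(e^{-2(1-\kappa)S/h})$ without ever invoking locality of the commutator. I recommend you replace the double-commutator paragraph with this argument; the rest of your proof can stand as written.
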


The proof of Proposition \ref{prop.interaction} follows standard arguments going back at least to Helffer-Sjöstrand~\cite{HS84}. We refer to \cite[Section 4.1]{FMR23} where identical estimates are obtained. Finally, the error term $|\rm T|^2$ can be estimated using the decay of $f_u$ and $f_d$.

\begin{lemma}\label{lem.T}
Let $\kappa >0$. If $\|\Sigma\|_\infty$ and $|{\rm{supp}} \,\Sigma|$ are small enough we have
\[ |{\rm T}| = \mathscr{O}(e^{-(1-\kappa) \frac Sh}). \]
\end{lemma}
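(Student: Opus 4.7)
The plan is to reduce the estimate on $|\mathrm T|$ to a bound on $|\langle f_u,f_d\rangle|$ (the only non-zero entries of $\mathrm T$, related by complex conjugation) and then to combine the weighted $L^2$-decay of Theorem~\ref{thm.agmon} with a pointwise lower bound on $\tilde\varphi_u+\tilde\varphi_d$. Fix $\eta\in(0,\kappa)$, and define $\tilde\varphi_d(x_2):=\tilde\varphi_u(-x_2)$. Since $\mathbf f_d = U\mathbf f_u$, Theorem~\ref{thm.agmon} applied to both $\mathbf f_u$ and $\mathbf f_d$ tells us that $e^{(1-\eta)\tilde\varphi_u/h}\mathbf f_u$ and $e^{(1-\eta)\tilde\varphi_d/h}\mathbf f_d$ are bounded in $L^2$. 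Writing
\[
\langle f_u,f_d\rangle=\int_{\R^2}\bigl(\chi_u e^{(1-\eta)\tilde\varphi_u/h}\overline{\mathbf f_u}\bigr)\bigl(\chi_d e^{(1-\eta)\tilde\varphi_d/h}\mathbf f_d\bigr)\, e^{-(1-\eta)(\tilde\varphi_u+\tilde\varphi_d)/h}\,\dd x
\]
and applying Cauchy--Schwarz yields $|\langle f_u,f_d\rangle|\leq C\,e^{-(1-\eta)m/h}$ with $m:=\inf_{\mathrm{supp}(\chi_u\chi_d)}(\tilde\varphi_u+\tilde\varphi_d)$.

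The core of the argument is then to show $m\geq S$. On $[-A,A]$ we have $\tilde\varphi_\star=\varphi_\star$, and on the segment $(c_d,c_u)$ the explicit primitive \eqref{eq.phius}, combined with the evenness of $\Gamma$ inherited from $\gamma$ even and $B(q_1,-q_2)=B(q)$, gives (via the change of variable $s\mapsto -s$)
\[
\varphi_u(x_2)+\varphi_d(x_2)=\int_{x_2}^{c_u}\Gamma(s)\,\dd s+\int_{-x_2}^{c_u}\Gamma(s)\,\dd s=\int_{c_d}^{c_u}\Gamma(s)\,\dd s=S.
\]
On $[-A,A]\setminus[c_d,c_u]$, the strict positivity of $|\Gamma|$ outside $[c_d,c_u]$ yields $\varphi_u+\varphi_d>S$. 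This handles the bounded range of $x_2$.

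For $|x_2|>A$, the construction \eqref{eq:defphitilde} with $\chi_1\geq 0$ gives $\tilde\varphi_u'(x)=\chi_1(x)\varphi_u'(x)$, which has the sign of $x-c_u$; hence $\tilde\varphi_u$ is non-decreasing on $(c_u,\infty)$ and non-increasing on $(-\infty,c_u)$, so $\tilde\varphi_u(x_2)\geq \varphi_u(A)$ for $x_2\geq A$ and $\tilde\varphi_u(x_2)\geq \varphi_u(-A)$ for $x_2\leq -A$. Choosing $A>c_u$, the formula \eqref{eq.phius} gives $\varphi_u(-A)>S$ (and $\varphi_u(A)>0$), so $\tilde\varphi_u(x_2)+\tilde\varphi_d(x_2)\geq \varphi_u(A)+\varphi_u(-A)>S$ on $\{|x_2|>A\}$. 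Combining both regions yields $m\geq S$, and choosing $\eta<\kappa$ concludes. The main obstacle is really the identity $\varphi_u+\varphi_d=S$ on the segment between the wells; once this explicit geometric fact is in hand, the remainder is a routine combination of weighted $L^2$-estimates and the monotonicity of $\tilde\varphi_u$.
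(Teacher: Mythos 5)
Your proof is correct and follows essentially the same route as the paper's: insert the Agmon weight, apply Cauchy--Schwarz against Theorem~\ref{thm.agmon}, and use that $\varphi_u+\varphi_d=S$ between the wells. You are in fact a bit more careful than the paper's own argument: you treat separately the regions $(c_d,c_u)$, $[-A,A]\setminus[c_d,c_u]$ and $\{|x_2|>A\}$, and on the last one you handle the mismatch between $\varphi_u$ and its bounded truncation $\tilde\varphi_u$ (which the paper glosses over by simply writing ``$\varphi_u=\tilde\varphi_u$'' on $\mathrm{supp}(\chi_u\chi_d)$), supplying the missing monotonicity argument via $\tilde\varphi_u'=\chi_1\varphi_u'$.
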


\begin{proof}
We have to control the following scalar product,
\[ \langle f_u, f_d \rangle = \langle \chi_u \mathbf f_u, \chi_d \mathbf f_d \rangle. \]
We use the weight $\varphi_u$ defined in Lemma~\ref{lem.Brond=b0} and the corresponding version $\varphi_d$ in the other well, defined by $\varphi_d:=U \varphi_u$.
We insert the exponential weight $(1-\kappa)\varphi_u$ to get
\[ \langle f_u, f_d \rangle = \langle e^{-(1-\kappa)(\varphi_u + \varphi_d)} e^{(1-\kappa)\varphi_u} \chi_u \mathbf f_u,  e^{(1-\kappa)\varphi_d} \chi_d \mathbf f_d \rangle.\]
On the support of $\chi_u \chi_d$, we have $\varphi_u(x) + \varphi_d(x) = \varphi_u(x) + \varphi_u(-x) = S$ (see \eqref{eq.phius}), and $\varphi_u = \tilde \varphi_u$, with notation from \eqref{eq:defphitilde}. Thus, we can use the exponential decay from Theorem \ref{thm.agmon} and we find
\[ | \langle f_u, f_d \rangle | \leq C e^{-(1-\kappa)\frac Sh} \| \mathbf f_u \|^2,\]
which concludes the proof.
\end{proof}
\subsection{Estimate of the interaction and conclusion}\label{sec.analysisw}
It follows from Proposition~\ref{prop.interaction} (and Lemma \ref{lem.T}) that the first two eigenvalues of the double well operator $\mathscr{M}$ satisfy
\begin{equation} \label{eq.lambda}
 \lambda_1(\mathscr{M}) = \mu - |w| + \mathscr{O} \big(e^{-(1-\kappa) \frac{2S}{h}} \big), \quad  \lambda_2(\mathscr{M}) = \mu + |w| + \mathscr{O} \big(e^{-(1-\kappa) \frac{2S}{h}} \big). 
 \end{equation}
Therefore, the spectral gap is asymptotically given by $2|w|$. We now explain how to estimate $w$ using the WKB-like construction, Theorem \ref{thm.WKB}. We introduce the conjugate operator $\mathscr{M}^{\varphi_u}$ and we write
\[w=\langle (\mathscr{M}-\mu)f_u,f_d\rangle=\langle e^{-(\varphi_d+\varphi_u)/h}(\mathscr{M}^{\varphi_u}-\mu)\chi_u e^{\varphi_u/h}\mathbf f_u,\chi_d e^{\varphi_d/h} \mathbf f_d \rangle\,,\]
where $\varphi_d(x_2) = \varphi_u(-x_2)$. Note that $\varphi_u + \varphi_d$ is constant equal to $S$ on ${\rm{supp}} (\chi_u\chi_d) $, see \eqref{eq.phius} and \eqref{def.chiu}. Therefore, if we introduce $\underline{\chi}_u$ which is equal to $1$ on ${\rm{supp}}(\chi_u)$ and supported on $\R \setminus {\rm{supp}} (\Sigma)$ we have
\[w= e^{-S/h} \langle \underline{\chi}_u (\mathscr{M}^{\varphi_u}-\mu)\chi_u e^{\varphi_u/h} {\mathbf f}_u,\chi_d e^{\varphi_d/h} \mathbf f_d \rangle + \mathcal E,\]
with
\begin{align*}
\mathcal E &= \langle e^{-(\varphi_d+\varphi_u)/h}(1-\underline{\chi}_u)(\mathscr{M}^{\varphi_u}-\mu)\chi_u e^{\varphi_u/h}\mathbf f_u,\chi_d e^{\varphi_d/h} \mathbf f_d \rangle \\
&\leq e^{-S/h} |\langle (1-\underline{\chi}_u)(\mathscr{M}^{\varphi_u}-\mu)\chi_u e^{\varphi_u/h}\mathbf f_u,\chi_d e^{\varphi_d/h} \mathbf f_d \rangle| \\
&\leq e^{-S/h} \mathscr{O}(h^\infty),
\end{align*}
where we used $\varphi_u + \varphi_d \geq S$ and the pseudo-locality of the pseudodifferential operator $\mathscr{M}^{\varphi_u}$.
Hence
\begin{equation}\label{eq.w0}
w= e^{-S/h} \omega + \mathscr{O}\big( h^\infty e^{-S/h}\big),
\end{equation}
with
\[\omega=\langle \underline{\chi}_u (\mathscr{M}^{\varphi_u}-\mu) \chi_u e^{\varphi_u/h}\mathbf f_u,\chi_de^{\varphi_d/h}\mathbf f_d\rangle\,.\]
Using Theorem \ref{thm.WKB}, and since the quadratic form is controlled by a $B^2 \otimes L^2$-norm, we can replace $\mathbf f_u$ and $\mathbf f_d$ by their WKB approximations. Thus,
\begin{multline*}
\omega = \sqrt{\frac{\varphi_u''(c_u)}{h}} \big\langle \underline{\chi}_u (\mathscr{M}^{\varphi_u}-\mu) \chi_u \mathscr{Q}_+^{[J]}(\chi  a^{[J]}), \chi_d U \mathscr{Q}_+^{[J]}(\chi  a^{[J]}) \big\rangle \big(1+ \mathscr{O}(h^{\frac 12}) \big)\\+\mathscr{O}\big(h^{\frac J2 - 1 - N}\big)\,,
\end{multline*}
where $U$ is the symmetry $U f(x) = f(-x)$.
Note that for any fixed $N$, we can choose $J$ as large as we want, so that the error is small. Since $\Sigma = 0$ on ${\rm{supp}}( \chi_u )$, we can replace $\mathscr{M}^{\varphi_u}$ by $\mathscr{M}^{\varphi_u}_u$. Commuting and using Theorem \ref{thm.WKB} again we deduce that
\begin{multline*}
 \omega= \sqrt{\frac{\varphi_u''(c_u)}{h}}  \big\langle \underline{\chi}_u [ \mathscr{M}_u^{\varphi_u}, \chi_u] \mathscr{Q}_+^{[J]} ( \chi a^{[J]}), \chi_d U \mathscr{Q}_+^{[J]}(\chi  a^{[J]}) \big\rangle \big(1+ \mathscr{O}(h^{\frac 12}) \big) \\ +\mathscr{O}(h^{\frac J2 - 1 - N})\,. 
 \end{multline*}
The above commutator is an $h$-pseudodifferential operator with respect to $x_2$, such that
\[ [ \mathscr{M}_u^{\varphi_u}, \chi_u] \mathscr{Q}_+^{[J]} = \Op_h^{w,2} \Big( \frac hi \lbrace M_0, \chi_u \rbrace \mathsf{F}_{(x_2,\xi_2)} + \mathscr{O}\big(h^{\frac 32}\big)\Big),\]
where we recall that $M_0$ is the principal operator-symbol of $\mathscr{M}_u^{\varphi_u}$. Therefore, the stationary phase method gives
\begin{equation}
\omega =\sqrt{\varphi_u''(c_u) h} \int_{\R} \chi_u'(x_2)  {\mathcal J}(x_2)  a_0(x_2) \overline{a_0}(-x_2)\dd x_2 + \mathscr{O}(h),
\end{equation}
where we introduced the function
\begin{align}\label{def.Jcal}
{\mathcal J}(x_2) &= \int_{\R} \Big(  \frac 1i \partial_{\xi_2} M_0(x_2,0) \mathsf{F}_{(x_2,0)} \Big) (x_1) \, \overline{\mathsf{F}_{(-x_2,0)}}(-x_1) \dd x_1 .
\end{align}
This shows, using \eqref{eq.lambda} and \eqref{eq.w0}, that
\begin{equation}\label{eq.spectralgap1}
\lambda_2(\mathscr{M}) - \lambda_1(\mathscr{M}) = c_0 h^{\frac 12} e^{- S/h} \Big(1 + \mathscr{O} \big(h^{\frac 12}\big)\Big),
\end{equation}
with 
\begin{equation}\label{eq.c0a}
c_0 = 2 \sqrt{\varphi_u''(c_u)} \Big| \int_{\R} \chi_u'(x_2) \mathcal{J}(x_2) a_0(x_2) \overline{a_0}(-x_2) \dd x_2 \Big|.
\end{equation}
The double-well magnetic Laplacian $\mathscr{L}$ is unitarily equivalent to $h \mathscr{M}$ (Proposition~\ref{prop.Mh}), meaning that $\lambda_j(\mathscr{L}) = h \lambda_j(\mathscr{M})$, and thus Theorem~\ref{thm.main} is a reformulation of \eqref{eq.spectralgap1}. 

\begin{remark} Equation \eqref{eq.spectralgap1} gives 
\begin{equation}\label{eq.specgap0}
\lim_{h\to 0}e^{S/h}h^{-\frac12}(\lambda_2-\lambda_1)= c_0.
\end{equation}
However, the left-hand side in \eqref{eq.specgap0} is independent of the choice of cutoff $\chi_u$, which appears in $c_0$. We infer that the function $x_2\mapsto {\mathcal J}(x_2)a_{0}(x_2)\overline{a}_{0}(-x_2)$ must be constant on a small interval of the form $[c_d+\eta_0,c_d+2\eta_0]$.
Therefore, since ${\mathcal J}$ and $a_0$ do not vanish on $(c_d,c_u)$, we deduce that $c_0 \neq 0$. In fact, we show in Appendix \ref{sec.calculs} that ${\mathcal J}(x_2)a_{0}(x_2)\overline{a}_{0}(-x_2)$ is indeed constant everywhere, see Lemma \ref{lem.constant}. This also gives an explicit formula for $c_0$, and finishes the proof of Theorem~\ref{thm.main}.
\end{remark}

\appendix

\section{Calculation of the constant}\label{sec.calculs}

This appendix contains the calculations required to find a formula for the coefficient $c_0$ in the asymptotics \eqref{eq.spectralgap1} of the spectral gap. The final formula is given by Lemma \ref{lem.constant} below. 

We first calculate the quantity $\mathcal{J}(x_2)$ introduced in \eqref{def.Jcal}, which can be written in terms of the $L^2(\R_{x_1})$ scalar product,
\[ \mathcal{J}(x_2) = \frac 1i \langle \partial_{\xi_2} M_0(0,x_2) \mathsf{F}_{(x_2,0)} , \mathsf{F}_{(-x_2,0)} \rangle,\]
using that $\mathsf{F}_{(-x_2,0)}$ is even as a function of $x_1$ by the explicit formula \eqref{def.FX2}.
\begin{lemma}\label{lem.J}
For all $x_2 \in (c_d,c_u)$ such that $x_2 \notin {\rm supp}(\Sigma)$ and $-x_2 \notin {\rm{supp}} (\Sigma)$ we have
\[ \mathcal J(x_2) = \langle \mathsf{F}_{(-x_2,0)}, \mathsf{F}_{(x_2,0)} \rangle \frac 1i \partial_1 \mathcal B(i \varphi'_u(x_2),x_2). \]
\end{lemma}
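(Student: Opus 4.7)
The plan is to read Lemma \ref{lem.J} as a Hellmann--Feynman-type identity for the $\xi_2$-derivative at $\xi_2=0$ of the eigenvalue
\[E(x_2,\xi_2)=\mathcal{B}(\xi_2+i\varphi'_u(x_2),x_2)+\Sigma(x_2)\]
of $M_0(x_2,\xi_2)$. The subtle point is that $\mathcal{J}(x_2)$ does not pair $\partial_{\xi_2}M_0\,\mathsf{F}_{(x_2,0)}$ with the corresponding left eigenfunction $\mathsf{G}_{(x_2,0)}$ of $M_0(x_2,0)^{\ast}$ from Lemma \ref{lem.Riesz}, but with $\mathsf{F}_{(-x_2,0)}$. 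The heart of the proof will therefore be a symmetry identification showing that, on the range of $x_2$ considered, $\mathsf{F}_{(-x_2,0)}$ is proportional to $\mathsf{G}_{(x_2,0)}$.

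I would first recast $\mathcal{J}(x_2)$ as an ordinary $L^2(\R_{x_1})$-pairing. The parity properties of $B$ imply, by inspection of \eqref{the.gauge} and \eqref{def.alpha}, that $\mathcal{B}$ is even in $x_1$ and in $x_2$, while $\alpha$ is odd in both variables. In particular, at the point $(i\varphi'_u(x_2),x_2)$, the function $\mathcal{B}$ takes real values and $\alpha$ takes purely imaginary values, so by \eqref{def.FX2} the function $\mathsf{F}_{(-x_2,0)}$ is an even \emph{real} Gaussian in $x_1$. Hence $\overline{\mathsf{F}_{(-x_2,0)}(-x_1)}=\mathsf{F}_{(-x_2,0)}(x_1)$, and
\[\mathcal{J}(x_2)=\tfrac{1}{i}\langle\partial_{\xi_2}M_0(x_2,0)\mathsf{F}_{(x_2,0)},\mathsf{F}_{(-x_2,0)}\rangle_{L^2(\R_{x_1})}.\]

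Next comes the key identification. Under the assumptions $x_2\in(c_d,c_u)$ and $\pm x_2\notin\mathrm{supp}(\Sigma)$, the explicit formula \eqref{eq.phius} combined with the evenness of $\gamma$ (hence of $\Gamma$) gives $\varphi'_u(-x_2)=\varphi'_u(x_2)$. Combining this with the parities of $\mathcal{B}$ and $\alpha$ in $x_2$ yields
\[\mathcal{B}(i\varphi'_u(-x_2),-x_2)=\mathcal{B}(i\varphi'_u(x_2),x_2),\qquad \alpha(i\varphi'_u(-x_2),-x_2)=-\alpha(i\varphi'_u(x_2),x_2).\]
Plugging these values into \eqref{def.FX2} at $X_2=(-x_2,0)$ and into \eqref{def.GX2} at $X_2=(x_2,0)$, both Gaussian exponents collapse to $-x_1^2/(2(\mathcal{B}_u-\tilde\alpha_u))$, where $\mathcal{B}_u=\mathcal{B}(i\varphi'_u(x_2),x_2)$ and $i\tilde\alpha_u=\alpha(i\varphi'_u(x_2),x_2)$. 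Thus $\mathsf{F}_{(-x_2,0)}\in\mathrm{span}(\mathsf{G}_{(x_2,0)})$ and, by Lemma \ref{lem.Riesz}, $\mathsf{F}_{(-x_2,0)}$ is a left eigenfunction of $M_0(x_2,0)$ with eigenvalue $\overline{E(x_2,0)}=b_0\in\R$.

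To finish, differentiate $M_0(x_2,\xi_2)\mathsf{F}_{(x_2,\xi_2)}=E(x_2,\xi_2)\mathsf{F}_{(x_2,\xi_2)}$ in $\xi_2$ at $\xi_2=0$ and pair with $\mathsf{F}_{(-x_2,0)}$. The contribution involving $\partial_{\xi_2}\mathsf{F}_{(x_2,\xi_2)}|_{\xi_2=0}$ cancels thanks to the left-eigenfunction property of $\mathsf{F}_{(-x_2,0)}$ (and the reality of $E(x_2,0)$), leaving
\[\langle\partial_{\xi_2}M_0(x_2,0)\mathsf{F}_{(x_2,0)},\mathsf{F}_{(-x_2,0)}\rangle=\partial_1\mathcal{B}(i\varphi'_u(x_2),x_2)\,\langle\mathsf{F}_{(x_2,0)},\mathsf{F}_{(-x_2,0)}\rangle.\]
Dividing by $i$ and using that $\langle\mathsf{F}_{(x_2,0)},\mathsf{F}_{(-x_2,0)}\rangle=\langle\mathsf{F}_{(-x_2,0)},\mathsf{F}_{(x_2,0)}\rangle$ (both functions being real) yields the announced formula. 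The only genuinely nontrivial step is the symmetry identification $\mathsf{F}_{(-x_2,0)}\in\mathrm{span}(\mathsf{G}_{(x_2,0)})$; the rest is bookkeeping with parities.
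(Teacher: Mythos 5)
Your proof is correct and follows essentially the same route as the paper's: a Hellmann--Feynman differentiation of the eigenvalue equation for $M_0$ in $\xi_2$, combined with the symmetry identification $\mathsf{F}_{(-x_2,0)}\in\mathrm{span}(\mathsf{G}_{(x_2,0)})$ coming from the parities of $\mathcal B$, $\alpha$, $\varphi_u'$ and the eikonal relation $E(x_2,0)=b_0$. The paper organizes it in the reverse order (first the Feynman--Hellmann identity $\langle\partial_{\xi_2}M_0\,\mathsf F_{X_2},\mathsf G_{X_2}\rangle=\partial_1\mathcal B$ for general $X_2$, then substituting $\mathsf F_{-X_2}=\langle\mathsf F_{-X_2},\mathsf F_{X_2}\rangle\,\mathsf G_{X_2}$), but the ingredients are identical.
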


\begin{proof}
This follows from a Feynman-Helmann argument. We recall the symmetry properties of $\alpha$ and $\mathcal B$. It follows from our assumptions that $\alpha(x_1,x_2) = - \alpha(x_1,-x_2)$ and 
\begin{equation*}
\begin{cases}
\mathcal B(-\xi_2 +  i \varphi_u'(-x_2),-x_2) = \overline{\mathcal B(\xi_2 + i \varphi_u'(x_2),x_2)}, \\
\mathcal \alpha(-\xi_2 +  i \varphi_u'(-x_2),-x_2) = \overline{\mathcal \alpha(\xi_2 + i \varphi_u'(x_2),x_2)},
\end{cases}
\end{equation*}
when $x_2$ satisfies our assumptions and $\xi_2 \in \R$.
Therefore, with the notation $X_2 = (x_2,\xi_2)$, and keeping in mind the normalization constants $C(X_2)$ and $c(X_2)$ in \eqref{def.FX2} and \eqref{def.GX2}, we have
\begin{align}\label{eq:SymmEigenfunction}
\mathsf F_{-X_2} = \frac{C(-X_2)}{c(X_2)} \mathsf G_{X_2} = \langle \mathsf F_{-X_2}, \mathsf  F_{X_2} \rangle \mathsf G_{X_2}.
\end{align}
The Gaussian $\mathsf{F}_{X_2}$ is an eigenfunction of $M_0(X_2)$ associated to the eigenvalue $\mathcal B(\xi_2 + i \varphi_u'(x_2),x_2) + \Sigma(x_2)$.
By differentiating the eigenvalue equation, we get
\begin{align*}
&\left( M_0(x_2,\xi_2) - \mathcal B( \xi_2 + i \varphi_u'(x_2),x_2) - \Sigma(x_2) \right) \partial_{\xi_2} \mathsf F_{(x_2,\xi_2)} \\
&\qquad= 
\left(- \partial_{\xi_2} M_0(x_2,\xi_2) +
\partial_{1}  \mathcal B( \xi_2 + i \varphi_u'(x_2),x_2) \right) \mathsf F_{(x_2,\xi_2)}.
\end{align*}
Therefore, with the $L^2({\mathbb R}_{x_1})$ scalar product, we have by the eigenfunction properties of $\mathsf G$,
\begin{align*}
\langle \partial_{\xi_2} M_0(x_2,\xi_2) \mathsf F_{(x_2,\xi_2)}, \mathsf G_{(x_2,\xi_2)} \rangle &= 
\langle  \partial_{1}  \mathcal B( \xi_2 + i \varphi_u'(x_2),x_2)  \mathsf F_{(x_2,\xi_2)}, \mathsf G_{(x_2,\xi_2)} \rangle \\
&= \partial_{1}  \mathcal B( \xi_2 + i \varphi_u'(x_2),x_2) .
\end{align*}
In particular, we get using \eqref{eq:SymmEigenfunction}
\begin{align}\label{eq:FeynmanHelmann}
\langle \partial_{\xi_2} M_0(X_2) \mathsf F_{X_2}, \mathsf F_{-X_2} \rangle =\langle \mathsf F_{-X_2}, \mathsf  F_{X_2} \rangle
\partial_{1}  \mathcal B(\xi_2 + i \varphi_u'(x_2),x_2),
\end{align}
from which the formula for $\mathcal J$ follows.
\end{proof}

The operator-symbol $\mathsf{Q}_2$ of $\mathscr{Q}_2$ (given in Proposition \ref{prop.parametrix}) is
\begin{equation}\label{eq.Q2b}
 \mathsf{Q}_2 = -\mathsf{Q}_0\mathsf{P}_2\mathsf{Q}_0+\mathsf{Q}_0\mathsf{P}_1\mathsf{Q}_0\mathsf{P}_1\mathsf{Q}_0-\frac{1}{2i}\{\mathsf{Q}_0,\mathsf{P}_0\}\mathsf{Q}_0\,,
\end{equation}
where $\mathsf{Q}_0 = \mathsf{P}_0^{-1}$ is given in \eqref{eq.P0-1} and $\{\mathsf{Q}_0,\mathsf{P}_0\} = \partial_{\xi_2} \mathsf{Q}_0 \partial_{x_2} \mathsf{P}_0 - \partial_{x_2} \mathsf{Q}_0 \partial_{\xi_2} \mathsf{P}_0$. We write this operator-symbol in the matrix form,
\[ \mathsf{Q}_2 = \begin{pmatrix}
\mathsf{Q}_2^{++} & \mathsf{Q}_2^+ \\ \mathsf{Q}_2^- & \mathsf{Q}_2^{\pm}
\end{pmatrix}.\]
Calculating $\mathsf{Q}_2$ using formula \eqref{eq.Q2b}, we find
\begin{align*}
\mathsf{Q}_2^\pm&(X_2) = - \langle M_2(X_2) \mathsf F_{X_2} , \mathsf{G}_{X_2} \rangle + \langle M_1(X_2) R_0(X_2) M_1(X_2) \mathsf{F}_{X_2}, \mathsf{G}_{X_2} \rangle \\
& - \frac{1}{2i} \Big( \langle \partial_{\xi_2} M_0 \partial_{x_2} \mathsf{F}_{X_2} , \mathsf{G}_{X_2} \rangle - \partial_{x_2} E \langle \partial_{\xi_2} \mathsf F_{X_2}, \mathsf{G}_{X_2} \rangle + (z- E) \langle \partial_{x_2} \mathsf{F}_{X_2}, \partial_{\xi_2} \mathsf{G}_{X_2} \rangle \Big) \\
&+\frac{1}{2i} \Big( \langle \partial_{x_2} M_0 \partial_{\xi_2} \mathsf{F}_{X_2} , \mathsf{G}_{X_2} \rangle - \partial_{\xi_2} E \langle \partial_{x_2} \mathsf F_{X_2}, \mathsf{G}_{X_2} \rangle + (z- E) \langle \partial_{\xi_2} \mathsf{F}_{X_2}, \partial_{x_2} \mathsf{G}_{X_2} \rangle \Big),
\end{align*}
with $E(x_2,\xi_2) = \mathcal{B}(\xi_2 + i \varphi_u'(x_2),x_2) + \Sigma(x_2)$. For this specific choice of weight $\varphi_u$, we have $E(x_2,0) = b_0$, and $\partial_{x_2} E (x_2,0) = 0$. Thus, when $z=b_0$ we find the simpler formula
\begin{align*}
\mathsf{Q}_2^\pm(x_2,0) = &- \langle M_2(X_2) \mathsf F_{X_2} , \mathsf{G}_{X_2} \rangle + \langle M_1(X_2) R_0(X_2) M_1(X_2) \mathsf{F}_{X_2}, \mathsf{G}_{X_2} \rangle \\
& - \frac{1}{2i}  \langle \partial_{\xi_2} M_0 \partial_{x_2} \mathsf{F}_{X_2} , \mathsf{G}_{X_2} \rangle + \frac{1}{2i} \langle \partial_{x_2} M_0 \partial_{\xi_2} \mathsf{F}_{X_2} , \mathsf{G}_{X_2} \rangle \\
& - \frac 1 {2i} \partial_1 \mathcal B(i \varphi_u'(x_2),x_2) \langle \partial_{x_2} \mathsf{F}_{X_2}, \mathsf{G}_{X_2} \rangle
\end{align*}
with $X_2=(x_2,0)$. Note that $(M_0- E)^* \mathsf G = 0$ and thus
\[\partial_{\xi_2} M_0^* \mathsf G = \overline{ \partial_{\xi_2} E} \mathsf G - (M_0 - E)^* \partial_{\xi_2} \mathsf G, \]
which then gives
\begin{align}
\mathsf{Q}_2^\pm(x_2,0) = &- \langle M_2(X_2) \mathsf F_{X_2} , \mathsf{G}_{X_2} \rangle + \langle M_1(X_2) R_0(X_2) M_1(X_2) \mathsf{F}_{X_2}, \mathsf{G}_{X_2} \rangle \nonumber \\
& - \frac{1}{2i} \langle (M_0 - E) \partial_{\xi_2} \mathsf F_{X_2}, \partial_{x_2} \mathsf{G}_{X_2} \rangle + \frac{1}{2i} \langle (M_0 - E) \partial_{x_2} \mathsf F_{X_2}, \partial_{\xi_2} \mathsf{G}_{X_2} \rangle \nonumber \\ 
& - \frac 1 i \partial_1 \mathcal B(i \varphi_u'(x_2),x_2) \langle \partial_{x_2} \mathsf{F}_{X_2}, \mathsf{G}_{X_2} \rangle. \label{eq.Q2exp}
\end{align}

\begin{lemma}\label{lem.sym.Q2}
The function $\mathsf{Q}_2^\pm$, defined with weight $\varphi_u$ and $z=b_0$ satisfies,
\[ \mathsf{Q}_2^\pm(x_2,0) - \overline{\mathsf{Q}_2^\pm(-x_2,0)} = i \partial_{1} \mathcal B(i\varphi_u'(x_2),x_2) \partial_{x_2} \big( \ln \langle \mathsf F_{(-x_2,0)}, \mathsf{F}_{(x_2,0)} \rangle \big), \]
for all $x_2 \in (c_d,c_u)$ such that $x_2 \notin {\rm{supp}} (\Sigma)$ and $-x_2 \notin{\rm{supp}} (\Sigma)$.
\end{lemma}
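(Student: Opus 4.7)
The plan is a direct computation starting from the explicit formula \eqref{eq.Q2exp} for $\mathsf{Q}_2^\pm(x_2,0)$ and exploiting a collection of reflection symmetries induced by $x_2 \mapsto -x_2$. I will first record how the elementary objects transform under this reflection. Since $B$ is even in both coordinates while $A_2$ is odd in $q_1$ and even in $q_2$, in Darboux coordinates $\mathcal B$ is even and $\alpha$ is odd in both variables; combined with the reality condition $f(\overline{q_1},q_2) = \overline{f(q_1,q_2)}$ and the fact that $\varphi_u'$ is even on $(c_d,c_u)$ (inherited from the evenness of $\Gamma$), this yields at the symbol level $m_0(-x_2,0) = \overline{m_0(x_2,0)}$, hence after Weyl quantization:
\begin{align*}
M_j(-x_2,0) &= (-1)^j M_j(x_2,0)^*,  \quad j=0,1,2, \\
(\partial_{x_2} M_0)(-x_2,0) &= -(\partial_{x_2} M_0(x_2,0))^*, \quad (\partial_{\xi_2} M_0)(-x_2,0) = -(\partial_{\xi_2} M_0(x_2,0))^*.
\end{align*}
For the eigenfunctions, equation \eqref{eq:SymmEigenfunction} applied at $\xi_2=0$ gives $\mathsf F_{(-x_2,0)} = \beta\, \mathsf G_{(x_2,0)}$, $\mathsf G_{(-x_2,0)} = \overline{\beta}^{-1}\,\mathsf F_{(x_2,0)}$, with $\beta = \langle \mathsf F_{(-x_2,0)}, \mathsf F_{(x_2,0)}\rangle$, and the same argument applied to the resolvent gives $R_0(b_0;-x_2,0) = R_0(b_0;x_2,0)^*$.

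Plugging these identities into \eqref{eq.Q2exp} evaluated at $(-x_2,0)$ and taking complex conjugates, the first two terms $-\langle M_2 \mathsf F, \mathsf G\rangle$ and $\langle M_1 R_0 M_1 \mathsf F,\mathsf G\rangle$ match exactly their counterparts at $(x_2,0)$: the sign factors $(-1)^j$ cancel in each product, the factor $\beta \cdot (1/\beta)$ absorbs the normalization shift, and the adjoint operation reverses composition to restore the original bracket up to complex conjugation. These two terms therefore drop out of $\mathsf Q_2^\pm(x_2,0) - \overline{\mathsf Q_2^\pm(-x_2,0)}$.

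It remains to analyze terms (c), (d), (e). Using the Feynman–Hellmann-type identities $(M_0-E)\partial_{\xi_2}\mathsf F = \beta_1 \mathsf F - \partial_{\xi_2} M_0 \mathsf F$ and $(M_0-E)\partial_{x_2}\mathsf F = -\partial_{x_2} M_0 \mathsf F$ at $(x_2,0)$ (with $\beta_1 := \partial_1 \mathcal B(i\varphi_u'(x_2),x_2)$, $\partial_{x_2} E = 0$ by the eikonal equation and $\partial_{\xi_2} E = \beta_1$), together with the dual relations $(M_0-E)^*\partial_{\xi_2}\mathsf G = \overline{\beta_1}\mathsf G - \partial_{\xi_2} M_0^* \mathsf G$ and $(M_0-E)^*\partial_{x_2}\mathsf G = -\partial_{x_2} M_0^*\mathsf G$ obtained by differentiating $(M_0-E)^*\mathsf G = 0$, one can shuttle $(M_0-E)$ across the inner products to express $(c)+(d)+(e)$ as a linear combination of $\langle \partial_{x_2} M_0 \partial_{\xi_2} \mathsf F, \mathsf G\rangle$, $\langle \partial_{\xi_2} M_0 \partial_{x_2} \mathsf F,\mathsf G\rangle$, and $\beta_1 \langle \partial_{x_2}\mathsf F,\mathsf G\rangle$. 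Running the same reduction at $(-x_2,0)$, the symmetry $(\partial_{x_2/\xi_2} M_0)(-x_2,0) = -(\partial_{x_2/\xi_2} M_0(x_2,0))^*$ makes the cross-derivative brackets cancel pairwise in the conjugated difference, leaving only the scalar-multiple terms.

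Finally, the identity $i\partial_1 \mathcal B \cdot \partial_{x_2}\ln\beta$ is recognized via
\[
\langle \widetilde{\partial_{x_2}\mathsf F}, \tilde{\mathsf G}\rangle = \overline{\langle \partial_{x_2}\mathsf F, \mathsf G\rangle} - \partial_{x_2}\ln\beta,
\]
which follows by differentiating $\mathsf F_{(-x_2,0)} = \beta\,\mathsf G_{(x_2,0)}$ in $x_2$, pairing with $\tilde{\mathsf G} = \overline{\beta}^{-1}\mathsf F$, and using $\langle\mathsf G,\mathsf F\rangle = 1$ with $\partial_{x_2}\langle\mathsf G,\mathsf F\rangle = 0$. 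Substituting this identity into the reduced difference produces the claimed logarithmic derivative. The main obstacle is the careful bookkeeping of signs, complex conjugations, and the factor $1/(2i)$ in (c), (d) versus $1/i$ in (e): none of the three terms transforms cleanly on its own, and it is only their combined algebraic structure, together with the matching of $\beta_1$ versus $-\overline{\beta_1}$ at the reflected point, that produces the anticipated logarithmic-derivative form.
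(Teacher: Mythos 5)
Your proposal is correct and follows essentially the same route as the paper: establish the reflection symmetries of $M_j$, $R_0$, $\mathsf F$, $\mathsf G$ and their derivatives, observe that the first two brackets of \eqref{eq.Q2exp} are invariant under $X_2\mapsto -X_2$ followed by conjugation, and then track the remaining terms to produce the logarithmic derivative of $\beta = \langle \mathsf F_{-X_2},\mathsf F_{X_2}\rangle$. The one genuine organizational difference is that you first shuttle $(M_0-E)$ back to the Feynman--Hellmann--expanded form (the linear combination of $\langle\partial_{x_2}M_0\,\partial_{\xi_2}\mathsf F,\mathsf G\rangle$, $\langle\partial_{\xi_2}M_0\,\partial_{x_2}\mathsf F,\mathsf G\rangle$, $\beta_1\langle\partial_{x_2}\mathsf F,\mathsf G\rangle$) before applying the symmetry, whereas the paper applies the symmetry relations directly to the $(M_0-E)$ brackets appearing in \eqref{eq.Q2exp}. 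This amounts to undoing and then redoing the simplification the paper used to pass from the Moyal-bracket form of $\mathsf Q_2^\pm$ to \eqref{eq.Q2exp}; both computations reduce to the same handful of identities and end up with the same result.

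Two small remarks. First, your stated symmetry $M_j(-X_2) = (-1)^j M_j(X_2)^*$ gives $M_2(-X_2) = +M_2(X_2)^*$; the paper writes $M_2(-X_2) = -M_2^*(X_2)$, but this is a sign typo in the paper: $m_2$ is degree $4$ in $X_1$ (apart from the $X_1$-independent $T$-contribution, which also has even parity), so the correct sign is $+$, as your formula gives, and indeed the bracket identity $\langle M_2\mathsf F,\mathsf G\rangle_{-X_2} = \overline{\langle M_2\mathsf F,\mathsf G\rangle_{X_2}}$, which the paper uses immediately afterwards, is only consistent with the $+$ sign. Second, the phrase that the ``cross-derivative brackets cancel pairwise'' is imprecise: working out what the reflection actually does, one finds $\langle\partial_{x_2}M_0\,\partial_{\xi_2}\mathsf F,\mathsf G\rangle_{-X_2} = \overline{\big(\langle\partial_{\xi_2}M_0\,\partial_{x_2}\mathsf F,\mathsf G\rangle - \beta_1\langle\partial_{x_2}\mathsf F,\mathsf G\rangle\big)}_{X_2}$ and a similar mixed formula for the other cross term, so the two cross brackets swap with each other and also pick up a contribution from the scalar term $C=\beta_1\langle\partial_{x_2}\mathsf F,\mathsf G\rangle$; it is only the antisymmetric combination $A-B-C$ (with the $1/(2i)$ prefactor that flips sign under conjugation) whose symmetric part cancels, not each bracket separately. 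The end result is nevertheless what you claim, and the computation closes exactly as in the paper.
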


\begin{proof}
We note the following symmetry properties, for $X_2 =(x_2,\xi_2)$,
\[ M_0(-X_2) = M_0^*(X_2), \quad M_1(-X_2) = -M_1^*(X_2), \quad M_2(-X_2)=-M_2^*(X_2), \]
which implies $R_0(-X_2) = R_0(-X_2)^*$, and also
\[ \mathsf F_{-X_2} = \langle \mathsf F_{-X_2}, \mathsf F_{X_2} \rangle \mathsf{G}_{X_2}, \qquad \mathsf G_{-X_2} = \langle \mathsf F_{-X_2}, \mathsf F_{X_2} \rangle^{-1} \mathsf{F}_{X_2},\]
which give
\begin{align*}
\partial_{x_2} \mathsf F (-X_2) &= - \partial_{x_2} (\langle \mathsf F_{-X_2}, \mathsf F_{X_2} \rangle ) \mathsf G_{X_2} - \langle \mathsf F_{-X_2}, \mathsf F_{X_2} \rangle \partial_{x_2} \mathsf G(X_2) \\
\partial_{x_2} \mathsf G (-X_2) &= - \partial_{x_2} (\langle \mathsf F_{-X_2}, \mathsf F_{X_2} \rangle^{-1} ) \mathsf F_{X_2} - \langle \mathsf F_{-X_2}, \mathsf F_{X_2} \rangle^{-1} \partial_{x_2} \mathsf F(X_2),
\end{align*}
and similar formulas for the $\xi_2$-derivatives. Using these symmetries we find, when $\xi_2=0$,
\begin{equation*}
\begin{cases}
\langle M_1(-X_2) R_0(-X_2) M_1(-X_2) \mathsf{F}_{-X_2}, \mathsf{G}_{-X_2} \rangle = \overline{\langle M_1(X_2) R_0(X_2) M_1(X_2) \mathsf F_{X_2}, \mathsf{G}_{X_2} \rangle}\\
\langle M_2(-X_2) \mathsf F_{-X_2}, \mathsf{G}_{-X_2} \rangle = \overline{\langle M_2(X_2) \mathsf{F}_{X_2}, \mathsf{G}_{X_2} \rangle}\\
\langle (M_0(-X_2) - E(-X_2)) \partial_{\xi_2} \mathsf{F}_{-X_2}, \partial_{x_2} \mathsf{G}_{-X_2} \rangle = \overline{ \langle (M_0 - E) \partial_{x_2} \mathsf F_{X_2}, \partial_{\xi_2} \mathsf{G}_{X_2} \rangle}\\
\langle (M_0(-X_2) - E(-X_2)) \partial_{x_2} \mathsf{F}_{-X_2}, \partial_{\xi_2} \mathsf{G}_{-X_2} \rangle = \overline{ \langle (M_0 - E) \partial_{\xi_2} \mathsf F_{X_2}, \partial_{x_2} \mathsf{G}_{X_2} \rangle}
\end{cases}
\end{equation*}
and, most importantly,
\begin{align*}
 \partial_1 \mathcal B(i\varphi_u'(-x_2),x_2) \langle \partial_{x_2} \mathsf F_{-X_2}, \mathsf G_{-X_2} \rangle =&- \partial_{1} \mathcal B(i\varphi_u'(x_2),x_2) \partial_{x_2} \big( \ln \langle \mathsf F_{(-x_2,0)}, \mathsf{F}_{(x_2,0)} \rangle \big) \\ & - \overline{ \partial_1 \mathcal{B}(i\varphi_u'(x_2),x_2) \langle \partial_{x_2} \mathsf F_{X_2}, \mathsf G_{X_2} \rangle} .
 \end{align*}
 The Lemma is then proven using formula \eqref{eq.Q2exp}.
\end{proof}

\begin{lemma}\label{lem.constant}
Let $a_0$ be the solution to the transport equation \eqref{def.a0}. Then, for $x_2 \in (c_d,c_u)$ such that $x_2 \notin {\rm{supp}} (\Sigma)$ and $-x_2 \notin{\rm{supp}} (\Sigma)$, we have 
\[a_0(x_2) \overline{a_0}(-x_2) \mathcal J(x_2) = \exp \Big( \int_{c_d}^{c_u} \frac{\overline{D(-s)}}{i \partial_1 \mathcal B(i \varphi_u'(s),s)} \dd s \Big),\]
where $D$ is defined in \eqref{eq.D}. In particular, the constant $c_0$ in \eqref{eq.c0a} is equal to
\begin{equation}\label{eq.c0b}
 c_0 = 2 \sqrt{\varphi_u''(c_u)} \exp \Big( \int_{c_d}^{c_u} \frac{\overline{D(-s)}}{i \partial_1 \mathcal B(i \varphi_u'(s),s)} \dd s \Big).
 \end{equation}
\end{lemma}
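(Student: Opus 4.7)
The plan is to prove that $g(x_2):=a_0(x_2)\overline{a_0}(-x_2)\mathcal J(x_2)$ is constant on the interval $(c_d+\eta_0,c_u-\eta_0)$ on which both Lemma \ref{lem.J} and Lemma \ref{lem.sym.Q2} apply, and then identify this constant with the exponential appearing in the statement. Once this is done, the formula \eqref{eq.c0b} for $c_0$ will follow immediately from \eqref{eq.c0a}.

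Introduce the shorthand $k(x_2):=i\partial_1\mathcal B(i\varphi_u'(x_2),x_2)$ and $J(x_2):=\langle\mathsf F_{(-x_2,0)},\mathsf F_{(x_2,0)}\rangle$, so that Lemma \ref{lem.J} reads $\mathcal J(x_2)=-k(x_2)J(x_2)$. The evenness of $\gamma$ (pointed out after Assumption \ref{hyp.complexconnection}) combined with the explicit formula \eqref{eq.phius} for $\varphi_u$ away from the support of $\Sigma$ yields that $\varphi_u'$ is even on the interval; invoking $\mathcal B(q_1,-q_2)=\mathcal B(q_1,q_2)$ then gives $k(-x_2)=k(x_2)$, and the Schwarz reflection principle together with the $q_1$-evenness of $\mathcal B$ implies that $k$ is real-valued (so that $k'$ is odd). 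I would then compute $(\ln g)'(x_2)$ term by term:
\begin{enumerate}[\rm (i)]
\item \eqref{def.a0} gives $(\ln a_0)'(x_2)=-D(x_2)/k(x_2)$;
\item differentiation of $\overline{a_0}(-x_2)$ together with the evenness of $k$ gives $\partial_{x_2}\ln\overline{a_0}(-x_2)=\overline{D(-x_2)}/k(x_2)$;
\item writing $\mathsf Q_2^{\pm}(x_2,0)=D(x_2)-z_2-\tfrac12 k'(x_2)$ from the definition of $D$ and Lemma \ref{lem.transport}, the oddness of $k'$ and reality of $z_2$ produce $\mathsf Q_2^{\pm}(x_2,0)-\overline{\mathsf Q_2^{\pm}(-x_2,0)}=D(x_2)-\overline{D(-x_2)}-k'(x_2)$, which inserted into Lemma \ref{lem.sym.Q2} yields $(\ln J)'(x_2)=(D-\overline{D(-\cdot)}-k')/k$.
\end{enumerate}
Since $(\ln\mathcal J)'=k'/k+(\ln J)'$, summing these three contributions makes every term cancel, proving $(\ln g)'\equiv 0$.

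For the identification step, I would apply \eqref{def.a0} to both $a_0(x_0)$ and $\overline{a_0}(-x_0)$, performing the change of variable $t=-s$ in the second integral (legitimate since $k$ is even), to get, for any interior point $x_0$,
\[ \ln g(x_0) = \ln(-k(x_0)J(x_0)) - \int_{c_u}^{x_0}\frac{D(s)}{k(s)}\,\mathrm{d}s + \int_{c_d}^{x_0}\frac{\overline{D(-t)}}{k(t)}\,\mathrm{d}t. \]
The identity $(\ln(kJ))'(s)=(D(s)-\overline{D(-s)})/k(s)$ that came out of step (iii) allows to re-express $\ln(k(x_0)J(x_0))$; after cancellation of the $D/k$ contributions, one is left with $\ln g(x_0)=\int_{c_d}^{c_u}\overline{D(-s)}/k(s)\,\mathrm{d}s$, which is exactly the claimed exponent. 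The main technical difficulty will be the behaviour at the endpoint $c_u$ where $k$ vanishes to first order: by the very choice of $z_2$ in Lemma \ref{lem.transport}, $D$ also vanishes at $c_u$ to first order, so that the ratio $D/k$ stays bounded near $c_u$, and after the change of variable $t=-s$ (which relates $\overline{D(-s)}$ for $s\to c_u$ to $\overline{D(t)}$ for $t\to c_d$) the integrability of the quantities at stake is preserved. Finally, substituting this constant value of $g$ into \eqref{eq.c0a} yields \eqref{eq.c0b}.
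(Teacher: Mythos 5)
Your proof is correct and follows essentially the same approach as the paper: it rests on the same two ingredients (Lemma \ref{lem.J} for $\mathcal J$ and the symmetry identity of Lemma \ref{lem.sym.Q2} for $\mathsf Q_2^\pm$), the same symmetry facts (evenness and reality of $i\partial_1\mathcal B(i\varphi_u',\cdot)$, reality of $z_2$, oddness of $\tfrac12 k'$), and leads to the same cancellation; the paper simply organizes the computation by first writing $a_0(x_2)\overline{a_0}(-x_2)$ explicitly via \eqref{def.a0} and then multiplying by $\mathcal J$, whereas you differentiate $\ln\big(a_0\overline{a_0}(-\cdot)\mathcal J\big)$ directly and then identify the constant, which is an equivalent rearrangement.
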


\begin{proof}
We use the formula \eqref{def.a0} defining $a_0$. In particular,
\begin{align*}
a_0(-x_2) &= \exp \Big( -\int_{c_u}^{-x_2} \frac{D(s)}{i \partial_1 \mathcal B(\varphi_u'(s),s)} \dd s  \Big) \\
&= \exp \Big( \int_{c_d}^{x_2} \frac{D(-s)}{i \partial_1 \mathcal B(\varphi_u'(s),s)} \dd s  \Big) \\
&= \exp \Big( \int_{c_d}^{c_u} \frac{D(-s)}{i \partial_1 \mathcal B(\varphi_u'(s),s)} + \int_{c_u}^{x_2} \frac{D(-s)}{i \partial_1 \mathcal B(\varphi_u'(s),s)} \dd s \Big).
\end{align*}
Since $i \partial_1 \mathcal B(i \varphi_u'(s),s)$ is real we deduce that
\begin{equation}\label{eq.a0a01}
a_0(x_2) \overline{a_0}(-x_2) = K \exp \Big( - \int_{c_u}^{x_2} \frac{D(s)-\overline{D(-s)}}{i \partial_1 \mathcal B(\varphi_u'(s),s)} \dd s \Big),
\end{equation}
with 
\begin{equation*}
K = \exp \Big( \int_{c_d}^{c_u} \frac{\overline{D(-s)}}{i \partial_1 \mathcal B(\varphi_u'(s),s)} \dd s \Big).
\end{equation*}
We recall that
\[D(x) =  g(x) + \mathsf{Q}_2^\pm(x,0) + z_2,\]
where $z_2 \in \R$ and
\[g(x) = \frac 12 \frac{\dd}{\dd x} \Big( \partial_1 \mathcal B(i \varphi_u'(x),x) \Big).\]
Since $\overline{g(-x)} = -g(x)$ and using Lemma \ref{lem.sym.Q2} we find
\[\frac{D(x) - \overline{D(-x)}}{i \partial_{1} \mathcal B(i \varphi_u'(x),x)} = \frac{\dd }{\dd x} \Big( \ln \big( i \partial_1 \mathcal B(i \varphi_u'(x),x) \langle \mathsf F_{(-x_2,0)}, \mathsf F_{(x_2,0)} \rangle \Big). \]
Equation \ref{eq.a0a01} then becomes
\[ a_0(x_2) \overline{a_0}(-x_2) = K \mathcal J(x_2)^{-1}, \]
where we used Lemma \ref{lem.J}. 
\end{proof}

\section*{Acknowledgments}
This work was partially conducted within the France 2030 framework programme, the Centre Henri Lebesgue ANR-11-LABX-0020-01. N.R. is deeply grateful to the University of Copenhagen (and its QMATH group) where this work was started. This stay was partially funded by the IRN MaDeF (CNRS). N.R.  also thanks Antide Duraffour and Frédéric Hérau for enlightening discussions. S.F. and L.M. were partially supported by the grant 0135-00166B from the Independent Research Fund Denmark, by the VILLUM Foundation grant no. 10059, and by the ERC Advanced Grant MathBEC - 101095820.

\bibliographystyle{abbrv}
\bibliography{biblioGBFMR25}

\end{document}